\patchcmd{\epigraph}{\@epitext{#1}}{\itshape\@epitext{#1}}{}{}
\pgfplotsset{width=7cm,compat=newest}
\newcommand{\suchthat}{\;\ifnum\currentgrouptype=16 \middle\fi|\;}
\newtheorem{assm}{Assumption}
\newtheorem{prop}{Proposition}
\newtheorem{fact}{Fact}
\newtheorem{lemma}{Lemma}
\newtheorem{cor}{Corollary}
\newtheorem{defn}{Definition}
\newenvironment{aprime}[1]
  {%
   \addtocounter{assm}{-1}%
   \begin{assm}}
  {\end{assm}}
\theoremstyle{remark}
\newtheorem{remark}{Remark}
\DeclareMathOperator*{\argmax}{\arg\!\max}
\DeclareMathOperator*{\interior}{int}
\DeclareMathOperator*{\ri}{ri}
\newcommand{\E}{\mathbb{E}}
\DeclareMathOperator{\Co}{Co}
\DeclareMathOperator{\CCo}{CCo}
\newcommand{\eps}{\varepsilon}
\newcommand{\pa}{\tilde{p}^1}
\newcommand{\pb}{\tilde{p}^0}
\newcommand{\xa}{\tilde{x}^1}
\newcommand{\xb}{\tilde{x}^0}
\begin{document}

\title{\large{Counterfactual and Welfare Analysis with an Approximate Model}\thanks{We thank Victor Aguiar, Victor Aguirregabiria, Lars Hansen, Nail Kashaev, Lance Lochner, Nirav Mehta, Magne Mogstad, Ismael Mourifi\'{e},  Salvador Navarro, Andres Santos, and participants at the Banff Empirical Microeconomics Workshop, the Western Conference on Counterfactuals with Economic Restrictions, the University of Chicago, and the University of Toronto for helpful comments.} }

\author{ \small{Roy Allen}  \\
    \small{Department of Economics} \\
    \small{University of Western Ontario} \\
    \small{rallen46@uwo.ca}
    \and 
    \small{John Rehbeck} \\
    \small{Department of Economics} \\
    \small{The Ohio State University} \\
    \small{rehbeck.7@osu.edu}
}
\date{\small{ \today }} 

\maketitle

\begin{abstract}
We propose a conceptual framework for counterfactual and welfare analysis for approximate models. Our key assumption is that model approximation error is the same magnitude at new choices as the observed data. Applying the framework to quasilinear utility, we obtain bounds on quantities at new prices using an approximate law of demand. We then bound utility differences between bundles and welfare differences between prices. All bounds are computable as linear programs. We provide detailed analytical results describing how the data map to the bounds including shape restrictions that provide a foundation for plug-in estimation. An application to gasoline demand illustrates the methodology.
\end{abstract}

\newpage

\section{Introduction}
\iffalse
NOTE: Tentative thought is start with something like this: "Models are often viewed as approximations. How do we use them when they are approximations?" Then formalize our assumption 1. Then state this generalizes standard framework. Ats ome point, discuss issues with alternative approaches (e.g. empty sets)... At some point emphasize that we use quasilinear utility as an approximation...Thus, the perspective that ``all models are wrong, but some are useful'' \citep{box1987empirical} is not helpful in this context because using existing tools, \textit{all} models that are not perfectly consistent with data are unable to provide quantitative guidance.
\fi

Models are generally viewed as approximations.  A common intuition in empirical work is that conclusions of a model are robust to ``small'' amounts of approximation error. Unfortunately, this intuition does not apply to many standard frameworks. For example, when performing a revealed preference analysis \citep{varian1982nonparametric}, if a dataset is inconsistent with a model, then counterfactual predictions described by certain inequalities cross. Thus, if there is \emph{any} violation of the model (no matter how ``small''), then the model fails to generate coherent counterfactual or welfare statements.\footnote{Related concerns have been raised in the econometric literature on partial identification \citep{ponomareva2011misspecification,muller2016credibility}.}

Alternatively, one can formally acknowledge approximation error throughout the analysis. Rather than treat approximation error as nonexistent, one can place restrictions on the magnitude of the approximation error. This paper does so for counterfactual and welfare analysis, with the following assumption on this magnitude.

\begin{assm}\label{assm:maint}
When making counterfactual predictions or measuring welfare changes, we assume the approximation error of the model on the counterfactual predictions is the same as the approximation error of the model on the observed dataset.
\end{assm}

This assumption is a natural extension of the standard approach to generate counterfactual predictions that assumes that both the observed data and counterfactual predictions are consistent with a model. We present a framework in which a model can be used even though it is not exactly consistent with observed data. In particular, this paper assumes that the approximation error of the model on  observed and unobserved situations has the same magnitude. We call the counterfactuals that are consistent with Assumption~\ref{assm:maint} \emph{adaptive counterfactuals} because they adapt to approximation error present in the observed dataset. This assumption can be questioned, especially when the counterfactual setting is significantly different than observed data, yet provides a way to conduct counterfactual analysis taking approximation error seriously.

The conceptual framework of this paper is general and can be applied to different settings. In this paper, we formalize how to generate counterfactual predictions and measure welfare changes for the quasilinear utility model using the notion of approximation error from \cite{allen2020satisficing}. In particular, we present bounds on counterfactual quantities at new prices, differences of utility over consumption bundles, and welfare differences involving a price change.  The computation of all bounds is facilitated by linear programming and we demonstrate the methods in an  illustrative empirical example using gasoline demand data from \cite{blundell2012measuring}. 

The quasilinear model is a suitable setting in which to study approximation error because it is implicitly viewed as an approximation.\footnote{A notable exception that studies the approximation error explicitly is \cite{willig1976consumer}.} The most common criticism is that the model does not allow income effects. In addition, the model neglects dynamics, limited consideration, and peer effects, among many other omitted features. We present a framework in which one does not need to pick a single story why the baseline model is imperfect when generating counterfactual predictions on quantities. However, to make welfare comparisons we take a stand on the interpretation of the approximation error: the individual ranks bundles according to a quasilinear utility function but for reasons we do not model explicitly, the choices do not exactly maximize the function. Overall, the framework we propose permits many reasons why the baseline model is wrong, provided the approximation error is the same magnitude in the counterfactual setting.

Despite being viewed as an approximation, the quasilinear model is widely used. Examples include work in insurance choice \citep{einav2010estimating,bundorf2012pricing,tebaldi2018nonparametric} and public health \citep{cohen2010free}. In addition, the quasilinear structure is closely related to a large class of latent utility models (e.g. \cite{mcfadden1981econometric}, \cite{ARidentification}), and so the insights of this paper are directly relevant beyond a setting with just prices and quantities. In particular, many latent utility models used in applied work involve characteristics other than prices that shift the desirability of goods but not the budget constraint.\footnote{Our analysis is also relevant for specifications in which latent utilities depend on a nonlinear function of prices. For example, \cite{berry1995automobile} specifies that the utility of alternative $j$ depends on several observables including a term $\ln(m - p_j)$ where $p_j$ is the price of good $j$ and $m$ is income. This is a quasilinear model in the variable $\tilde{p}_j = \ln(m - p_j)$.}

We now describe the framework in more detail. We begin by studing counterfactual bounds for the demand of goods at new prices. We construct the counterfactual bounds by looking for the maximal and minimal demand for each good in the presence of approximation error. We assume approximation error does not increase for the counterfactual predictions relative to the dataset we have seen. This leads to nontrivial restrictions on demand at new prices as long as the prices are not too low. For each counterfactual price, this procedure gives an interval for lower and upper demands for a given good.

For welfare analysis we view approximation error with a specific interpretation. In particular, we assume an individual cannot perfectly maximize their utility function because they are satisficers in the spirit of \cite{simon1947administrative}. This means an individual chooses quantities that are ``close'' to the maximum utility possible, but not necessarily optimal. The interpretation of satisficing allows us to assign special welfare significance to the latent utility over bundles. This utility over consumption bundles is key for policy decisions involving the allocation of goods. In addition, we study welfare over price changes, which is key for tax policy and other policy thought to affect prices.

Two features limit the ability to measure welfare changes using data. First, the choices we see do not exactly maximize utility, which leads to a ``measurement wedge'' on the underlying utility function over bundles. Second, even if we knew the utility function over bundles exactly, we do not know which approximately-optimizing choices will be made at new prices. This leads to an additional ``prediction wedge'' when bounding utility differences obtained at different \textit{prices}.\footnote{We thus complement the core analysis of \cite{bernheim2009beyond}, which focuses on recovering ordinal information on preferences over \textit{consumption bundles} and does not distinguish between these wedges.} 

Taking into account these wedges, we present bounds on differences in utility over consumption bundles and robust consumer surplus bounds involving price changes. We present computational results for both bounds, as well as analytical results designed to interpret specifically how the data are used to measure welfare changes. These bounds generalize existing work in several directions: first, and most importantly, they are valid with approximation error; second, they apply to finite datasets rather than requiring demand functions; third, the bounds apply to the (approximate) indirect utility at a new price without needing to first bound the quantity at that price. In particular, we show that the bounds on (approximate) indirect utility is a generalization of the standard integral definition of consumer surplus and we establish a close connection between counterfactual quantities and welfare bounds.

In order to understand how the counterfactual/welfare bounds depend on data and prices, we present several shape restrictions. We are not aware of any work in the tradition of \cite{varian1982nonparametric} that discusses shape restrictions of the counterfactual/welfare bounds (viewed as functions of data or counterfactual prices). Studying these shape restrictions is important to understand how data are used in an empirical analysis.

For counterfactual quantities, we find that when there is a single good and the counterfactual price changes, the upper and lower bounds on counterfactual demands are each weakly decreasing in price. Thus, the upper and lower bounds on demand are functions that satisfy the law of demand. For utility differences, we establish monotonicity and continuity properties of the bounds as the quantities being compared change. For a price change, the approximate indirect utility bounds satisfy convexity and monotonicity conditions as prices vary; these are also key shape restrictions for the indirect utility function for quasilinear utility.\footnote{More specifically, convexity holds for the money metric version of the utility function. In general one can only obtain quasiconvexity.} Finally, we establish several convexity properties describing how quantities data map to the bounds. To our knowledge these results are all new even under correct specification for quasilinear utility.\footnote{The closest work appears to be a computational approach to describing bounds for models related to quasilinear utility, without describing detailed shape restrictions \citep{chiong2017counterfactual,tebaldi2018nonparametric,ARidentification}.}

Our analysis also establishes that the bounds satisfy a key continuity property: as the degree of approximation error limits to $0$, our analysis limits to the analysis under correct specification. In fact, we show a stronger property that the counterfactual and approximate indirect utility bounds are \textit{jointly} continuous when viewed as a function of the quantities in the data and the degree of approximation error. This facilities plug-in estimation of the bounds in which we replace true quantities with estimated quantities. This is needed to cover our empirical application in which we apply the framework with data on gasoline purchases used previously in \cite{blundell2012measuring}. The data is a single cross section, and we pre-process the data as in \cite{blundell2012measuring} by kernel smoothing. We conduct a representative agent analysis with quantities (conditional means) estimated from the kernel smoothed data. Like \cite{blundell2012measuring}, we find that for several natural choices of the bandwidth, demand is not downward sloping. Thus, it is inconsistent with the exact quasilinear model. Nonetheless, the minimal degree of approximation error need to describe data is small and welfare bounds are surprisingly narrow for all bandwidths we consider. In contrast, the informativeness of the counterfactual bounds depends on the bandwidth.

The rest of this paper is organized as follows. After reviewing the  related literature, Section~\ref{sec:setting} describes the setup and conceptual framework of approximate counterfactuals for the quasilinear framework. Section~\ref{sec:counterfactuals} studies counterfactuals. Section~\ref{sec:welfare} studies welfare. Section~\ref{sec:shape} presents additional shape restrictions and discusses plug-in estimation. Section~\ref{sec:application} contains the application to gasoline demand. Section~\ref{sec:conclusion} concludes.

\subsection{Literature Review}

This paper is part of a long literature that uses the revealed preference approach to do counterfactual and welfare analysis. The primary model used is the general model of utility maximization subject to a budget constraint, whose empirical content has been characterized in \cite{afriat1967construction}, \cite{diewert1973afriat}, and \cite{varian1982nonparametric}. Recent econometric work considering counterfactual or welfare bounds includes \cite{blundell2003nonparametric}, \cite{blundell2008best}, \cite{blundell2012measuring}, \cite{blundell2014bounding}, \cite{hoderlein2015testing}, \cite{kline2016bounding}, \cite{blundell2017individual}, \cite{cosaert2018nonparametric}, \cite{aguiar2018meas}, \cite{adams2019mutually}, \cite{cherchye2019bounding}, and \cite{kitamura2019nonparametric}. Several proposals have been made to assess the fit of a model using revealed preference tools, including \cite*{afriat1973system}, \cite*{houtman1985determining}, \cite*{varian1990goodness}, and \cite*{echenique2011money}.\footnote{See \cite{ARmeasuring} for additional references and discussion of units.} Other papers outside of the revealed preference literature that discuss fit of an approximate model include \cite{kydland1982time}, \cite{vuong1989likelihood}, and \cite{hansen1997assessing}. The primary way in which we differ from existing work is that we \textit{use} a measure of fit to adjust bounds on counterfactuals and welfare. In addition, relative to the general model with income effects, which has been the focus of the revealed preference literature, we conduct counterfactual analysis fixing prices at a new value without also fixing expenditure.\footnote{Work that uses revealed preference techniques to go beyond measuring the fit of the model includes \cite{varian1990goodness}, \cite{halevy2018parametric}, and \cite{gauthier2019}, which study parameter recoverability. See also \cite{chetty2012bounds} for a related approach.}
 
A growing econometric literature has studied sensitivity analysis and other ways in which a model can be used formally viewing it as an approximation. Examples include \cite{imbens2003sensitivity},
\cite{conley2012plausibly}, \cite{kline2013sensitivity},
\cite{andrews2017measuring}, \cite{manski2018right}, \cite{masten2018identification,masten2019inference}, \cite{armstrong2018sensitivity}, \cite{bonhomme2018minimizing},
\cite{christen2019}, \cite{d2018rationalizing}, \cite{fessler2019use}, \cite{salanie2019fast}, and \cite{andrews2019inference}. See \cite{masten2018salvaging} for additional references and discussion. We differ from this work by focusing on a notion of approximation derived from approximate optimization. We complement the robustness approach \citep{hansen2008robustness} by focusing on the theoretical bounds for welfare/counterfactuals rather than focusing on an optimal decision.

This paper is naturally related to other frameworks for welfare analysis that go beyond the classic revealed preference tradition. Our approach differs substantially from an approach that posits individuals have consistent choices (that can be modeled as solutions to a ``decision utility''), but whose consistent choices do not reveal the ``true utility.'' Subtleties with this approach have been discussed by \cite{bernheim2009beyond} among others. A framework of ``behavioral welfare analysis'' is presented in \cite{bernheim2016good} and \cite{bernheim2018behavioral}, which provide further summaries of what has become a large literature. Broadly, we differ from welfare proposals in behavioral economics by addressing welfare questions given data (only) on prices and quantities, without observing decision frames or other variables thought to alter the choice process. In addition, when conducting welfare analysis concerning price changes, we introduce the ``prediction wedge'' because given prices, we do not know precisely what an individual would choose, even if utility were known \textit{a priori}.

\iffalse
...
Alternatively, individuals may not be rational in the sense that they do not maximize a standard utility function (quasilinear or not). These concerns and many others are all relevant.  In this paper we consider \textit{general} (or ``unstructured'') deviations from the model as in  \cite{hansen2008robustness,hansen2018structured} and \cite*{chetty2012bounds}, as well as \textit{specific} deviations from the model.
\fi

\section{Framework and Setting}\label{sec:setting}
The goal of this paper is to provide a framework where a researcher begins with a baseline model that is taken seriously as an approximation. Since the model is an approximation, the researcher does not expect \emph{all} data to be consistent with the model. Nonetheless, the researcher may want to use the model for counterfactual and welfare analysis. We present an adaptive framework framework for this by operationalizing Assumption~\ref{assm:maint} for the quasilinear utility model. Recall that Assumption~\ref{assm:maint} maintains that a researcher considers  counterfactuals that are ``no worse'' than the observed data. To do this, we enlarge the baseline model to fit the data, which introduces a measurement wedge and prediction wedge. The \emph{measurement wedge} concerns limits on what an analyst can learn due to approximation error for objects defined in the existing dataset (e.g. utility functions). The \emph{prediction wedge} describes limits on what can be said in new settings where the model may not be perfect (e.g. counterfactual quantities). We formalize Assumption~\ref{assm:maint} by making these wedges as small as possible while still fitting the observed data, using the notion of approximation error from \cite{allen2020satisficing}. We describe the framework more below.

We formalize the \textit{baseline model} of quasilinear utility. A consumption bundle $(x,y) \in \mathbb{R}_+^K \times \mathbb{R}$ is evaluated according to $u(x) + y$, where $u:\mathbb{R}_+^K\rightarrow\mathbb{R}$ is a utility function over bundles $x$. The numeraire good is given by $y$ and has a price of one. Given prices $p\in \mathbb{R}_{++}^K$ and income $I \in \mathbb{R}$ decisions in a quasilinear utility model follow  
\begin{align*}
    \max_{x \in \mathbb{R}_{+}^K, y \in \mathbb{R}} &\quad  u(x) + y \quad \iff \quad \max_{x \in \mathbb{R}_{+}^K} \quad  u(x)+I-p\cdot x \\
    \text{s.t.}&\; p\cdot x + y \le I
\end{align*}
where consumption of the numeraire good is allowed to be negative for unobserved borrowing.\footnote{Allowing negative expenditure also avoids boundary issues for chosen consumption bundles.} We study the quasilinear utility model since it is regularly used in applications described in the Introduction, including adaptations to handle non-price characteristics .\footnote{\cite{ARidentification} show many applications including the additive random utility model \citep{mcfadden1981econometric} are quasilinear models with utility indices playing the role of prices.} In addition, it has a tractable notion of welfare in terms of units of the numeraire.
 
Now we present an \textit{enlargement} of the baseline model that relaxes the assumption of exact maximization to a notion of approximate optimization.  Because we focus on the empirical analysis, we define the enlargement in terms of a finite datasets of the form $\{ (x^t,p^t) \}_{t=1}^T$. There are $T$ observations, quantities are weakly positive $x^t \in \mathbb{R}_{+}^K$, and prices are strictly positive $p^t \in \mathbb{R}_{++}^K$. Importantly, quantities can be discrete or continuous, and $0$ quantities are permitted in this framework.

\begin{defn} A dataset $\{ (x^t,p^t) \}_{t=1}^T$ is $\varepsilon$-rationalized by quasilinear utility for $\varepsilon \ge 0$ if there exists a utility function $u: \mathbb{R}_+^K \rightarrow \mathbb{R}$ such that for all $t \in \{1,\ldots,T\}$ and for all $x \in \mathbb{R}_+^K$, the following inequality holds:
\[ u(x^t)-p^t\cdot x^t  \ge  u(x) - p^t \cdot x - \eps .\]
When $\varepsilon$ equals zero, we say the dataset is quasilinear rationalized.
\end{defn}
The value $\eps$ is in the same units as the price of the numeraire good, e.g. dollars per time period. When $\eps > 0$, the observed bundles are within $\eps$ dollars of the maximum utility possible at a given price. One interpretation is that $\eps$ captures ``unstructured'' deviations from the quasilinear utility  model (cf.  \cite{chetty2012bounds}, \cite{hansen2018structured}), without a single interpretation of the nature of the deviations. Instead, the \textit{magnitude} of the deviations is controlled. This interpretation can be pursued when making counterfactual predictions. In contrast, to make welfare predictions an interpretation of the model is crucial. When discussing welfare, we follow \cite{allen2020satisficing} and interpret the value $\eps$ as a level of satisficing in the spirit of \cite{simon1947administrative}. In this case, a higher value of $\eps$ means there is a larger set of consumption bundles that are ``good enough" to be chosen. 

When using the model for counterfactual or welfare analysis, a \emph{measurement wedge} and a \emph{prediction wedge} arise. These concepts will become more clear when we turn to specific analysis below, but we first provide an overview. When observed data is not exactly consistent with a quasilinear utility model, the econometrician knows there is no utility function that rationalizes the entire dataset, so at some observation the quantity is not optimal. Thus, if a researcher still wants to use the quasilinear model even when data is inconsistent with the baseline quasilinear model, then there is a  \emph{measurement wedge} when trying to recover information about candidate utility functions and indirect utility. Second, even after the econometrician has a set of candidate utility functions that match the original dataset, the econometrician cannot know that counterfactual choices will exactly maximize a candidate utility function. Thus, there is a \emph{prediction wedge} when forecasting even after recovering information on the utility function. 

We now discuss how to formalize Assumption~\ref{assm:maint} for the approximate quasilinear utility model in relation to the measurement wedge and prediction wedge. Let $\eps_M$ denote the measurement wedge and $\eps_P$ denote the prediction wedge. In principle, the two wedges may not be the same, but Assumption~\ref{assm:maint} allows us to treat these as equal.\footnote{See Appendix~\ref{sec:alternative} for additional discussion on this case and a more formal treatment of the measurement and prediction wedges.} To further formalize Assumption~\ref{assm:maint}, we introduce  $\eps^*$ as the smallest value of $\eps$ such that the dataset is $\eps^*$-rationalized by quasilinear utility. We also refer to $\eps^*$ as the level of approximation error or approximation error of the quasilinear utility model for the observed dataset.
\begin{prop}[\cite{allen2020satisficing}]\label{prop:minqeps}
Let $\varepsilon^* \ge 0$ be the smallest value such that for all $\varepsilon \ge \varepsilon^*$ the dataset $\{(x^t,p^t)\}_{t=1}^T$ is $\varepsilon$-rationalized by quasilinear utility. The value $\eps^*$ exists and is obtained by a linear program. 
\end{prop}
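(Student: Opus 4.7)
The plan is to reduce the infinite-dimensional problem of finding a utility function $u:\mathbb{R}_+^K \to \mathbb{R}$ that $\eps$-rationalizes the data to a finite-dimensional linear program in scalar variables $(\eps,\phi_1,\ldots,\phi_T)$. The first step is to establish an Afriat-style equivalence: the dataset $\{(x^t,p^t)\}_{t=1}^T$ is $\eps$-rationalized by quasilinear utility if and only if there exist scalars $\phi_1,\ldots,\phi_T \in \mathbb{R}$ satisfying
\[
\phi_s - \phi_t \;\le\; p^t \cdot (x^s - x^t) + \eps \qquad \text{for all } s,t \in \{1,\ldots,T\}.
\]
Necessity is immediate: given a rationalizing $u$, set $\phi_t = u(x^t)$ and evaluate the definition of $\eps$-rationalization at $x = x^s$, then rearrange.

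For sufficiency --- the substantive direction --- I would use the lower envelope construction
\[
u(x) \;=\; \min_{s \in \{1,\ldots,T\}} \bigl\{\phi_s + p^s \cdot (x - x^s)\bigr\},
\]
which is concave as a minimum of affine functions and well-defined on all of $\mathbb{R}_+^K$. Choosing $s = t$ inside the minimum gives $u(x^t) \le \phi_t$, and applying the constraint above with the roles of $s$ and $t$ swapped yields $u(x^t) \ge \phi_t - \eps$. Combining the upper bound $u(x) \le \phi_t + p^t\cdot(x-x^t)$ (again from $s=t$) with the lower bound on $u(x^t)$ produces
\[
u(x^t) - p^t \cdot x^t \;\ge\; \phi_t - \eps - p^t \cdot x^t \;\ge\; u(x) - p^t \cdot x - \eps
\]
for every $x \in \mathbb{R}_+^K$, which is exactly $\eps$-rationalization. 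Note that the construction does not force $u(x^t) = \phi_t$; the resulting slack is absorbed by the tolerance $\eps$, and verifying that this works is the main subtlety of the argument.

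Given the equivalence, $\eps^*$ coincides with the optimal value of the linear program
\[
\min_{\eps,\phi_1,\ldots,\phi_T} \eps \qquad \text{subject to} \qquad \phi_s - \phi_t - p^t \cdot (x^s - x^t) \;\le\; \eps \quad \text{for all } s,t.
\]
Feasibility is trivial (fix any $\phi$ values and take $\eps$ large), and specializing the constraint to $s = t$ forces $\eps \ge 0$, so the objective is bounded below. A feasible linear program with a bounded objective attains its optimum, so $\eps^*$ exists and is obtained by this LP. The main obstacle is the sufficiency direction of the Afriat-style step, since one must exhibit a utility function on \emph{all} of $\mathbb{R}_+^K$ out of finitely many number inequalities; once that reduction is in hand, existence, boundedness, and attainment for the LP are routine.
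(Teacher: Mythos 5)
Your proposal is correct and follows essentially the route the paper relies on: the paper states this result by citation to \cite{allen2020satisficing} and rests it on the Afriat-style finite characterization recorded as Lemma~\ref{lem:epsqrat}(ii), which is exactly your system $\phi_s - \phi_t \le p^t\cdot(x^s-x^t)+\eps$, after which minimizing $\eps$ is the obvious linear program. Your lower-envelope construction $u(x)=\min_s\{\phi_s+p^s\cdot(x-x^s)\}$ and the observation that the slack $u(x^t)\ge\phi_t-\eps$ costs only a single $\eps$ in the rationalization inequality are both handled correctly.
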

We note that $\eps^*$ is a function of the dataset to a number, so for a dataset $D=\{(x^t,p^t)\}_{t=1}^T$ we can write $\eps^*(D)$. When we discuss only a single dataset, we typically drop dependence on $D$. The value $\eps^*$ will be used in our framework to place restrictions on the magnitude of the measurement and prediction wedges. Setting $\eps_M \geq \eps^*$ formalizes that the measurement wedge is large enough to explain the data we have seen. Similarly, setting $\eps_P \geq \eps^*$ formalizes that the model is no better at predicting in new settings than the data we have seen. We make these bounds as tight as possible, and formalize Assumption~\ref{assm:maint} for this setting as follows.

\begin{aprime}{assm:maint}\label{assm:prime}
When performing counterfactual analysis, the measurement wedge, prediction wedge, and approximation error of the model are equal,
\[
\eps_M = \eps_P = \eps^*.
\]
\end{aprime}

This is a direct generalization of the standard approach to counterfactual and welfare analysis, which sets $\eps_M = \eps_P = 0$. The conceptual framework of the standard approach only applies to models that perfectly fit the data, which translates to $\eps^* = 0$ here. We later develop a framework for counterfactual and welfare analysis when the measurement and prediction wedges are equal, $\eps_M = \eps_P$. For notational convenience, we will let $\eps$ denote the common value. Assumption~\ref{assm:prime} is the special case where $\eps = \eps^*$.

\section{Counterfactuals} \label{sec:counterfactuals}

For the quasilinear utility model with approximation error that does not exceed $\eps$, we consider sharp counterfactual bounds. More formally, we consider when the measurement and prediction wedges are both equal to a single value $\eps$. We impose $\eps = \eps^*$ to implement Assumption~\ref{assm:prime} and describe some properties on the counterfactual bounds. In Section~\ref{sec:graph}, we provide graphical intuition for the bounds. In Section~\ref{sec:boundquant}, we describe how to compute  bounds on quantities fixing a price. In Section~\ref{sec:exp} we describe additional restrictions that can be imposed to tighten the bounds, such as \textit{a priori} bounds on expenditure at a new price.

To that end, for notational convenience, let $D =\{(x^t, p^t ) \}_{t = 1}^T$ denote the observed dataset. Suppose we have a candidate quantity-price tuple $(\tilde{x},\tilde{p})$. We can add this to the original dataset to form an augmented dataset $D \cup  (\tilde{x}, \tilde{p})$. We consider candidates such that the approximation error of the augmented dataset is bounded by $\eps$. In particular, the set of \emph{consistent demands and prices} for the level of approximation error $\eps$ is given by
\[
C(D, \eps ) = \left\{ (\tilde{x},\tilde{p}) \in \mathbb{R}_{+}^K \times \mathbb{R}_{++}^K \mid \eps^* (D \cup  (\tilde{x},\tilde{p}) ) \leq  \eps \right\}.
\]
To check whether a candidate tuple $(\tilde{x},\tilde{p})$ is in $C(D, \eps )$, one can calculate $\eps^*$ for the augmented dataset using Proposition~\ref{prop:minqeps}. If this measure of approximation error for the augmented dataset is below $\eps$, then the candidate tuple is in the set $C(D, \eps )$.

Our framework imposes Assumption~\ref{assm:prime} to generate counterfactual predictions assuming the level of approximation error does not get worse. This amounts to setting $\eps$ equal to the approximation error of the observed dataset. In particular, we focus on the \emph{adaptive counterfactual set}
\begin{align*}
AC(D) & = \left\{ (\tilde{x},\tilde{p}) \in \mathbb{R}_{+}^K \times \mathbb{R}_{++}^K \mid \eps^* (D \cup  (\tilde{x},\tilde{p}) ) \leq \eps^* (D)  \right\} \\
& = C(D, \eps^*( D) ).
\end{align*}

We collect some facts about $AC(\cdot)$ and $C(\cdot)$.
\begin{fact}[Constant Approximation Error] \label{f:const}
For any $(\tilde{x},\tilde{p}) \in AC(D)$, we have
\[
\eps^*(D \cup (\tilde{x},\tilde{p})) = \eps^*(D).
\]
\end{fact}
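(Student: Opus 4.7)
The plan is to prove the equality by establishing two inequalities, both of which follow rather directly from the definitions. One direction is essentially built into the definition of $AC(D)$, and the other follows from a monotonicity property of $\eps^*(\cdot)$ under augmentation of the dataset.

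First, I would observe that by the definition $AC(D) = C(D, \eps^*(D))$, any tuple $(\tilde{x},\tilde{p}) \in AC(D)$ satisfies $\eps^*(D \cup (\tilde{x},\tilde{p})) \le \eps^*(D)$. This yields one direction of the claimed equality immediately.

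For the reverse inequality $\eps^*(D) \le \eps^*(D \cup (\tilde{x},\tilde{p}))$, I would argue as follows. Suppose some utility function $u$ $\eps$-rationalizes the augmented dataset $D \cup (\tilde{x},\tilde{p})$ for some $\eps \ge 0$. By the definition of $\eps$-rationalization, the inequality
\[
u(x^t) - p^t \cdot x^t \ge u(x) - p^t \cdot x - \eps
\]
holds for all $t \in \{1,\dots,T\}$ and all $x \in \mathbb{R}_+^K$ (and also for the extra observation $(\tilde{x},\tilde{p})$). In particular, discarding the inequality corresponding to the extra observation, $u$ still $\eps$-rationalizes $D$. Taking $\eps = \eps^*(D \cup (\tilde{x},\tilde{p}))$ and invoking existence of the minimizer from Proposition~\ref{prop:minqeps}, this shows $D$ itself is $\eps^*(D \cup (\tilde{x},\tilde{p}))$-rationalized, so by minimality of $\eps^*(D)$ we obtain $\eps^*(D) \le \eps^*(D \cup (\tilde{x},\tilde{p}))$.

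Combining the two inequalities yields the desired equality. There is no serious obstacle here: the only subtlety is being explicit that $\eps^*(\cdot)$ is monotone with respect to set inclusion of the dataset, since $\eps$-rationalization is just a conjunction of inequalities indexed by observations, and adding observations adds constraints. This monotonicity, together with the definition of the adaptive counterfactual set as the set of augmentations that do not strictly increase the approximation error, immediately forces the approximation error to remain exactly constant on $AC(D)$.
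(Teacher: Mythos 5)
Your proof is correct and follows essentially the same route as the paper, which simply notes that the fact ``follows from the construction of $AC(D)$'': the $\le$ direction is the definition of $AC(D) = C(D,\eps^*(D))$, and the $\ge$ direction is the monotonicity of $\eps^*$ under adding observations, since any $u$ that $\eps$-rationalizes the augmented dataset in particular $\eps$-rationalizes $D$. You have merely made explicit the monotonicity step the paper leaves implicit.
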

Thus, when a candidate observation in $AC(D)$ is added to $D$, the measure of approximation stays the same. This follows from the construction of $AC(D)$. This equality does not hold for all measures of model approximation error. For example, if we had chosen to take $\eps^*$ divided by the number of observations $T$ as the measure of approximation error, then Fact~\ref{f:const} would not hold in general since the measure of approximation error for the augmented dataset would divide by $T + 1$. 

\iffalse
\begin{fact}
If $\eps > \eps^*$, there is some $(\tilde{x},\tilde{p}) \in C(D,\eps)$ such that
\[
\eps^*(D \cup (\tilde{x}, \tilde{p})) = \eps > \eps^*(D).
\]
\end{fact}

Thus, constant approximation error fails when we conduct counterfactuals with $\eps > \eps^*$.
\fi

\begin{fact}[Monotonicity]
If $\eps < \eps'$, then $C(D,\eps) \subseteq C(D,\eps')$.

\end{fact}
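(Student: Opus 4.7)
The plan is to observe that this monotonicity fact follows immediately from the definition of $C(D,\eps)$ by a transitive chaining of inequalities, so essentially no real work is needed.

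First, I would take an arbitrary $(\tilde{x},\tilde{p}) \in C(D,\eps)$ and unpack the defining condition: by definition of $C(D,\eps)$, we have $\eps^*(D \cup (\tilde{x},\tilde{p})) \leq \eps$. Then I would chain this with the hypothesis $\eps < \eps'$ to conclude $\eps^*(D \cup (\tilde{x},\tilde{p})) \leq \eps < \eps'$, which is exactly the defining condition for membership in $C(D,\eps')$. Since $(\tilde{x},\tilde{p})$ was arbitrary, the set inclusion $C(D,\eps) \subseteq C(D,\eps')$ follows.

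There is no real obstacle here: the fact is a direct consequence of the fact that $C(D,\cdot)$ is a sublevel set of the function $\eps^*(D \cup \cdot)$, and sublevel sets of any function are nested in the threshold. The only reason to record this as a formal fact at all is to make explicit that relaxing the allowed approximation error can only enlarge (or preserve) the consistency set, a monotonicity property that will presumably be used later when comparing counterfactual bounds at different levels of approximation error. I would not include any computation with the linear program from Proposition~\ref{prop:minqeps}, as the argument is entirely at the level of the defining inequality.
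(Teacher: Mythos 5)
Your proposal is correct and matches the paper's argument, which is given in one line in the main text: the inclusion is an immediate consequence of the definition of $C(D,\eps)$ as a sublevel set of $(\tilde{x},\tilde{p}) \mapsto \eps^*(D \cup (\tilde{x},\tilde{p}))$ (equivalently, of the fact that $\eps$-rationalizability implies $\eps'$-rationalizability for $\eps < \eps'$). No gap; the chaining $\eps^*(D \cup (\tilde{x},\tilde{p})) \leq \eps < \eps'$ is exactly what is needed.
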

Higher values of $\eps$ correspond to less informative counterfactual predictions. This follows from the fact that if a dataset is $\eps$-rationalized by quasilinear utility, then it is also $\eps'$-rationalized for $\eps < \eps'$.

\begin{fact}[Nonemptiness]\label{fact:nonempty}
$C(D,\eps)$ is nonempty if and only if $\eps \geq \eps^*(D)$. 
\end{fact}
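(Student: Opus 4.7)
The plan is to prove the two directions separately, both by exploiting how the approximation error $\eps^*$ behaves under adding observations to a dataset.

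For the ``only if'' direction, I would prove the contrapositive, or equivalently the monotonicity statement: for any candidate $(\tilde{x},\tilde{p}) \in \mathbb{R}_+^K \times \mathbb{R}_{++}^K$,
\[
\eps^*(D) \le \eps^*(D \cup (\tilde{x},\tilde{p})).
\]
This follows directly from the definition of $\eps$-rationalization: if a single utility function $u$ certifies that $D \cup (\tilde{x},\tilde{p})$ is $\eps$-rationalized, the same $u$ witnesses $\eps$-rationalization of the subset $D$. Infimizing over $\eps$ gives the inequality. Consequently, if $(\tilde{x},\tilde{p}) \in C(D,\eps)$, then $\eps^*(D) \le \eps^*(D \cup (\tilde{x},\tilde{p})) \le \eps$, so nonemptiness forces $\eps \ge \eps^*(D)$.

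For the ``if'' direction, assume $\eps \ge \eps^*(D)$ and construct an explicit element of $C(D,\eps)$. The simplest choice is to duplicate an existing observation: take $(\tilde{x},\tilde{p}) = (x^1,p^1)$, which lies in $\mathbb{R}_+^K \times \mathbb{R}_{++}^K$ by the maintained assumptions on the dataset. Any $u$ that $\eps$-rationalizes $D$ already satisfies the inequality $u(x^1) - p^1\cdot x^1 \ge u(x) - p^1\cdot x - \eps$ for all $x \in \mathbb{R}_+^K$, so it also $\eps$-rationalizes the augmented dataset. Combined with the monotonicity inequality above this yields $\eps^*(D \cup (x^1,p^1)) = \eps^*(D) \le \eps$, so $(x^1,p^1) \in C(D,\eps)$.

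There is no real obstacle; the argument is essentially bookkeeping on the definition. The only point worth flagging is to check that the duplicate $(x^1,p^1)$ actually lies in the domain $\mathbb{R}_+^K \times \mathbb{R}_{++}^K$ used in the definition of $C(D,\eps)$, which is immediate from the standing assumption that observed prices are strictly positive and observed quantities are weakly positive. Existence of $\eps^*(D)$ itself is guaranteed by Proposition~\ref{prop:minqeps}, so there is no issue taking an infimum.
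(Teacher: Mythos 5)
Your proof is correct. The ``only if'' direction matches the paper's (one-line) argument: $\eps^*$ is monotone under adding observations because any rationalizing $u$ for the augmented dataset rationalizes the subset, so nonemptiness of $C(D,\eps)$ forces $\eps \ge \eps^*(D)$. For the ``if'' direction you diverge from the paper in the choice of witness: the paper takes $\tilde{x}=0$ together with $\tilde{p}$ sufficiently high along each dimension, whereas you duplicate the existing observation $(x^1,p^1)$. Your construction is the more elementary of the two --- the added constraints are literally already among the constraints defining $\eps$-rationalization of $D$, so there is nothing to verify --- and it fully suffices for the bare nonemptiness claim. What the paper's construction buys is a witness that is a genuinely \emph{new} price, which previews the later, stronger result (Proposition~\ref{prop:nonempty}) that the counterfactual quantity set is nonempty at \emph{every} price $\tilde{p}$, not just at prices already in the data; the high-price/zero-quantity point is the natural candidate there. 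Your only-if argument is also slightly more careful than the text's phrasing, since you derive $\eps^*(D)\le\eps^*(D\cup(\tilde{x},\tilde{p}))$ explicitly rather than appealing to observations ``within the dataset.'' No gaps.
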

This states that the observed data places a lower bound on the minimal amount of approximation error needed to conduct counterfactual anlaysis. If $\eps \geq \eps^*$, nonemptiness of $C(D,\eps)$ is guaranteed by considering $\tilde{p}$ sufficiently high along each dimension and $\tilde{x} = 0$. Alternatively, for the dataset $D$, when $\eps < \eps^*$ even observations within the dataset cannot be $\eps$-quasilinear rationalized.

\begin{fact}[Minimality] \label{f:min}
$AC(D)$ is obtained from the smallest $\eps$ such that $C(D,\eps)$ is nonempty.
\end{fact}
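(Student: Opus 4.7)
The plan is to observe that Fact~\ref{f:min} follows essentially immediately from Fact~\ref{fact:nonempty} combined with the definition of $AC(D)$, so there is no real work beyond unpacking the definitions. Concretely, I interpret the statement as the claim that if $\eps_{\min} := \inf\{\eps \geq 0 : C(D,\eps) \neq \emptyset\}$, then the infimum is attained and $AC(D) = C(D, \eps_{\min})$.

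First I would invoke Fact~\ref{fact:nonempty}, which says $C(D,\eps) \neq \emptyset$ iff $\eps \geq \eps^*(D)$. This immediately identifies the set of admissible $\eps$ as the interval $[\eps^*(D), \infty)$, whose minimum is $\eps^*(D)$, and moreover this minimum is attained since $C(D, \eps^*(D))$ is itself nonempty by Fact~\ref{fact:nonempty}. Thus $\eps_{\min} = \eps^*(D)$.

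Second, I would appeal directly to the definition $AC(D) := C(D, \eps^*(D))$ given earlier in the section, which yields $AC(D) = C(D, \eps_{\min})$, completing the argument. The only conceptual step worth flagging is that $\eps^*(D)$ exists and is attained as a minimum rather than only an infimum; this is guaranteed by Proposition~\ref{prop:minqeps}, which asserts that $\eps^*(D)$ is obtained by a linear program. Since this existence is already in hand, there is no real obstacle, and the statement should be read as a clarifying remark about how the adaptive counterfactual set fits into the family $\{C(D,\eps)\}_{\eps \geq 0}$ rather than a substantive result requiring additional machinery.
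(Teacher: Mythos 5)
Your proposal is correct and matches the paper's own (very brief) argument, which simply notes that Fact~\ref{f:min} follows from the preceding facts: Fact~\ref{fact:nonempty} pins down the admissible set of $\eps$ as $[\eps^*(D),\infty)$, and the definition $AC(D) = C(D,\eps^*(D))$ does the rest. Your additional remark that attainment of the minimum is guaranteed by Proposition~\ref{prop:minqeps} is a reasonable piece of bookkeeping the paper leaves implicit.
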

This formalizes that setting $\eps = \eps^*$ for counterfactual values obtains the sharpest restrictions under Assumption~\ref{assm:prime} subject to the constraint that counterfactuals are nonempty. This follows from the previous facts.  To perform a sensitivity analysis, one could examine any $\eps > \eps^*$ and use $C(D,\eps)$ as the counterfactual set. Our framework allows this yet focuses on $\eps = \eps^*$.

\subsection{Approximate Law of Demand} \label{sec:graph}
To gain intuition on the ``shape" of the counterfactual sets $C(D,\eps)$ and $AC(D)$, we present a graphical description of the restrictions on counterfactuals. For exposition we focus on \textit{some} of the restrictions rather than all of them. First we describe a restriction that must hold for a dataset to be $\eps$-rationalized. At price $p^r$ we must have
\[
u(x^r) - p^r \cdot x^r \geq u(x^s) - p^r \cdot x^s - \eps
\]
for some unknown function $u$. This states that $x^s$ cannot be much better than $x^r$ at price $p^r$. Flipping the role of observations $r$ and $s$ and basic algebra yields
\begin{equation}
\frac{1}{2} (p^s-p^r)\cdot(x^s-x^r) \le \eps.
\end{equation}
This is a multivariate approximate law of demand. The usual multivariate law of demand obtains when $\eps = 0$. For a given value $\eps\ge 0$, this also places restrictions on counterfactual demand $\tilde{x}$ at prices $\tilde{p}$ since for any $r \in \{1, \ldots, T \}$, a potential counterfactual tuple must satisfy
\begin{equation} \label{eq:clawofdemand}
\frac{1}{2} (\tilde{p}-p^r)\cdot(\tilde{x}-x^r) \le \varepsilon.
\end{equation}
When we apply Assumption~\ref{assm:prime}, we evaluate counterfacturals at  $\eps = \eps^*$. This inequality places a restriction on candidate quantity-price tuples when compared with any observation in the dataset. In the one dimensional case ($K = 1$), this states that if the price increases from $p^r$ to $\tilde{p}$, then demand cannot increase by too much. The bound on the increase in quantities is inversely related to the magnitude of the price increase. That is, when $\tilde{p}-p^r>0$ we have $\tilde{x} \le x^r + \frac{2 \eps}{\tilde{p}-p^r} $.

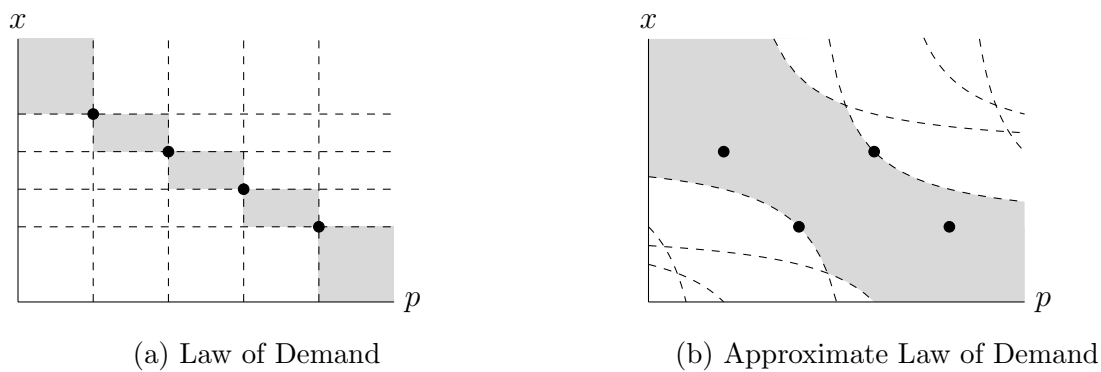
\begin{figure}[H]
  \centering
    \begin{subfigure}[h]{0.45\textwidth}
        \begin{tikzpicture}

    \fill[fill=gray!30] (0,3.5) -- (1,3.5) -- (1,2.5) -- (0,2.5);
    \fill[fill=gray!30] (1,2.5) -- (2,2.5) -- (2,2) -- (1,2);
    \fill[fill=gray!30] (2,2) -- (3,2) -- (3,1.5) -- (2,1.5);
    \fill[fill=gray!30] (3,1.5) -- (4,1.5) -- (4,1) -- (3,1);
    \fill[fill=gray!30] (4,1) -- (5,1) -- (5,0) -- (4,0);    
 
    \draw (0,0) -- (5,0) node[right] {$p$};
    \draw (0,0) -- (0,3.5) node[above] {$x$};
    
    \draw[dashed] (1,0) -- (1,3.5);
    \draw[dashed] (0,2.5) -- (5,2.5);
    \draw[dashed] (2,0) -- (2,3.5);
    \draw[dashed] (0,2) -- (5,2);
    \draw[dashed] (3,0) -- (3,3.5);
    \draw[dashed] (0,1.5) -- (5,1.5);
    \draw[dashed] (4,0) -- (4,3.5);
    \draw[dashed] (0,1) -- (5,1);

    \filldraw 
    (1,2.5) circle (2pt)  ;
    \filldraw 
    (2,2) circle (2pt)  ;
    \filldraw  
    (3,1.5) circle (2pt) ;
    \filldraw 
    (4,1) circle (2pt)  ;
    \end{tikzpicture}
    
    \caption{Law of Demand}
    \end{subfigure}
    \hfill
    \begin{subfigure}[h]{0.45\textwidth}

    \begin{tikzpicture}

    \fill[fill=gray!30, domain=0:2.4] plot (\x,{2 + 1/(\x - 3) })
    -- (2.5,1.5)
    -- (0,3.5) ;
    \fill[fill=gray!30, domain=2.4:3] plot (\x,{1 + 1/(\x - 4) })
    -- (5,0)
    -- (5,1)
    -- (2,2);    
    \fill[fill=gray!30, domain=1.666:2.615] plot (\x,{2 + 1/(\x - 1) })
    -- (2.65,1)
    -- (0,3.5) 
    -- (2.4,3.5);    
    \fill[fill=gray!30, domain=2.615:5] plot (\x,{1 + 1/(\x - 2) })
    -- (5,0)
    -- (2.5,1.5);  
  
    \draw (0,0) -- (5,0) node[right] {$p$};
    \draw (0,0) -- (0,3.5) node[above] {$x$};
    \draw[dashed, domain=0:.5] plot (\x,{2 + 1/(\x - 1) });
    \draw[dashed, domain=1.666:5] plot (\x,{2 + 1/(\x - 1) });
    \draw[dashed, domain=0:1] plot (\x,{1 + 1/(\x - 2) });
    \draw[dashed, domain=2.4:5] plot (\x,{1 + 1/(\x - 2) });
    \draw[dashed, domain=0:2.5] plot (\x,{2 + 1/(\x - 3) });
    \draw[dashed, domain=3.66:5] plot (\x,{2 + 1/(\x - 3) });
    \draw[dashed, domain=0:3] plot (\x,{1 + 1/(\x - 4) });
    \draw[dashed, domain=4.4:5] plot (\x,{1 + 1/(\x - 4) });

    \filldraw 
    (1,2) circle (2pt)  ;
    \filldraw 
    (2,1) circle (2pt)  ;
    \filldraw  
    (3,2) circle (2pt) ;
    \filldraw 
    (4,1) circle (2pt)  ;
    \end{tikzpicture}
    
    \caption{Approximate Law of Demand}
    \end{subfigure}
    \caption{Restrictions of Law of Demand}\label{fig:countlaw}
\end{figure}

We illustrate these bounds in two example datasets displayed in Figure~\ref{fig:countlaw}. Each dataset has four observations respresented as black dots. The gray area denotes the set of quantity-price tuples $(\tilde{x},\tilde{p})$ that are consistent with the existing dataset with minimal level of approximation error  $\eps = \eps^*$. In panel (a), the observed dataset is exactly consistent with quasilinear utility and the counterfactual set has $\eps = 0$. In this case, the constructed bounds have the property that when price increases quantity cannot increase. This leads to the ``rectangular'' bounds in panel (a). Note that for low values of prices, quantity has a lower bound but not an upper bound. Similarly, when prices are higher than any observed data the lower bound on counterfactual demand is zero. 

In panel (b), the dataset is not consistent with quasilinear utility, because there is an instance in which price goes up and quantity goes up. Here we graphically obtain the counterfactual restrictions using the approximate law of demand constructed from Equation~\ref{eq:clawofdemand}, setting $\eps = \eps^*$.\footnote{There are additional restrictions beyond Equation~\ref{eq:clawofdemand}; here we provide a graphical illustration but the general framework uses additional inequalities.} This approach leads to ``hyperbolic'' bounds, in contrast with the rectangular bounds in panel (a). The fact that $\eps = \eps^*$ is the minimal approximation error needed to rationalize the data is demonstrated on the graph by two points touching dashed hyperbolas.

\subsection{Bounding Quantities} \label{sec:boundquant}
The sets $C(\cdot)$ and $AC(\cdot)$ completely describe  counterfactuals. An analyst may not be interested in the entire set of counterfactual quantity-price tuples, but rather certain features of it. For example, an analyst may only be interested in quantities at a fixed counterfactual price $\tilde{p} \in \mathbb{R}^K_{++}$ allowing approximation error $\eps$. This set may be written
\[
X(\tilde{p},D,\eps) = \left\{\tilde{x} \in \mathbb{R}_{+}^K \mid (\tilde{x},\tilde{p}) \in C(D,\eps) \right\}.
\]

Our first question is when this set is nonempty, i.e. when can we conduct counterfactual analysis.
\begin{prop} \label{prop:nonempty}
For a dataset $D$ and counterfactual price $\tilde{p}$, the set $X(\tilde{p},D,\eps)$ is nonempty if and only if $\eps \geq \eps^*$.  Moreover, when $\eps \geq \eps^*$ there is a concave, strictly increasing, continuous utility function $u : \mathbb{R}^K_+ \rightarrow \mathbb{R}$ that $\eps$-rationalizes the dataset and has an exact maximizer for each $p \in \mathbb{R}^K_{++}$.\footnote{By strictly increasing we mean the usual definition, i.e. if each component of $x$ is weakly greater than each component of $z$, then $u(x) \geq u(z)$, and if in addition some component of $x$ is strictly greater than the corresponding component of $z$, then $u(x) > u(z)$.}
\end{prop}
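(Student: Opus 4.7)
The plan is to prove the two directions separately, with the main work in the ``if'' direction via an explicit construction.

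For the ``only if'' direction I would note that membership of $\tilde x$ in $X(\tilde p, D, \eps)$ means $\eps^*(D \cup \{(\tilde x, \tilde p)\}) \le \eps$. Since any $\varepsilon'$-rationalizer of an augmented dataset automatically $\varepsilon'$-rationalizes the subset, the map $D' \mapsto \eps^*(D')$ is weakly monotone under inclusion, so $\eps^*(D) \le \eps^*(D \cup \{(\tilde x, \tilde p)\}) \le \eps$, yielding $\eps \ge \eps^*$.

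For the ``if'' direction together with the moreover clause, I would construct a single utility function $u$ with all required properties; the exact maximizer of $u(x) - \tilde p \cdot x$ then serves as the witness $\tilde x \in X(\tilde p, D, \eps)$, because that same $u$ $\eps$-rationalizes the augmented dataset (the condition at $(\tilde x, \tilde p)$ holds without any slack). Given $\eps \ge \eps^*$, Proposition~\ref{prop:minqeps} supplies numbers $\{u^t\}_{t=1}^T$ satisfying
\[
p^t \cdot (x^t - x^s) - \eps \le u^t - u^s \le p^s \cdot (x^t - x^s) + \eps \quad \text{for all } t,s.
\]
From these I would form the Afriat-type concave envelope
\[
u_0(x) = \min_{t \in \{1,\ldots,T\}} \bigl\{ u^t + p^t \cdot (x - x^t)\bigr\},
\]
which is concave, continuous, and strictly increasing on $\mathbb{R}_+^K$ (each affine piece has slope $p^t \in \mathbb{R}_{++}^K$, and the minimum of finitely many strictly increasing functions is strictly increasing: if $x$ dominates $y$ with some strict component, every piece evaluated at $x$ exceeds the corresponding piece at $y$, so the two minimizing indices produce a strict inequality). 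A short calculation using the displayed inequalities verifies that $u_0$ $\eps$-rationalizes $D$: $u_0(x^t) \ge u^t - \eps$ from the right-hand inequality, and $u_0(x) \le u^t + p^t \cdot (x - x^t)$ from the definition, giving $u_0(x) - p^t \cdot x \le u_0(x^t) - p^t \cdot x^t + \eps$.

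The main obstacle is that $u_0$ grows at most linearly at infinity, so $u_0(x) - p \cdot x$ need not attain a maximum for every $p \in \mathbb{R}_{++}^K$ (e.g.\ if $p$ lies below the elementwise minimum of the $p^t$ in some direction). I would cap $u_0$ by a slowly-growing strictly concave majorant: let $v(x) = M + \alpha \sum_k \log(1 + x_k)$ and define $u(x) = \min\{u_0(x), v(x)\}$. Choosing $M$ larger than $\max_t u_0(x^t)$ ensures $v(x^t) \ge u_0(x^t)$ for every $t$, so $u(x^t) = u_0(x^t)$; choosing any $\alpha > 0$ together with the logarithmic versus linear growth gap guarantees $v(x) < u_0(x)$ eventually, so $u$ inherits the logarithmic cap at infinity. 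Concavity, continuity, and strict monotonicity transfer from $u_0$ and $v$ to the minimum. Rationalization is preserved because $u$ agrees with $u_0$ at every $x^t$ while $u \le u_0$ everywhere, and the rationalization inequality is robust to lowering values off the observed quantities. Finally, $u(x) - p \cdot x \le v(x) - p \cdot x \to -\infty$ as $\|x\| \to \infty$ in $\mathbb{R}_+^K$ for any $p \in \mathbb{R}_{++}^K$, so a continuous concave function on the closed set $\mathbb{R}_+^K$ diverging to $-\infty$ at infinity attains its supremum. The subtle point, which the cap construction handles cleanly, is that in the tight case $\eps = \eps^*$ there is no slack to absorb any downward perturbation at the observed $x^t$; taking a minimum with a majorant that equals $u_0$ at the observations and only shrinks it elsewhere avoids this difficulty.
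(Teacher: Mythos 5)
Your proposal is correct and follows essentially the same architecture as the paper's proof: build a rationalizing utility that is a minimum of finitely many affine functions with slopes $p^t$ (your Afriat envelope versus the paper's sequence-based $U$ from \cite{allen2020satisficing}), then take a pointwise minimum with a concave, slowly growing majorant that agrees with it at the observed quantities so that rationalization, concavity, continuity, and strict monotonicity are preserved while maximizers are forced to exist for every $p \in \mathbb{R}^K_{++}$. The only substantive difference is the last step: the paper caps with a bounded function and invokes a subdifferential argument (Lemma~\ref{lem:existence}) to extract an exact maximizer from finiteness of the indirect utility, whereas your logarithmic cap yields coercivity of $u(x) - p \cdot x$ directly, which is a slightly more elementary route to the same conclusion.
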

This shows that by allowing enough approximation error, we can find counterfactual quantities for any price. This is stronger than Fact~\ref{fact:nonempty} because it gives nonemptiness of the counterfactual quantity set for \textit{any} price. Existing work has studied when observed datasets can be rationalized by quasilinear utility (\cite{brown} for $\eps = 0$ or with certain random shocks) or an enlargement (\cite{allen2020satisficing} for $\eps \geq 0$). This strengthens those results by showing when we can conduct counterfactual analsyis of quantities at a new price. The question is nontrivial because the utility functions constructed in \cite{brown} and \cite{allen2020satisficing} to explain data have no approximate maximizer for low prices because the indirect utility is infinite.\footnote{See the proof of Proposition~\ref{prop:shapev}. See also \cite{aguiar2020rationalization} for recent work concerning emptiness of counterfactual sets when using the weak axiom of revealed preference.}

We now discuss additional properties of $X(\tilde{p},D,\eps)$.

\begin{prop}\label{prop:Xdetail}
For a dataset $D=\{ (x^t, p^t) \}_{t = 1}^T$, let $\eps \geq \eps^*$. The set $X(\tilde{p},D,\eps)$ is a closed, convex polyhedron. In particular,  $\tilde{x} \in X(\tilde{p},D,\eps)$ if and only if the inequalities  
\begin{equation} \label{eq:xhalfspace}
(\tilde{p} - p^{t_M})\cdot \tilde{x} \leq (M+1) \eps + \tilde{p} \cdot x^{t_1}  - p^{t_M} \cdot x^{t_M} - \sum_{m = 1}^{M-1} p^{t_m} \cdot (x^{t_m} - x^{t_{m+1}})
\end{equation}
hold for all finite sequences $\{ t_m \}_{m = 1}^M$ without cycles where $t_m \in \{1, \ldots, T\}$ and $M \geq 1$.
\end{prop}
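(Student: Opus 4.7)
The plan is to characterize when the augmented dataset $D \cup (\tilde{x}, \tilde{p})$ is $\eps$-rationalized by quasilinear utility and to express this as a family of affine inequalities in $\tilde{x}$. I would first invoke the cyclic characterization of $\eps$-rationalization underlying Proposition~\ref{prop:minqeps} (from \cite{allen2020satisficing}): a dataset $\{(x^s,p^s)\}_{s \in S}$ is $\eps$-rationalized by quasilinear utility if and only if for every closed sequence $s_1, s_2, \ldots, s_L, s_{L+1} = s_1$ in $S$, the inequality $\sum_{\ell=1}^{L} p^{s_\ell}\cdot(x^{s_\ell}-x^{s_{\ell+1}}) \leq L\eps$ holds. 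Necessity follows by summing the pointwise inequalities $u(x^{s_\ell}) - u(x^{s_{\ell+1}}) \geq p^{s_\ell}\cdot(x^{s_\ell} - x^{s_{\ell+1}}) - \eps$ around the cycle; sufficiency is by an Afriat-style construction of $u$.

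Labeling the candidate observation as $T+1$, every closed sequence in the augmented dataset falls into three categories. Those that never visit $T+1$ are closed sequences in $D$; since $\eps \geq \eps^*(D)$, the corresponding inequalities already hold and give no constraint on $\tilde{x}$. Those that visit $T+1$ exactly once can be cyclically rotated to the form $T+1, t_1, \ldots, t_M, T+1$; writing out $\sum_{\ell} p^{s_\ell}\cdot(x^{s_\ell} - x^{s_{\ell+1}}) \leq (M+1)\eps$ and rearranging produces precisely the inequality \eqref{eq:xhalfspace}. Those that visit $T+1$ at least twice can be split at consecutive occurrences of $T+1$ into a concatenation of shorter closed sequences, each with strictly fewer visits to $T+1$; because the cyclic sum is additive across such a split and the total length equals the sum of the sub-lengths, the inequality for the full sequence follows from those of its pieces. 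By induction on the number of visits to $T+1$, it suffices to impose only inequalities arising from the single-visit case.

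It then remains to reduce these to the ``without cycles'' form, i.e.\ sequences $t_1, \ldots, t_M$ with no repeated indices. If $t_i = t_j$ for some $1 \leq i < j \leq M$, then the subsequence $t_i, t_{i+1}, \ldots, t_j$ is a closed sequence in $D$ and satisfies its own cyclic inequality by $\eps \geq \eps^*(D)$; excising this loop leaves a strictly shorter single-visit sequence, and additivity of the cyclic sums shows the original inequality is implied. Hence $\tilde{x} \in X(\tilde{p}, D, \eps)$ if and only if \eqref{eq:xhalfspace} holds for every finite sequence in $\{1, \ldots, T\}$ without repeated indices. Since there are only finitely many such sequences and each defining inequality is affine in $\tilde{x}$, the set is a finite intersection of closed half-spaces, hence a closed convex polyhedron.

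The main obstacle I anticipate is the bookkeeping in the two decomposition arguments—establishing cleanly that the cyclic sum is additive both when $T+1$ is revisited and when a sub-cycle is excised from within the path $t_1, \ldots, t_M$, with cycle lengths matching so that the $L\eps$ bound decomposes correctly. Both reductions ultimately rest on telescoping the price-quantity inner products along the walk; once this additivity is laid out carefully, the polyhedral conclusion is immediate from the finiteness of simple sequences on $\{1,\ldots,T\}$.
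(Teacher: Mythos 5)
Your proof is correct and follows essentially the same route as the paper: both reduce membership in $X(\tilde{p},D,\eps)$ to the cyclic characterization of $\eps$-rationalizability applied to the augmented dataset, discard cycles not visiting the counterfactual observation using $\eps \geq \eps^*(D)$, and rearrange the single-visit cycles into the half-space inequalities \eqref{eq:xhalfspace}. Your write-up is actually more careful than the paper's, since you explicitly justify the reduction to single-visit, repetition-free sequences via the additivity/splitting arguments, which the paper leaves implicit.
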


When $M = 1$, the inequalities in Equation~\ref{eq:xhalfspace} yield the approximate law of demand described in Equation~\ref{eq:clawofdemand}. In this case we compare an observation in the dataset with a conjectured counterfactual tuple $(\tilde{x},\tilde{p})$, which leads to two instances of $\eps$ in Equation~\ref{eq:xhalfspace}, just like the approximate law of demand. Proposition~\ref{prop:Xdetail} shows there are other restrictions imposed on counterfactuals beyond the law of demand by considering more than one observation at a time ($M \geq 2$). These restrictions arise by adding up additional sequences of inequalities similar to constructing the approximate law of demand. By summing up appropriate sequences, the unknown utility function is removed so restrictions on counterfactual quantities is given only using observable data. We later use similar information to generate bounds on certain welfare objects.

\iffalse
To better understand the counterfactual quantities set, note that from Proposition~\ref{prop:epsqrat}(iii), $X(\tilde{p},D,\eps)$ is characterized by the inequality
\begin{equation} \label{eq:xhalfspace}
(\tilde{p} - p^{t_M})\cdot \tilde{x} \leq M \eps + \tilde{p} \cdot x^{t_2}  - p^{t_M} \cdot x^{t_M} - \sum_{m = 2}^{M-1} p^{t_m} \cdot (x^{t_m} - x^{t_{m+1}})
\end{equation}
holding for all sequences that satisfy the restrictions, together with non-negativity constraints. More formally, we apply Proposition~\ref{prop:epsqrat}(iii) and replace the in-sample observation $t_1$ in a cycle with the counterfactual tuple $(\tilde{x},\tilde{p})$. For fixed $\tilde{p}$, each inequality restricts $\tilde{x}$ to lie in a half-space, and there are finitely many inequalities. From these facts we obtain the following result.
\fi 

Suppose now that we are only interested in bounding the quantity of the $k$-th good at a price $\tilde{p}$, allowing up to $\eps$ approximation error. These bounds are extrema of $X(\tilde{p},D,\eps)$ along the $k$-the dimension. That is, they they are the extreme points of the set
\begin{equation*}
X_k(\tilde{p},D,\eps) = \{ x_k \in \mathbb{R}_+ \mid  \text{There is some } \tilde{x} \in X(\tilde{p},D,\eps) \text{ with } x_k = \tilde{x}_k \}.
\end{equation*}

The following proposition discusses the bounds for the $k$-th good. In particular, when the bounds exist they can be computed by a linear program and the bounds satisfy monotonicity properties with respect to the approximation error $\eps$. 
\begin{prop}\label{prop:qcounterfactuals}
For a dataset $D$, let $\eps \geq \eps^*$. The bounds
\begin{align*}
\overline{x}_k(\tilde{p},\eps) & = \sup_{x_k \in X_k(\tilde{p},D,\eps)} x_k \\
\underline{x}_k(\tilde{p},\eps) & = \inf_{x_k \in X_k(\tilde{p},D,\eps) } x_k
\end{align*}
can each be computed as a linear program whenever they are finite.

Under Assumption~\ref{assm:prime} ($\eps = \eps^*$), these bounds cannot be improved.
\end{prop}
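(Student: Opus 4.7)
The plan is to build on Proposition~\ref{prop:Xdetail}, which characterizes $X(\tilde{p},D,\eps)$ as a convex polyhedron cut out by finitely many linear inequalities of the form in Equation~\ref{eq:xhalfspace} (indexed by cycle-free sequences $\{t_m\}_{m=1}^M$ drawn from $\{1,\ldots,T\}$, of which there are finitely many) together with the non-negativity constraints $\tilde{x}\ge 0$. Since each such inequality is linear in $\tilde{x}$ with coefficients that are explicit functions of the data $D$, the counterfactual price $\tilde{p}$, and the scalar $\eps$, the feasible region is the intersection of finitely many half-spaces in $\mathbb{R}^K_+$, i.e. a polyhedron.

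First, I would argue the LP representation. Maximizing or minimizing the linear functional $\tilde{x}\mapsto \tilde{x}_k$ over this polyhedron is by definition a linear program. For $\underline{x}_k(\tilde{p},\eps)$, the non-negativity constraint $\tilde{x}_k\ge 0$ guarantees the objective is bounded below by $0$; since $\eps\ge\eps^*$ implies feasibility by Proposition~\ref{prop:nonempty}, the LP is feasible and bounded, so the infimum is attained at a vertex and is finite. For $\overline{x}_k(\tilde{p},\eps)$, the LP may be unbounded above (for instance in situations like panel (a) of Figure~\ref{fig:countlaw} where no observation lies strictly below $\tilde{p}$ coordinate-wise, so no inequality of the form~\ref{eq:xhalfspace} restricts $\tilde{x}_k$ from above), but whenever it is bounded the supremum is likewise attained at a vertex and recovered by any LP solver.

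Second, I would establish sharpness under Assumption~\ref{assm:prime}. By construction,
\[
X_k(\tilde{p},D,\eps^*) = \{\tilde{x}_k : (\tilde{x},\tilde{p})\in C(D,\eps^*)\} = \{\tilde{x}_k : (\tilde{x},\tilde{p})\in AC(D)\},
\]
so every value $x_k^0\in X_k(\tilde{p},D,\eps^*)$ is realized by some adaptive counterfactual tuple. In particular, the LP argument above shows the two extrema themselves are attained by feasible counterfactual quantities. Therefore any strict tightening of the upper bound would exclude an actual counterfactual consistent with Assumption~\ref{assm:prime}, and symmetrically for the lower bound; this is what ``cannot be improved'' means.

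The main obstacle is finiteness of the constraint system: the set of cycle-free sequences over $\{1,\ldots,T\}$ is of order $\sum_{M=1}^{T} T!/(T-M)!$, so exponentially large in $T$, but crucially still finite, which is all that is needed for the LP claim. Once that is invoked the proof reduces to standard LP attainment arguments together with the tautological observation above that $\eps=\eps^*$ already encodes the tightest adaptive counterfactual restriction.
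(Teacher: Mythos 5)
Your proof is correct and follows essentially the same route as the paper: both rest on the polyhedral characterization of $X(\tilde{p},D,\eps)$ from Proposition~\ref{prop:Xdetail}, feasibility from Proposition~\ref{prop:nonempty}, and the standard fact that a feasible, bounded linear program attains its optimum (the paper invokes Lemma~\ref{lem:lpexist} for this). Two minor differences are worth noting. First, you work exclusively with the cycle-based formulation, which has exponentially many constraints; this suffices for the claim as stated, but the paper's Proposition~\ref{a:qcounterfactuals} also gives an equivalent compact program with auxiliary utility variables $u^1,\ldots,u^T,\tilde{u}$ and only order $(T+1)^2$ constraints, which is the formulation actually intended for computation. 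Second, your sharpness argument (the extrema are attained by feasible counterfactual tuples, so any tightening excludes a consistent counterfactual) is a direct and valid reading of ``cannot be improved''; the paper instead argues via monotonicity of the bounds in $\eps$ (Proposition~\ref{prop:Xfinite}) together with emptiness of $X(\tilde{p},D,\eps)$ for $\eps<\eps^*$, which additionally establishes that no admissible choice of $\eps$ other than $\eps^*$ yields tighter bounds. The two arguments are complementary rather than conflicting, and neither reveals a gap in yours.
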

The details on the linear program to compute bounds are found in Proposition~\ref{a:qcounterfactuals} of Appendix~\ref{a:mainproofs}. Recall $X_k(\tilde{p},D,\eps)$ is convex from Proposition~\ref{prop:Xdetail}. Thus, any quantity between $\underline{x}_k(\tilde{p},\eps)$ and $\overline{x}_k(\tilde{p},\eps)$ is a candidate counterfactual quantity for good $k$.

Next we elaborate on when these bounds are finite. We show the lower bound is always finite but the upper bound is finite only when prices are sufficiently high. To formalize this define the upper comprehensive convex hull of a finite set $\{ z^{\ell} \}_{\ell = 1}^L$ as
\begin{align*}
\CCo(\{ z^{\ell} \}_{\ell = 1}^L) = \Bigg\{ z \in \mathbb{R}^K & \mid z \geq \sum_{\ell = 1}^L \alpha_{\ell} z^{\ell} \text{ for some nonnegative } \alpha_1, \ldots, \alpha_L \\
& \text{ such that } \sum_{\ell = 1}^{L}\alpha_{\ell} = 1 \Bigg\}.
\end{align*}
The inequality in the definition here is componentwise. In addition, let $\interior A$ denote the interior of a set $A$.

\begin{prop}\label{prop:Xfinite}
For a dataset $\{ (x^t, p^t) \}_{t = 1}^T$, let $\eps \geq \eps^*$. The upper bound $\overline{x}_k(\tilde{p},\eps)$ is finite if and only if $\tilde{p} \in \interior \CCo ( \{p^t\}_{t=1}^T )$. The lower bound of $\underline{x}_k(\tilde{p},\eps)$ is always finite. The upper bound  $\overline{x}_k(\tilde{p},\eps)$ is weakly increasing in $\eps$ and the lower bound $\underline{x}_k(\tilde{p},\eps)$ is weakly decreasing in $\eps.$
\end{prop}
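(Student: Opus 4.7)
The plan is to use the polyhedral description of $X(\tilde{p},D,\eps)$ from Proposition~\ref{prop:Xdetail} and reduce the finiteness of $\overline{x}_k$ to an analysis of the recession cone of that polyhedron. Because the coefficient of $\tilde{x}$ in the constraint indexed by a sequence $\{t_m\}_{m=1}^M$ is the single vector $\tilde{p}-p^{t_M}$, depending only on the terminal index, the recession cone of $X(\tilde{p},D,\eps)$ is exactly
\[
R \;=\; \{ d \in \mathbb{R}_+^K : (\tilde{p}-p^t)\cdot d \le 0 \text{ for all } t = 1,\ldots,T\}.
\]
By standard LP reasoning, $\overline{x}_k(\tilde{p},\eps) < \infty$ holds iff no $d \in R$ satisfies $d_k > 0$, so in particular all coordinate upper bounds are simultaneously finite iff $R = \{0\}$.

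Next I would match $R = \{0\}$ with the interior condition. For the ``if'' direction, assume $\tilde{p}\in \interior \CCo(\{p^t\}_{t=1}^T)$. Then there exist convex weights $\alpha_1,\ldots,\alpha_T$ and some $\delta>0$ with $\tilde{p}\ge \sum_t \alpha_t p^t + \delta \mathbf{1}$ componentwise. For any nonzero $d\ge 0$ this yields $\tilde{p}\cdot d > \sum_t \alpha_t (p^t\cdot d)$, so at least one $t$ satisfies $p^t\cdot d < \tilde{p}\cdot d$, and hence $d\notin R$. For the ``only if'' direction, if $\tilde{p}\notin \interior \CCo$, I would apply the supporting/separating hyperplane theorem to the closed convex set $\CCo$ at $\tilde{p}$ to obtain a nonzero vector $d$ with $d\cdot \tilde{p}\le d\cdot z$ for every $z\in\CCo$; since $\CCo$ has recession cone $\mathbb{R}_+^K$, this inequality can only hold when $d\ge 0$. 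Evaluating at $z = p^t$ yields $d\in R\setminus\{0\}$, and any coordinate with $d_k>0$ then produces $\overline{x}_k = \infty$.

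The remaining parts are short. The lower bound satisfies $\underline{x}_k \ge 0$ because $X(\tilde{p},D,\eps)\subseteq \mathbb{R}_+^K$, so finiteness is automatic. Monotonicity in $\eps$ follows directly from Proposition~\ref{prop:Xdetail}: every right-hand side of the half-space description is affine in $\eps$ with strictly positive slope $M+1 \ge 2$, so $\eps \le \eps'$ gives $X(\tilde{p},D,\eps)\subseteq X(\tilde{p},D,\eps')$, and taking supremum and infimum over nested sets yields the stated monotonicity. The step I expect to require the most care is the ``only if'' part of the interior characterization: one must verify that the separating hyperplane can be chosen with a nonnegative normal (using the recession directions of $\CCo$) and handle uniformly the distinction between $\tilde{p}$ lying on the boundary of $\CCo$ versus strictly outside it, so that a genuine recession direction of $X(\tilde{p},D,\eps)$ is produced in each case.
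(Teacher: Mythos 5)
Your recession-cone formulation is a genuinely different and cleaner route than the paper's, which instead multiplies the approximate law-of-demand inequalities by convex weights and sums them to exhibit an explicit box containing $X(\tilde{p},D,\eps)$. Your computation of $R$ (every $t$ appears as the terminal index of the singleton sequence, so the constraint normals are exactly $\tilde{p}-p^t$ plus nonnegativity), the reduction of finiteness of $\overline{x}_k$ to the absence of $d\in R$ with $d_k>0$ (which uses nonemptiness of $X(\tilde{p},D,\eps)$ from Proposition~\ref{prop:nonempty}, so do say so), the ``if'' direction, the lower-bound claim, and the monotonicity in $\eps$ are all correct and substantively match the paper.

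The gap sits exactly where you flagged it: the ``only if'' direction. Your separation argument produces a nonzero $d\in R$ and hence $\overline{x}_k=\infty$ for \emph{some} coordinate with $d_k>0$, but the proposition asserts this for the \emph{given} $k$, and your $d$ may vanish in that coordinate. This gap cannot be closed, because the per-coordinate claim is false for $K\ge 2$: take $T=1$, $p^1=(1,1)$, $x^1=(1,1)$, $\eps=\eps^*=0$, and $\tilde{p}=(2,1)$. Then $\tilde{p}\notin\interior\CCo(\{p^1\})=\{z\mid z>(1,1)\}$, yet the only constraint from Proposition~\ref{prop:Xdetail} is $(\tilde{p}-p^1)\cdot(\tilde{x}-x^1)\le 0$, i.e.\ $\tilde{x}_1\le 1$, so $\overline{x}_1=1<\infty$ while $\overline{x}_2=\infty$. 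What your argument does establish is the corrected statement that $\overline{x}_k(\tilde{p},\eps)<\infty$ for \emph{every} $k$ if and only if $\tilde{p}\in\interior\CCo(\{p^t\}_{t=1}^T)$, equivalently $R=\{0\}$. For comparison, the paper's own proof of this direction begins from the inference that $\tilde{p}\notin\interior\CCo(\{p^t\}_{t=1}^T)$ implies $\tilde{p}\le p^t$ for all $t$, which holds only when $K=1$; so your recession-cone analysis does not merely leave a step unfinished, it isolates a defect in the stated result that the paper's argument papers over.
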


Finally, we show that for one good $(K = 1)$, the bounds on demand are downward sloping in own-price.
\begin{prop}[Univariate Monotonicity]\label{prop:singleprice}
For a dataset $D$, let $\eps \geq \eps^*$ and suppose there is a single good ($K = 1$). For any pair of prices $\pa, \pb \in \mathbb{R}_{++}$, it follows that
\[
(\overline{x}(\pa,\eps) - \overline{x}(\pb,\eps))(\pa - \pb) \leq 0
\]
and
\[
(\underline{x}(\pa,\eps) - \underline{x}(\pb,\eps))(\pa - \pb) \leq 0.
\]
\end{prop}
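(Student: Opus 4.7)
My plan is to derive the claimed monotonicity directly from the half-space characterization of $X(\tilde{p}, D, \eps)$ given in Proposition~\ref{prop:Xdetail}, by showing that each defining inequality, when solved for $\tilde{x}$, is a weakly decreasing function of $\tilde{p}$ on its region of validity.

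First I would regroup the inequality of Proposition~\ref{prop:Xdetail} into the form
\[
(\tilde{p} - p^{t_M})(\tilde{x} - x^{t_1}) \leq (M+1)\eps - \Sigma(t_1, \ldots, t_M),
\]
where $\Sigma(t_1, \ldots, t_M) = p^{t_M}(x^{t_M} - x^{t_1}) + \sum_{m=1}^{M-1} p^{t_m}(x^{t_m} - x^{t_{m+1}})$ is the Afriat--Varian cyclic sum along the simple cycle $t_1 \to t_2 \to \cdots \to t_M \to t_1$ in $D$. Because $D$ is $\eps^*$-rationalized and $\eps \geq \eps^*$, telescoping the $\eps$-optimality conditions around this cycle of length $M$ gives $\Sigma(t_1, \ldots, t_M) \leq M\eps^* \leq M\eps$, so the right-hand side above is at least $\eps \geq 0$.

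Next, for $\tilde{p} > p^{t_M}$ the inequality rearranges to the upper bound $\tilde{x} \leq x^{t_1} + \frac{(M+1)\eps - \Sigma}{\tilde{p} - p^{t_M}}$, whose right-hand side is weakly decreasing in $\tilde{p}$ because the numerator is nonnegative. Symmetrically, for $\tilde{p} < p^{t_M}$ one obtains the lower bound $\tilde{x} \geq x^{t_1} - \frac{(M+1)\eps - \Sigma}{p^{t_M} - \tilde{p}}$, also weakly decreasing in $\tilde{p}$. Fixing $\pa > \pb$, any sequence contributing an upper-bound constraint at $\pb$ (i.e., with $p^{t_M} < \pb$) also contributes at $\pa$, and with a weakly smaller value; since $\overline{x}(\cdot, \eps)$ is the infimum of these upper bounds, $\overline{x}(\pa, \eps) \leq \overline{x}(\pb, \eps)$ (the case $\overline{x}(\pb, \eps) = +\infty$ is vacuous). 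The argument for $\underline{x}$ is analogous, with the containment of valid sequences reversed: sequences inducing a lower-bound constraint at $\pa$ are a subset of those at $\pb$, and combining with the floor $\tilde{x} \geq 0$ preserves the monotone structure, yielding $\underline{x}(\pb, \eps) \geq \underline{x}(\pa, \eps)$.

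The main obstacle is the algebraic regrouping that exposes $\Sigma$ as a standard cyclic Afriat--Varian sum; once this identification is made, the nonnegativity of the numerator is immediate from $D$'s own $\eps$-rationalizability, and the rest is a routine envelope manipulation, since an infimum over an expanding family of weakly decreasing functions is weakly decreasing, and likewise a supremum over a shrinking such family.
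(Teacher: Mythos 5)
Your proof is correct and follows essentially the same route as the paper's: both start from the half-space characterization of Proposition~\ref{prop:Xdetail}, rearrange each constraint into an explicit upper bound (when $\tilde{p} > p^{t_M}$) or lower bound (when $\tilde{p} < p^{t_M}$) on $\tilde{x}$ with a numerator shown to be nonnegative via the length-$M$ cyclic sum bound $\Sigma \le M\eps^* \le M\eps$, and then conclude by noting that the family of binding upper-bound constraints expands (and each tightens) as the counterfactual price rises, while the family of lower-bound constraints shrinks (and each loosens). The only cosmetic difference is that the paper makes the set inclusions $UB(\pa) \subseteq UB(\pb)$ and $LB(\pb) \subseteq LB(\pa)$ explicit via a three-way partition of sequences, whereas you phrase the same step as an envelope of monotone functions.
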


When $\eps = 0$, the dataset satisfies the exact law of demand. When $\eps^*$ (and $K = 1$), there is some pair of observations $r,s \in \{1,\ldots,T\}$ that violates the law of demand so
\[
(x^r - x^s) (p^r - p^s) > 0.
\]
Proposition~\ref{prop:singleprice} shows that while such violations can occur in the data, the bounds themselves satisfy the exact law of demand.

\begin{remark}[Sensitivity Analysis]
If an analyst is unsure what is a sensible choice of $\eps$ (other than the requirement $\eps \geq \eps^*$), then it is possible to perform sensitivity analysis of $\overline{x}_k(\tilde{p},\eps)$ and $\underline{x}_k(\tilde{p},\eps)$ as $\eps$ varies. A specific question is the largest amount of approximation error in which one can still bound the quantity of the $k$th good by a pre-specified value, e.g.
\[
\sup \{ \eps \geq \eps^* \mid \overline{x}_k(\tilde{p}, \eps) \leq \overline{q}_k \}.
\]
This bound is related the analysis of breakdown frontiers of \cite{masten2019inference}, which involve the weakest assumptions under which one can reach a conclusion. Here, weakest assumption translates to most approximation error.
\end{remark}

\begin{remark}[Other Bounds]
It is straightforward to generalize Proposition~\ref{prop:qcounterfactuals} to bound certain linear combinations of the candidate demand vector $\tilde{x}$. Bounds on such linear combinations may be computed as the value of a linear programming problem. One interesting linear combination is $\tilde{p} \cdot \tilde{x}$, which is the expenditure on the $K$ goods. Sharp bounds on general functionals $f(\tilde{x})$ can also be described as the value of a constrained optimization problem. For example, an upper bound is given by
\[
\sup_{\tilde{x} \in X(\tilde{p},D,\eps)} f(\tilde{x}).
\]
Recall that Proposition~\ref{prop:Xdetail} states this constraint set is a closed convex polyhedron. This can facilitate computation though we do not formally study computation for general $f$.
\end{remark}

\subsection{Expenditure Bounds} \label{sec:exp}
Additional assumptions can tighten the bounds on quantities in Proposition~\ref{prop:qcounterfactuals}. For example, one can assume  that expenditure is the same at the counterfactual value as the last period of data, so $\tilde{p} \cdot \tilde{x} = p^T \cdot x^T$. Alternatively, one could place bounds on the expenditures so that $\underline{m} \le \tilde{p} \cdot \tilde{x} \le \overline{m}$. One may also impose \textit{a priori} bounds on the quantities of other goods. These bounds can considerably shrink the set of counterfactual bounds, especially when there are multiple goods. In addition, computation with these additional restrictions is not challenging because these are inequality constraints that can be appended to the original linear program. When adding these additional constraints, however, it is possible that the counterfactual set can be empty.

We emphasize that in general, such expenditure bounds are not needed to deliver nontrivial counterfactual bounds. It is helpful to contrast our approach with the general model of utility maximization subject to a budget constraint, with preferences that need not be quasilinear. In the general model, even under correct specification the sharp bounds on quantities of each good at a given price are the trivial bounds $[0,\infty)$ unless the analyst places \textit{a priori} bounds on expenditure at the new price.\footnote{The bounds $[0,\infty)$ are for bounding one good at a time (similar to $\underline{x}_k$ and $\overline{x}_k$ above). There are nontrivial restrictions on the entire demand tuple.} This is because the general model does not rule out expenditure of $0$ or arbitrarily high values at counterfactuals when we only fix prices.\footnote{The results in \cite{deb2018revealed} can be used to show nontrivial bounds are possible in the general model when income is always the same value (inside and outside the dataset) and there is an unobserved good whose price is fixed.}

\section{Welfare} \label{sec:welfare}
To study welfare, we must take a stand on the interpretation of approximation error. For this section, we follow \cite{allen2020satisficing} and treat the approximation error as arising from satisficing in the spirit of \cite{simon1947administrative}. In particular, an individual has a utility function that describes the ranking over goods, but satisfices by choosing bundles that are ``good enough.'' 

We now discuss how satisficing relates to the measurement and prediction wedge. When trying to learn about utility from data, a measurement wedge arises since observed choices may not be optimal. When trying to predict welfare for a price change, the prediction wedge occurs since we only know the region of bundles that are ``good enough." Assumption~\ref{assm:prime} means that the measurement and prediction wedge are the same size as the smallest amount of satisficing needed to describe the data. We note that one can also apply the satisficing interpretation to counterfactual quantities, but it is not necessary. For this reason we did not distinguish between these wedges in Section~\ref{sec:counterfactuals}. Appendix~\ref{sec:alternative} provides additional discussion.

Since we are studying quasilinear utility there are two natural welfare objects. We look at differences in utility over consumption bundles and differences in (approximate) indirect utility over prices. An important asymmetry arises because learning about differences in utility only involves the measurement wedge because it does not involve choices in new situations. In contrast, differences in (approximate) indirect utility over prices involves both the measurement wedge and prediction wedge because one must consider choices in new settings. We elaborate more below.

\subsection{Recoverability of Utility} \label{sec:utilitybounds}
Our first goal is to learn about the unknown utility function over consumption bundles using data. This is helpful when considering policies involving the direct distribution of goods.

In general, there is a collection utility functions that can $\eps$-rationalize a dataset $\{ (x^t, p^t) \}_{t = 1}^T$. We study bounds on utility differences between consumption bundles. Specifically, given two consumption bundles $\xa, \xb \in \mathbb{R}^K_+$ we consider the upper and lower bounds
\begin{align*}
    \overline{u}(\xa, \xb, \eps) & = \sup_{ \left\{u  \mid u \text{ } \eps-\text{rationalizes } \{ (x^t, p^t ) \}_{t = 1}^T \right\}} \left\{ u(\xa) - u(\xb) \right\} \\
    \underline{u}(\xa, \xb, \eps) & = \inf_{ \left\{u \mid u \text{ } \eps-\text{rationalizes } \{ (x^t, p^t ) \}_{t = 1}^T \right\}} \left\{ u(\xa) - u(\xb) \right\}.
\end{align*}
Here, we consider all possible utility functions $u : \mathbb{R}^K_+ \rightarrow \mathbb{R}$ without additional restrictions such as monotonicity or concavity. A utility function $u$ is said to $\eps$-rationalize the dataset $\{ (x^t, p^t ) \}_{t = 1}^T$ when for every $t\in \{1,\ldots,T\}$ the inequality
\[
u(x^t) - p^t \cdot x^t \geq u(x) - p \cdot x - \eps
\]
holds for every $x \in \mathbb{R}^K_+$.

To interpret these bounds, suppose for example that $\overline{u}(\xa, \xb, \eps) < 0$. We conclude that the individual ranks $\xb$ above $\xa$, even when the individual's choices do not exactly maximize utility. Thus, there is no ambiguity in the ranking of these bundles according to the unknown utility function $u$. If $\overline{u}(\xa, \xb, \eps) > 0$, then it is possible that the individual ranks $\xa$ above $\xb$. Lastly, if $\underline{u}(\xa, \xb, \eps) > 0$, then we conclude the individual ranks $\xa$ above $\xb$. More broadly, these bounds provide \textit{cardinal} information on utility differences, in units of the price of the numeraire. 

To gain some intuition how bounds on differences of utility are informed by data, consider two bundles $x^r$ and $x^s$ in the dataset. Since $x^r$ is approximately optimal given prices $p^r$, we have the restriction
\[
u(x^r) - p^r \cdot x^r \geq u(x^s) - p^r \cdot x^s - \eps,
\]
which rearranges to
\begin{align} \label{eq:utilityineq}
u(x^s) - u(x^r) \leq p^r \cdot (x^s - x^r) + \eps.
\end{align}
Differences in utility are thus bounded by changes in expenditure. Here, price is fixed and a change in quantity determines the magnitude of the expenditure change. The inequality in (\ref{eq:utilityineq}) arises because the point in the data $x^r$ was approximately optimal at prices $p^r$. Thus, $\eps$ here directly involves the observed data and is part of the measurement wedge. There is no prediction wedge because an analyst does not contemplate choices in new situations.

We first formalize computation of the bounds before providing additional interpretation. We show the bounds can be calculated as a linear program. An explicit description is relegated to Proposition~\ref{prop:lputility} in Appendix~\ref{a:mainproofs}.
\begin{prop} \label{prop:utilitylp}
For a dataset $\{ (x^t, p^t) \}_{t = 1}^T$, let $\eps \geq \eps^*$. If $\xb$ is in the dataset, i.e. $\xb = x^S$ for some $S \in \{1, \ldots, T \}$, then $\overline{u}(\xa, \xb, \eps)$ is finite and can be calculated as a linear program. If $\xa$ is in the dataset, i.e. $\xa = x^F$ for some $F \in \{1, \ldots, T \}$, then $\underline{u}(\xa, \xb, \eps)$ is finite and can be calculated as a linear program.

Under Assumption~\ref{assm:prime} ($\eps = \eps^*$), these bounds cannot be improved.
\end{prop}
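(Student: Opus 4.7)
The plan is to recast both extrema as finite-dimensional linear programs in the Afriat tradition. Introduce auxiliary variables $u_1,\ldots,u_T$ (standing for $u(x^t)$), together with $u_a$ for $u(\xa)$ and $u_b$ for $u(\xb)$ when these bundles are not already in the dataset, and take the objective to be $u_a - u_b$, maximized for $\overline{u}$ and minimized for $\underline{u}$. The LP constraints are exactly those obtained by applying the $\eps$-rationalization condition at each observation $t$ with $x$ ranging over the designated points $\{x^1,\ldots,x^T,\xa,\xb\}$, namely
\[
u_s \le u_t + p^t\cdot(x^s - x^t) + \eps,\quad u_a \le u_t + p^t\cdot(\xa - x^t) + \eps,\quad u_b \le u_t + p^t\cdot(\xb - x^t) + \eps
\]
for all $s,t \in \{1,\ldots,T\}$. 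These are necessary for every $u$ that $\eps$-rationalizes the data, so the LP value upper-bounds $\overline{u}(\xa,\xb,\eps)$ and lower-bounds $\underline{u}(\xa,\xb,\eps)$.

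For the reverse inequality, I would construct, from any feasible LP solution $(u_1,\ldots,u_T,u_a,u_b)$, an explicit $\eps$-rationalizing utility function attaining those values. Set $u(x^t) = u_t$, $u(\xa) = u_a$, $u(\xb) = u_b$, and for every other $x \in \mathbb{R}_+^K$ define
\[
u(x) = \min_{s = 1,\ldots,T}\bigl\{u_s + p^s\cdot(x - x^s) + \eps\bigr\}.
\]
Verifying $\eps$-rationalization at observation $t$ reduces to checking $u(x) \le u_t + p^t\cdot(x - x^t) + \eps$ for all $x$, which at the finitely many prescribed points is exactly an LP constraint and at every other $x$ is the defining minimum. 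Hence the LP optimum is attained and equals the corresponding bound, which also delivers sharpness under Assumption~\ref{assm:prime}, since the construction exhibits a utility function that $\eps^*$-rationalizes the data and realizes the bound exactly.

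Finiteness is obtained by exploiting translation-invariance of the objective. When $\xb = x^S$ the objective is $u_a - u_S$; normalizing $u_S = 0$, the LP constraint corresponding to $s = S$ on $u_a$ yields $u_a \le p^S\cdot(\xa - x^S) + \eps$, capping the objective above, while the internal constraints (feasible because $\eps \ge \eps^*$) keep the remaining $u_t$'s in a bounded region relative to $u_S = 0$. A symmetric argument with $\xa = x^F$ and objective $u_F - u_b$ handles $\underline{u}$.

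The principal obstacle is the extension step: the minimum envelope $\min_s\{u_s + p^s\cdot(x - x^s) + \eps\}$ evaluated at $x^t$ can exceed $u_t$ by up to $\eps$, so it cannot serve as $u(x^t)$ if we want $u(x^t) = u_t$ exactly. The resolution is to prescribe $u$ directly at the finitely many designated points and use the envelope only elsewhere, so that the envelope provides an \emph{upper} bound on $u$ at a generic $x$ while the LP inequalities guarantee the prescribed values never violate approximate optimization at any observation. This is also what makes the pairwise inequalities listed above sufficient, so no longer cycle-style inequalities are needed in the LP.
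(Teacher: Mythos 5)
Your proof is correct and arrives at the same Afriat-style linear program as the paper (Proposition~\ref{prop:lputility}(i) and (iv)), but the attainment step is handled by a genuinely different construction. The paper builds a single candidate utility anchored at $x^S$ --- the minimum over cycle-free sequences starting at $S$ of accumulated expenditure terms plus $M\eps$, with the value at $x^S$ deliberately forced to $0$ rather than the envelope value --- and shows this one function both $\eps$-rationalizes the data and achieves the optimum; as a by-product this yields the closed-form sequence characterization of $\overline{u}$ (Proposition~\ref{prop:ubounds}) and the monotonicity/continuity shape restrictions used later. You instead extend an \emph{arbitrary} feasible LP point via the one-step lower envelope $\min_{s}\{u_s + p^s\cdot(x-x^s)+\eps\}$, prescribing the values directly at the finitely many designated bundles; your remark that the envelope must not be used at the data points themselves, since it can exceed $u_t$ by up to $\eps$, is precisely the same discontinuity device the paper exploits by setting $\tilde{u}(x^S)=0$. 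Your route is more modular and fully suffices for the LP claim, finiteness (the normalization $u_S=0$ plus the $r=S$ constraint capping $u_a$ is exactly right), and sharpness under Assumption~\ref{assm:prime}, since each feasible point is realized by an actual $\eps^*$-rationalizing utility; what it does not deliver is the explicit sequence formula the paper needs elsewhere. One point to make explicit rather than parenthetical: when $\xa$ coincides with some $x^r$ you must identify $u_a$ with $u_r$ (the paper imposes this as an equality constraint), otherwise the LP can overstate the bound by up to $\eps$ and the prescribed ``function'' would be ill-defined.
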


Note that the set
\[
\left\{u  \mid u \text{ } \eps-\text{rationalizes } \{ (x^t, p^t ) \}_{t = 1}^T \right\}
\]
is convex in the sense that if each $u^a, u^b$ $\eps$-rationalize the dataset, then $\alpha u^a + (1 - \alpha)u^b$ does for $\alpha \in [0,1]$. This follows from inspecting inequalities such as
\[
u(x^t) - p^t \cdot x^t \geq u(x) - p \cdot x - \eps
\]
that define $\eps$-rationalizability by a utility function $u$. This means that any value between $\underline{u}(\xa, \xb, \eps)$ and $\overline{u}(\xa, \xb, \eps)$ can be attained.

To gain further intuition how data bound utility differences, we provide an analytical characterization. This characterization builds on inequalities such as (\ref{eq:utilityineq}) above, yet uses longer sequences (rather than just pairs) of observations to describe the tightest possible bounds. This parallels analysis of counterfactuals, where restrictions other than the law of demand arise by considering sequences of observations.
\begin{prop} \label{prop:ubounds}
For a dataset $\{ (x^t, p^t) \}_{t = 1}^T$, let $\eps \geq \eps^*$. If $\xb$ is in the dataset, i.e. $\xb = x^S$ for some $S \in \{1, \ldots, T \}$, then for any $\xa \in \mathbb{R}^K_+$ with $\xa \neq x^S$, the upper bound on utility differences is given by
\[
\overline{u}(\xa, x^S, \eps) = \min_{\sigma \in \Sigma_{S}} \left\{ p^{\sigma(M)}\cdot(\xa-x^{\sigma(M)}) + \sum_{m=1}^{M-1}p^{\sigma(m)}\cdot(x^{\sigma(m+1)}-x^{\sigma(m)}) + M\eps \right\},
\] 
where $\Sigma_S$ is the set of sequences that start with $\sigma(1)=S$, have no cycles, and have length at least $M\ge 1$. Moreover, the function $\overline{u}$ is strictly increasing and continuous in $(\xa,\eps)$ over the region that satisfies $\eps \geq \eps^*$ and excludes $\xa = x^S$.
\end{prop}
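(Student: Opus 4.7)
The plan is to characterize $\overline{u}(\xa, x^S, \eps)$ as a combinatorial minimum over chains of observations via an Afriat-style argument adapted to the $\eps$-rationalization inequalities, and then read off the shape properties directly from the resulting finite-min representation.

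\textbf{Step 1: Upper bound from chained inequalities.} First, fix any $u$ that $\eps$-rationalizes the dataset and any $\sigma \in \Sigma_S$ of length $M$. For each $m \in \{1, \ldots, M-1\}$, apply the rationalization inequality at observation $\sigma(m)$ with test bundle $x^{\sigma(m+1)}$ to get $u(x^{\sigma(m+1)}) - u(x^{\sigma(m)}) \le p^{\sigma(m)} \cdot (x^{\sigma(m+1)} - x^{\sigma(m)}) + \eps$, and apply it once more at $\sigma(M)$ with test bundle $\xa$. Summing the $M$ inequalities, the LHS telescopes to $u(\xa) - u(x^S)$ (using $\sigma(1)=S$), and the RHS is exactly the expression inside the stated minimum. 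Taking the supremum over admissible $u$ and then the minimum over $\sigma$ yields the $\le$ direction. Since no-cycle sequences have distinct indices in $\{1, \ldots, T\}$, the class $\Sigma_S$ is finite and the minimum is attained.

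\textbf{Step 2: Tightness via an explicit construction.} For the matching lower bound I would exhibit a utility function $u^*$ that $\eps$-rationalizes the data and attains the bound. Set $u^*(x^S) := 0$ and, for $x \neq x^S$,
\[
u^*(x) \;=\; \min_{\sigma \in \Sigma_S} \left\{ p^{\sigma(M)}\cdot (x - x^{\sigma(M)}) + \sum_{m=1}^{M-1} p^{\sigma(m)}\cdot(x^{\sigma(m+1)} - x^{\sigma(m)}) + M\eps \right\}.
\]
By construction $u^*(\xa) - u^*(x^S)$ equals the RHS of the proposition. To verify $u^*$ $\eps$-rationalizes the data, one must show $u^*(x) \le u^*(x^t) + p^t \cdot (x - x^t) + \eps$ for every $t$ and every $x$. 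Let $\sigma^\star$ realize the minimum defining $u^*(x^t)$. If $t$ does not appear in $\sigma^\star$, appending $t$ to $\sigma^\star$ produces a sequence in $\Sigma_S$ whose chain value at $x$ is, by a telescoping identity, exactly $u^*(x^t) + p^t\cdot(x - x^t) + \eps$, so the required inequality follows. If $t$ already occurs in $\sigma^\star$, truncating $\sigma^\star$ at the occurrence of $t$ introduces a closed cycle from $t$ back to $t$; cyclic consistency for $\eps$-rationalizability (which holds because $\eps \ge \eps^*$, cf.\ Proposition~\ref{prop:minqeps}) bounds that cycle sum below by $-N\eps$ for a cycle of length $N$, which exactly absorbs the discrepancy. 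The same argument handles the boundary cases $t = S$ and $x = x^S$: cycles through $S$ force $V_\sigma(x^S) \ge 0$, so declaring $u^*(x^S) = 0$ is consistent.

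\textbf{Step 3: Monotonicity and continuity.} With the finite-min representation in hand, each term inside the min is affine in $(\xa, \eps)$ with coefficient $p^{\sigma(M)} \in \mathbb{R}^K_{++}$ on $\xa$ and coefficient $M \ge 1$ on $\eps$. Each term is therefore jointly continuous and strictly increasing in every component of $(\xa, \eps)$. Joint continuity of $\overline{u}$ follows because a minimum of finitely many jointly continuous functions is jointly continuous. For strict monotonicity, if $(\xa', \eps')$ dominates $(\xa, \eps)$ componentwise with at least one strict inequality, let $\tau$ be the argmin at $(\xa', \eps')$; then $\overline{u}(\xa', x^S, \eps') = f_\tau(\xa', \eps') > f_\tau(\xa, \eps) \ge \overline{u}(\xa, x^S, \eps)$.

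\textbf{Main obstacle.} The substantive step is Step 2, and within it the case where the observation $t$ already appears in the argmin sequence $\sigma^\star$. Here the verification leans on the cyclic consistency characterization of $\eps$-rationalizability, which is the Afriat-style combinatorial condition equivalent to $\eps \ge \eps^*$; all other pieces are bookkeeping with telescoping sums.
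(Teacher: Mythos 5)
Your proposal is correct and follows essentially the same route as the paper's proof of Proposition~\ref{prop:lputility}: chain the $\eps$-rationalization inequalities along sequences in $\Sigma_S$ for the upper bound, exhibit the same minimizing utility function $u^*$ (with $u^*(x^S)=0$) for tightness, and read off strict monotonicity and continuity from the finite minimum of affine functions. Your explicit treatment of the case where $t$ already appears in the argmin sequence, via the cyclic-consistency inequality, is in fact slightly more careful than the paper's appending argument, but it is the same underlying mechanism.
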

The sums inside the minimum are closely related to sums discussed in Proposition~\ref{prop:Xdetail} for counterfactuals. The sums differ because Proposition~\ref{prop:Xdetail} constructs sequences making a cycle (to remove the unknown utility numbers). In contrast, Proposition~\ref{prop:ubounds} considers sequences that do not make a cycle because the goal is to examine differences of utility numbers.

Continuity and concavity fail at $\xa = x^S$ (when $\eps > 0$) because the difference in utility is zero when the quantity is the same. Since $\overline{u}(\xb,\xa,\eps) = -\underline{u}(\xa,\xb,\eps)$, analogous results hold for the lower bound $\underline{u}(x^F, \xa,\eps)$ if the first argument $x^F$ is in the dataset. See Proposition~\ref{prop:lputility} for formal results.

An important feature for practical application is that the bounds on utilities are trivial unless an appropriate quantity is in the dataset. We formalize this as follows.
\begin{prop} \label{prop:unboundedu} 
For a dataset $\{ (x^t, p^t) \}_{t = 1}^T$, let $\eps \geq \eps^*$. If $\xb$ is not in the dataset, i.e. $\xb \neq x^r$ for every $r \in \{1, \ldots, T \}$, then
\[
\overline{u}(\xa, \xb, \eps)  = \infty.
\]
If $\xa$ is not in the dataset, then
\[
\underline{u}(\xa, \xb, \eps) = -\infty.
\]
\end{prop}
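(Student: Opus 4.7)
The plan is to show that the $\eps$-rationalization inequalities impose only upper bounds on a candidate utility function $u$ at points outside the dataset, and therefore $u$ may be pushed arbitrarily far downward at any such point. Concretely, I will start from an existing $\eps$-rationalization and surgically lower its value at $\xb$ to drive $u(\xa)-u(\xb)\to\infty$.

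\textbf{Step 1 (Existence of a baseline rationalization).} Since $\eps\ge\eps^*$, Proposition~\ref{prop:nonempty} furnishes a utility function $u^*:\mathbb{R}^K_+\to\mathbb{R}$ that $\eps$-rationalizes $\{(x^t,p^t)\}_{t=1}^T$.

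\textbf{Step 2 (Perturbation).} Assume $\xa\ne\xb$; the case $\xa=\xb$ yields trivially $u(\xa)-u(\xb)=0$ and is excluded by the claim. For any $L\in\mathbb{R}$, define
\[
u_L(x)=\begin{cases} u^*(x), & x\ne \xb,\\ \min\{L,\,u^*(\xb)\}, & x=\xb.\end{cases}
\]
Because $\xb$ is by hypothesis not in the dataset, $u_L(x^t)=u^*(x^t)$ for every $t$, and $u_L(\xa)=u^*(\xa)$ since $\xa\ne\xb$.

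\textbf{Step 3 (Verification that $u_L$ still $\eps$-rationalizes).} Fix $t\in\{1,\dots,T\}$ and $x\in\mathbb{R}^K_+$. If $x\ne\xb$, the inequality $u_L(x^t)-p^t\cdot x^t\ge u_L(x)-p^t\cdot x-\eps$ reduces to the analogous inequality for $u^*$, which holds by assumption. If $x=\xb$, then $u_L(x)\le u^*(\xb)$, so the right-hand side only decreases relative to $u^*$, and the inequality is preserved. Thus $u_L$ is a valid $\eps$-rationalization for every $L$.

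\textbf{Step 4 (Taking $L\to-\infty$ and the symmetric statement).} Because $u_L(\xa)=u^*(\xa)$ is a fixed finite number,
\[
\overline{u}(\xa,\xb,\eps)\ge u_L(\xa)-u_L(\xb)= u^*(\xa)-\min\{L,u^*(\xb)\}\ \longrightarrow\ \infty
\]
as $L\to-\infty$, proving the first assertion. The second assertion follows from the identity $\underline{u}(\xa,\xb,\eps)=-\overline{u}(\xb,\xa,\eps)$ (which holds because the set of $\eps$-rationalizing utility functions is invariant under negation-of-perturbation arguments of the same form) together with the first assertion applied with the roles of $\xa$ and $\xb$ swapped: when $\xa$ is not in the dataset, $\overline{u}(\xb,\xa,\eps)=\infty$, hence $\underline{u}(\xa,\xb,\eps)=-\infty$.

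\textbf{Main obstacle.} The only subtlety is confirming that the pointwise modification at $\xb$ does not break any of the infinitely many maximization inequalities defining $\eps$-rationalization; this is handled in Step 3 by splitting the universally quantified variable $x$ into the cases $x=\xb$ and $x\ne\xb$ and noting that lowering $u$ at $\xb$ only relaxes the relevant constraint.
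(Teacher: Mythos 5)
Your proposal is correct and follows essentially the same route as the paper: take any $\eps$-rationalizing utility, lower its value at the off-dataset point (which only relaxes the rationalization inequalities since that point never appears on the left-hand side), and let the perturbation go to $-\infty$; the lower-bound statement then follows from $\underline{u}(\xa,\xb,\eps)=-\overline{u}(\xb,\xa,\eps)$. Your Step 3 just spells out the verification the paper leaves implicit.
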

Recall that Proposition~\ref{prop:ubounds} shows that $\overline{u}$ is strictly increasing and continuous over a region. Thus, the upper bound on utility differences has some shape restrictions like a ``nice'' utility function. Despite this, the bound is not concave/continuous in the first argument at $\xa = x^S$ (when $\eps > 0$). This means that imposing concavity/continuity can potentially tighten the bounds.\footnote{Continuity and concavity do not tighten the bounds when $\eps = 0$ because the upper bound $\overline{u}$ is then continuous and concave for all values of $\xa$. See the proof of Proposition~\ref{prop:lputility} for more details.} Imposing these (or other) shape restrictions is important if one wishes to bound utility when neither quantity is in the dataset, since from Proposition~\ref{prop:unboundedu} we know the bounds are trivial without more structure.

\subsection{Recoverability of Approximate Indirect Utility} \label{sec:indirectbounds}

We now turn to welfare analysis concerning price changes. Here both the measurement and prediction wedge play a role. Recall that a ``measurement wedge'' shows up for bounds on the utility over bundles as in Section~\ref{sec:utilitybounds} since observations may not \textit{exactly} maximize a quasilinear utility function. Here the prediction wedge also arises when $\eps^* > 0$ because even when we know the utility function, we do not know which approximately-optimal choice would be made at a new price. 

We first discuss the indirect utility, which is the standard welfare object for the exact quasilinear model. Later we introduce the approximate indirect utility to account for the fact that an individual does not exactly optimize. The indirect utility function associated with the utility function $u : \mathbb{R}^K_+ \rightarrow \mathbb{R}$ is given by
\[
V_u(p) = \sup_{x \in \mathbb{R}^K_+} u(x) - x \cdot p.
\]
Since the researcher does not know the individual's utility \textit{a priori}, we consider indirect utility associated with candidate utility functions.

We show how indirect utility interacts with the measurement wedge. If $x^t$ is within $\eps$ of the maximum utility possible at price $p^t$, then we can write
\[
V_u(p^t) \leq u(x^t) - p^t \cdot x^t + \eps.
\]
The definition of the indirect utility yields for arbitrary $p \in \mathbb{R}^K_{++}$, the inequality
\[
V_u(p) \geq u(x^t) - p \cdot x^t.
\]
Differencing these, we obtain
\begin{equation} \label{eq:indineq}
V_u(p^t) - V_u(p) \leq x^t \cdot (p - p^t) + \eps.
\end{equation}
Here, $\eps$ arises because the observed choices need not be exact maximizers and thus is part of the measurement wedge. With a restriction on the magnitude of $\eps$, we can use observations of $x^t$ and $p^t$ to bound differences in indirect utility.

We now introduce the prediction wedge. This wedge arises because \textit{raw} differences in indirect utility are not the natural welfare object in our setting for a price change because we focus on ex ante policy evaluation. Instead, we take into account that when $\eps > 0$, an individual may choose bundles with different utility when facing the same prices. This is because we assume an individual satisfices. 

The utility the individual attains for a given price and choice of consumption bundle is the \textit{approximate indirect utility}. For observation $t \in \{1,\ldots,T\}$, the approximate indirect utility is
\[
u(x^t) - p^t \cdot x^t.
\]
At price $p$, the approximate indirect utility is restricted to be somewhere in the interval
\[
[V_u(p) - \eps, V_u(p)].
\]
In fact, (weakly) further restrictions take into account that the approximate indirect utility attained is bounded below by
\[
\underline{V}_{u,A}(p,\eps) = \inf_{x \in \mathbb{R}_+^K} u(x) - p \cdot x \qquad \text{s.t. } \quad u(x) - p \cdot x \geq V_u(p) - \eps,
\]
while the upper bound is the indirect utility. The lower bound on approximate indirect utility is the lower bound $\underline{V}_{u,A}(p,\eps)$, while the upper bound on approximate indirect utility is the upper bound $\overline{V}_{u,A}(p,\eps) = V_u(p)$.

Now suppose we wish to bound the change in approximate indirect utility between prices $\pb$ and $\pa$. If the utility $u$ and level of satisficing $\eps$ were known, then the welfare bounds would be
\[
[\underline{V}_{u,A}(\pa,\eps) - \overline{V}_{u,A}(\pb,\eps), \overline{V}_{u,A}(\pa,\eps) - \underline{V}_{u,A}(\pb,\eps)].
\]
Fixing $u$, this interval becomes wider when $\eps$ increases. In general, $\eps$ controls the prediction wedge, which arises even if we knew $u$ because we would not know what is chosen by the satisficer.

Since we do not know the utility function \textit{a priori}, we consider bounds involving the smallest and largest changes in approximate indirect utility among all utility functions that $\eps$-rationalize the dataset:
\begin{align*}
\overline{V}(\pa,\pb,\eps) & = \sup_{ \left\{u \mid u \text{ } \eps-\text{rationalizes } \{ (x^t, p^t ) \}_{t = 1}^T \right\}}  \left\{ \overline{V}_{u,A}(\pa,\eps) - \underline{V}_{u,A}(\pb,\eps) \right\} \\
\underline{V}(\pa,\pb,\eps) & = \inf_{ \left\{u \mid u \text{ } \eps-\text{rationalizes } \{ (x^t, p^t ) \}_{t = 1}^T \right\}}  \left\{ \underline{V}_{u,A}(\pa,\eps) - \overline{V}_{u,A}(\pb,\eps) \right\}. \footnotemark
\end{align*}

\footnotetext{Formally, we take the supremum over $u$ such that $\overline{V}_{u,A}(\pb,\eps)$ is not $\infty$, and the infimum over $u$ such that $\overline{V}_{u,A}(\pa,\eps)$ is not $\infty$.}

These bounds incorporate both the measurement and prediction wedges. The measurement wedge shows up when considering $u$ that $\eps$-rationalize the data, while the prediction wedge arises when defining the approximate indirect utility. We use the same value $\eps$ for both since we maintain Assumption~\ref{assm:prime}.

These bounds can inform a researcher about changes in welfare even in the presence of satisficing. If $\underline{V}(\pa,\pb,\eps) > 0$, then we can conclude that given a price change from $\pb$ to $\pa$ the individual is better off at $\pa$. If $\overline{V}(\pa,\pb,\eps) < 0$, then the price change from $\pb$ to $\pa$ makes the individual worse off. In contrast, ambiguity arises when $\underline{V}(\pa,\pb,\eps)<0$ and $\overline{V}(\pa,\pb,\eps) > 0$. In this case an individual may be better or worse given the price change, but the data alone are inconclusive.  

We now state a computational result for the bounds. A specific description of the linear program is given in Proposition~\ref{prop:lpwelfare} in Appendix~\ref{a:mainproofs}.
\begin{prop}\label{prop:welfarecomp}
For a dataset $\{(x^t,p^t)\}_{t=1}^T$, let $\eps \geq \eps^*$. The bounds on approximate indirect utility $
\overline{V}(\pa,\pb,\eps)$ and $\underline{V}(\pa,\pb,\eps)$ can each be computed as a linear program whenever they are finite.

Under Assumption~\ref{assm:prime} $(\eps = \eps^*)$, these bounds cannot be improved.
\end{prop}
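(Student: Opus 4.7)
The plan is to reduce the infinite-dimensional optimization over utility functions in the definitions of $\overline{V}$ and $\underline{V}$ to a finite-dimensional linear program, by parametrizing a candidate utility function via its values at the observed bundles $x^1,\ldots,x^T$ together with two witness bundles $x^a$ and $x^b$ standing in for the sup and inf that appear in $V_u$ and $\underline{V}_{u,A}$.

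For $\overline{V}(\pa,\pb,\eps)$, I would introduce real variables $u^1,\ldots,u^T,u^a,u^b$ and vector variables $x^a,x^b \in \mathbb{R}_+^K$. The constraints are the pairwise $\eps$-inequalities coming from approximate rationalization at each observed price $p^t$, comparing $x^t$ with every element of $\{x^s, x^a, x^b\}$, together with the $\eps$-approximate optimality of $x^b$ at $\pb$, which compares $x^b$ to each $x^s$ and to $x^a$. The objective is $\max \, (u^a - \pa \cdot x^a) - (u^b - \pb \cdot x^b)$. All constraints and the objective are linear in the variables since $p^1,\ldots,p^T,\pa,\pb$ and $\eps$ are fixed. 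The LP for $\underline{V}(\pa,\pb,\eps)$ is symmetric: $x^a$ now plays the constrained witness role (subject to $\eps$-approximate optimality at $\pa$), $x^b$ is free, and the same objective is minimized.

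To show the LP value equals $\overline{V}(\pa,\pb,\eps)$, the easy direction takes any $u$ that $\eps$-rationalizes the data and nearly attains the sup: set $u^t = u(x^t)$ and choose $x^a, x^b$ nearly attaining $V_u(\pa)$ and $\underline{V}_{u,A}(\pb,\eps)$ respectively, with $u^a = u(x^a)$ and $u^b = u(x^b)$; all LP constraints then follow from $\eps$-rationalization of $u$ and the $\eps$-approximate optimality of $x^b$ at $\pb$, and a limiting argument closes the gap. For the reverse direction, given LP-feasible $(u^t,u^a,u^b,x^a,x^b)$, I would construct a utility function via the Afriat-style envelope
\[
u(z) = \min_{\tau \in \{1,\ldots,T,\,b\}}\{u^\tau + p^\tau \cdot (z - x^\tau) + \eps\}\qquad \text{for } z \neq x^b,
\]
with the convention $p^b := \pb$, modified so that $u(x^b) = u^b$ exactly. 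The LP pairwise constraints imply $u$ $\eps$-rationalizes the data and $u(x^a) \geq u^a$, so $V_u(\pa) \geq u^a - \pa \cdot x^a$; the envelope entry at $\tau = b$ yields $V_u(\pb) \leq u^b - \pb \cdot x^b + \eps$, and combined with $u(x^b) = u^b$ this makes $x^b$ feasible for the infimum defining $\underline{V}_{u,A}(\pb,\eps)$ and gives $\underline{V}_{u,A}(\pb,\eps) \leq u^b - \pb \cdot x^b$. Together these yield the desired inequality on the objective.

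Sharpness under Assumption~\ref{assm:prime} is essentially by construction: with $\eps = \eps^*$ the sup and inf range over exactly the set of utility functions consistent with the observed data at the minimal level of approximation error that the assumption allows, so the bounds are the tightest the data permit. The main obstacle is the envelope construction: without the pointwise modification $u(x^b) := u^b$, one only obtains $u(x^b) \in [u^b, u^b + \eps]$ and a corresponding $\eps$-leak in the bound on $\underline{V}_{u,A}(\pb,\eps)$ that would break equality with the LP; the modification closes this gap but requires re-verifying that $\eps$-rationalization of the data and the upper bound $V_u(\pb) \leq u^b - \pb \cdot x^b + \eps$ are preserved, both of which follow from the LP pairwise constraints involving $(u^b, x^b, \pb)$.
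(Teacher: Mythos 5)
Your proposal is correct and follows essentially the same route as the paper's proof of this result (Proposition~\ref{prop:lpwelfare}): a finite linear program in Afriat-style utility levels at the observed bundles plus witness bundles for the two prices, an easy relaxation direction, and a hard direction via an envelope utility function with the pointwise modification at the $\pb$-witness so that it attains the lower approximate indirect utility, with sharpness following from monotonicity in $\eps$ and emptiness below $\eps^*$. The only (immaterial) differences are that the paper's single LP constrains both witnesses to be approximate optimizers at their respective prices, so the same program yields $\overline{V}$ as a maximum and $\underline{V}$ as a minimum, whereas you use two asymmetric programs with a free witness for the supremum defining $V_u$, and the paper builds the envelope from sequences over the augmented dataset rather than the one-step minimum over feasible utility numbers.
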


When $\eps = \eps^* = 0$, the approximate indirect utility equals the indirect utility, and these are the sharp bounds on consumer surplus with limited price variation. When we set $\eps = \eps^*$, these are the \textit{adaptive consumer surplus bounds}. These bounds may be used for arbitrary prices $\pa, \pb$, not only at prices in $\{ ( p^t ) \}_{t = 1}^T$. In particular, these bounds provide welfare bounds at new prices \textit{without} needing to first provide bounds on the quantities at the prices.

Recall (\ref{eq:indineq}) established for $p^t$ in the dataset,
\[
V_u(p^t) - V_u(p) \leq x^t \cdot (p - p^t) + \eps.
\]
This states that differences in indirect utility are bounded by changes in expenditure. Here, the change in expenditure involves keeping the quantity fixed and changing prices. We present lower and upper bounds on $\overline{V}$ that build on this inequality. To state the result, first suppose $\pa = p^S$ is in the dataset. Define
\[
h(\pb) = \min_{\sigma \in \Sigma_S} \left\{ x^{\sigma(M)} \cdot (\pb - p^{\sigma(M)}) + \sum_{m = 1}^{M-1} x^{\sigma(m)} \cdot (p^{\sigma(m+1)} - p^{\sigma(m)}) + M \eps \right\},
\]
where $\Sigma_S$ is the set of sequences that start with $\sigma(1)=S$, have no cycles, and have length at least $M\ge 1$.
\begin{prop} \label{prop:welfaresandwich}
For a dataset $\{ (x^t, p^t) \}_{t = 1}^T$, let $\eps \geq \eps^*$. If $\pa$ is in the dataset, i.e. $\pa = p^S$ for some $S \in \{1, \ldots, T \}$, then for $\pa \neq \pb$,
\[
h(\pb) - \eps \leq \overline{V}(\pa,\pb,\eps) \leq h(\pb) + \eps,
\]
and for $\pa = \pb$, $\overline{V}(\pa,\pb,\eps) = \eps$.
\end{prop}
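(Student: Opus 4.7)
The plan is to exploit the formal price-quantity duality between the utility chain used in Proposition~\ref{prop:ubounds} and the indirect-utility chain on the same data, where the driving single-step inequality becomes (\ref{eq:indineq}),
\[
V_u(p^t) - V_u(p) \le x^t \cdot (p - p^t) + \eps.
\]
Each side of the sandwich will be an analog of the corresponding direction in Proposition~\ref{prop:ubounds}, with the roles of $x$ and $p$ swapped.

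For the upper bound, I would fix any $\eps$-rationalizing $u$ and any $\sigma \in \Sigma_S$ of length $M$ with $\sigma(1)=S$. Applying (\ref{eq:indineq}) to each consecutive pair in the price chain $\pa = p^{\sigma(1)},\ p^{\sigma(2)},\ \ldots,\ p^{\sigma(M)},\ \pb$ and summing telescopically gives
\[
V_u(\pa) - V_u(\pb) \le x^{\sigma(M)} \cdot (\pb - p^{\sigma(M)}) + \sum_{m=1}^{M-1} x^{\sigma(m)} \cdot (p^{\sigma(m+1)} - p^{\sigma(m)}) + M\eps,
\]
so minimizing over $\sigma$ yields $V_u(\pa) - V_u(\pb) \le h(\pb)$. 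Combining this with the immediate inequality $\underline{V}_{u,A}(\pb,\eps) \ge V_u(\pb) - \eps$ and taking the sup over $u$ produces $\overline{V}(\pa,\pb,\eps) \le h(\pb) + \eps$.

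For the lower bound, the strategy is to exhibit a single $\eps$-rationalizing $u^*$ whose indirect-utility chain is nearly tight along a sequence $\sigma^*$ attaining the minimum in $h(\pb)$. Via an Afriat-style construction I would pick scalars $u^t$ solving the system $u^t - u^s \le p^s \cdot (x^t - x^s) + \eps$ for all $s,t$ (feasible because $\eps \ge \eps^*$, by the linear-programming characterization in Proposition~\ref{prop:minqeps}) with the inequalities along $\sigma^*$ binding, and then define $u^*(x) = \min_t[u^t + p^t \cdot (x - x^t)]$ with a small strictly-increasing correction as in the proof of Proposition~\ref{prop:nonempty} to ensure $V_{u^*}(\pb)$ is finite. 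A standard envelope calculation shows $u^*$ is concave and $\eps$-rationalizes the data, with $V_{u^*}(\pa) = u^S - p^S \cdot x^S$ attained at $x = x^S$. Telescoping the binding Afriat inequalities along $\sigma^*$ together with one application of (\ref{eq:indineq}) on the last link $p^{\sigma^*(M)} \to \pb$ then lower-bounds $V_{u^*}(\pa) - V_{u^*}(\pb)$ by $h(\pb) - \eps$ (the single $\eps$ of slack is absorbed by the one nontight link at $\pb$). Since $\underline{V}_{u^*,A}(\pb,\eps) \le V_{u^*}(\pb)$, we conclude $\overline{V}(\pa,\pb,\eps) \ge h(\pb) - \eps$.

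The main obstacle is constructing Afriat numbers that are simultaneously globally feasible and tight along $\sigma^*$; I view this as a shortest-path problem on the graph with vertex set $\{1,\ldots,T\}$ and edge weights $p^s \cdot (x^t - x^s) + \eps$. Global feasibility is equivalent to the absence of negative-weight cycles, which is precisely $\eps \ge \eps^*$ by Proposition~\ref{prop:minqeps}, and tightness along $\sigma^*$ follows because that sequence, extended by an auxiliary edge to $\pb$ of weight $x^{\sigma^*(M)} \cdot (\pb - p^{\sigma^*(M)}) + \eps$, is by definition a shortest path from $S$ to the augmented node for $\pb$. The boundary case $\pa = \pb$ is handled separately: the upper-bound argument combined with $V_u(\pa) = V_u(\pb)$ gives $\overline{V}(\pa,\pa,\eps) \le \eps$, and picking any $u$ from Proposition~\ref{prop:nonempty} whose $\eps$-near-optimal set at $\pa$ contains a strictly suboptimal point delivers $\underline{V}_{u,A}(\pa,\eps) = V_u(\pa) - \eps$ and hence equality.
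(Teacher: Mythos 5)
Your Step 1 is essentially the paper's: telescope inequality (\ref{eq:indineq}) along a chain starting at $p^S$, minimize over chains to get $V_u(\pa)-V_u(\pb)\le h(\pb)$, and absorb one extra $\eps$ from $\underline{V}_{u,A}(\pb,\eps)\ge V_u(\pb)-\eps$. The gap is in Step 2. To show $\overline{V}(\pa,\pb,\eps)\ge h(\pb)-\eps$ you must exhibit a rationalizing $u^*$ whose indirect-utility difference is at least $h(\pb)-\eps$, and your certificate --- telescoping binding Afriat inequalities plus ``one application of (\ref{eq:indineq}) on the last link'' --- cannot deliver this: (\ref{eq:indineq}) only ever bounds $V_u(p^{\sigma(M)})-V_u(\pb)$ from \emph{above}, so chaining it can never certify that the difference is \emph{large}. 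What you actually need is a quantitative upper bound on $V_{u^*}(\pb)$, a supremum over all of $\mathbb{R}^K_+$, and the Afriat utility $u^*(x)=\min_t[u^t+p^t\cdot(x-x^t)]$ does not provide it; its indirect utility at an off-data price is governed by convex combinations of the $p^t$, not by the chains defining $h$. A concrete failure: take $K=T=1$, $(x^1,p^1)=(1,1)$, $\eps=\eps^*=0$, $\pa=p^1=1$, $\pb=2$. Then $h(\pb)=1$ and the proposition asserts $\overline{V}(1,2,0)=1$, attained e.g.\ by $u(x)=\min\{2x,2\}$ (demand stays at $1$ for all prices up to $2$). But your $u^*$ is affine with slope $p^1=1$, so $V_{u^*}(1)=V_{u^*}(2)$ and the difference is $0$; the bounded monotone correction from Proposition~\ref{prop:nonempty} does not change this. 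There is also a conflation of two graphs in your tightness argument: $\sigma^*$ minimizes sums of terms $x^{\sigma(m)}\cdot(p^{\sigma(m+1)}-p^{\sigma(m)})$, whereas the Afriat system you propose to make binding has edge weights $p^{\sigma(m)}\cdot(x^{\sigma(m+1)}-x^{\sigma(m)})$.

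The missing idea is to run the Afriat-style construction in \emph{price space} rather than quantity space. The paper defines $V'(p)$ as the negative of the $h$-type minimum evaluated at a generic $p$, so that $V'(\pb)=-h(\pb)$ and $-\eps\le V'(p^S)\le 0$ hold by construction, verifies that (a truncation of) $V'$ satisfies the dual inequalities $V(p)\ge V(p^t)-x^t\cdot(p-p^t)-\eps$, and then invokes the conjugate-duality results of Appendix~\ref{supp:duality} (Proposition~\ref{slem:duality}(ii) and Lemma~\ref{slem:doubleconj}) to produce a utility $u_V$ that $\eps$-rationalizes the data \emph{and} whose exact indirect utility equals the constructed $V$; the target difference $V(p^S)-V(\pb)\ge h(\pb)-\eps$ is then read off directly. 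Your handling of the boundary case $\pa=\pb$ is fine, but Step 2 as written does not go through.
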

This result is established by leveraging duality results we present in Appendix~\ref{supp:duality}. Analogous results exist for $\underline{V}$ because $\overline{V}(\pa,\pb,\eps) = -\underline{V}(\pb,\pa,\eps)$, and are omitted for brevity.
 
We reiterate that Proposition~\ref{prop:welfarecomp} describes that $\overline{V}$ can be computed exactly as a linear program. The goal of Proposition~\ref{prop:welfaresandwich} is to make this process less of a ``black box.'' Note that when $\eps = 0$, the lower and upper bounds coincide and we characterize $\overline{V}(\pa,\pb,0)$. We recognize $h$ as a function closely related to the  construction of the Riemann integral, since it computes the sum of the area of certain rectangles. We may view $h$ as a ``discrete'' analogue of the consumer surplus formula, which states that differences in indirect utility are the area of a demand function. In fact, this integration intuition can be formalized in the special case of a single good $(K = 1)$, when $\eps = 0$.
\begin{prop} \label{prop:intformula}
Suppose there is a single good ($K = 1$), the dataset $\{ (x^t, p^t \}_{t = 1}^T$ is exactly consistent with quasilinear utility ($\eps^* = 0$), and we set $\eps = 0$. If $\pa > \min \{p_1, \ldots, p_T, \pb\}$, then
\[
\overline{V}(\pa,\pb,0) = \int^1_0 \overline{x}_1(t \pa + (1 - t) \pb, 0) (\pb_1 - \pa_1) dt.
\]
\end{prop}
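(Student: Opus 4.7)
The plan is to reduce the supremum over rationalizing utilities to an integral of the pointwise upper envelope of demand via the envelope theorem for quasilinear indirect utility, and then exhibit an extremal concave utility that attains this envelope almost everywhere.

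First, I would observe that for any $u$ that $0$-rationalizes the dataset, the indirect utility $V_u(p) = \sup_x \{u(x) - p x\}$ is convex and nonincreasing in $p$ as a supremum of affine functions, hence absolutely continuous on any compact interval where it is finite, with $V_u'(p) = -x_u(p)$ almost everywhere for a measurable selection $x_u$ from the demand correspondence. The fundamental theorem of calculus then gives
\[
V_u(\pa) - V_u(\pb) \;=\; -\int_{\pb}^{\pa} x_u(p)\, dp \;=\; \int_0^1 x_u(t\pa + (1-t)\pb)(\pb - \pa)\, dt
\]
after the change of variable $p = t\pa + (1-t)\pb$.

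Second, for any such $u$ and any $p$, the augmented pair $(x_u(p), p)$ is $0$-rationalized by the same $u$, so $x_u(p) \in X_1(p, D, 0)$ and hence $x_u(p) \le \overline{x}_1(p, 0)$. Combining with the integral representation and taking the supremum over $u$ yields
\[
\overline{V}(\pa, \pb, 0) \;\le\; \int_0^1 \overline{x}_1(t\pa + (1-t)\pb, 0)(\pb - \pa)\, dt.
\]
The hypothesis $\pa > \min\{p_1, \ldots, p_T, \pb\}$ is used to keep the integrand finite almost everywhere, since outside the convex hull of observed prices $\overline{x}_1$ can blow up from below.

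Third, I would construct a concave piecewise-linear $u^{\ast}$ whose demand realizes $\overline{x}_1(\cdot, 0)$ almost everywhere. Because $\eps^{\ast} = 0$ and $K = 1$, Proposition~\ref{prop:singleprice} (with $\eps = 0$) gives the exact law of demand, so after sorting we may assume $\xi_1 < \cdots < \xi_T$ with corresponding prices $\pi_1 > \cdots > \pi_T$. Define $u^{\ast}$ to be piecewise linear with kinks at $\xi_1, \ldots, \xi_T$ and slope $\pi_i$ on $(\xi_i, \xi_{i+1})$, extended by a suitably steep slope for $x < \xi_1$ and a suitably flat slope for $x > \xi_T$ (chosen so $V_{u^{\ast}}(\pa)$ is finite, using the hypothesis on $\pa$). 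A direct verification shows that each observation is rationalized by $u^{\ast}$ and that the demand correspondence of $u^{\ast}$ at any $p$ away from observed prices coincides with the step function $\overline{x}_1(\cdot, 0)$. Substituting $u^{\ast}$ into the integral of the first step matches the upper bound from the second step, yielding equality.

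The main obstacle is the third step: care is needed to extend $u^{\ast}$ at the endpoints so it remains concave, strictly increasing, rationalizes each data point, and gives a finite indirect utility at $\pa$, which is precisely where the condition $\pa > \min\{p_1, \ldots, p_T, \pb\}$ enters. An alternative route avoiding the explicit construction is to augment the dataset with $(\overline{x}_1(\pa, 0), \pa)$, which preserves $\eps^{\ast} = 0$, and then apply Proposition~\ref{prop:welfaresandwich} at $\eps = 0$ where the sandwich bounds coincide: the resulting exact value $h(\pb)$ is a discrete Riemann-type sum whose terms, telescoped between consecutive observed prices, equal the Riemann integral of the step function $\overline{x}_1$.
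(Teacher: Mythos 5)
Your first two steps coincide with the paper's own argument for the inequality $\overline{V}(\pa,\pb,0) \le \int_0^1 \overline{x}_1(t\pa+(1-t)\pb,0)(\pb-\pa)\,dt$: the paper likewise takes a selection $\tilde{x}_{V_u}(p)\in\partial V_u(p)$, invokes the integral representation of differences of convex functions (Rockafellar, Corollary 24.2.1), and dominates the selection pointwise by $\overline{x}_1$. Where you genuinely diverge is the attainment step. The paper shows that the graph of $\overline{x}_1$ together with the data is a cyclically monotone decreasing set and invokes Rockafellar's Theorem 24.3 to produce a convex potential whose conjugate is a rationalizing utility with demand $\overline{x}_1$; you instead build that utility explicitly as the piecewise-linear concave function with kinks at the sorted quantities and slopes equal to the corresponding prices. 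Your construction does work for $K=1$: at a price strictly between two consecutive observed prices the unique maximizer of $u^{\ast}(x)-px$ is the kink $\xi_{i+1}$, which equals $\overline{x}_1$ there because in one dimension the binding constraints in Proposition~\ref{prop:Xdetail} reduce to the pairwise law-of-demand inequalities against the two adjacent observations. This is more elementary and self-contained; the paper's abstract route buys a construction that needs no sorting, handles ties in prices or quantities automatically, and is the version that would survive beyond $K=1$. Two details you should tidy: your sorting step presumes the data points are distinct and strictly ordered (ties need a word), and the steep slope below $\xi_1$ must exceed every price on the segment from $\pb$ to $\pa$, not merely $\max_t p^t$, for the demand of $u^{\ast}$ to agree with $\overline{x}_1$ on the whole integration range. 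Your alternative route through Proposition~\ref{prop:welfaresandwich} at $\eps=0$ is also viable and close in spirit to the paper's remark that $h$ is a discrete consumer-surplus formula, but showing that the minimizing sequence in $h(\pb)$ telescopes into the Riemann sum of the step function would itself require an argument of roughly the same weight as the explicit construction.
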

Proposition~\ref{prop:intformula} shows that in a certain case, there is a tight connection between bounds on quantities and welfare bounds. Further relationships between welfare and counterfactual quantities are left for future work.

To close this section, we present shape restrictions on $\overline{V}$ that hold for all $\eps \geq 0$. 

\begin{prop} \label{prop:shapev}
For a dataset $\{ (x^t, p^t) \}_{t = 1}^T$, let $\eps \geq \eps^*$. $\overline{V}(\pa,\pb,\eps)$ is convex, weakly decreasing, and lower semicontinuous in $\pa$,\footnote{A function $f : H \rightarrow \mathbb{R} \cup \{ \infty, -\infty \}$ is lower semicontinuous if for any $a \in \mathbb{R}$ the set $\{ x \in H \mid f(x) \le a \}$ is closed in the topology on $H$.} and weakly increasing in $\eps$. If $\pa \in \CCo(\{p^t\}_{t = 1}^T)$, then $\overline{V}(\pa,\pb,\eps)$ is finite. If $\pa \not\in \CCo(\{p^t\}_{t = 1}^T \cup \pb)$, then $\overline{V}(\pa,\pb,\eps) = \infty$.
\end{prop}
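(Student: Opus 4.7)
I would treat the six shape claims in turn, deriving each from a structural property of the indirect utility of an individual candidate $u$ and then propagating it through the supremum defining $\overline{V}$. For any fixed $u$, the function $\pa \mapsto V_u(\pa) = \sup_{x \in \mathbb{R}^K_+}[u(x) - \pa \cdot x]$ is a pointwise supremum of affine (hence continuous) functions that are each weakly decreasing in $\pa$ on $\mathbb{R}^K_{++}$ (because $x \geq 0$). Such a supremum is convex, lower semicontinuous, and weakly decreasing. Subtracting the $\pa$-independent quantity $\underline{V}_{u,A}(\pb,\eps)$ preserves all three properties, and taking a supremum over $u$ preserves them again. This yields convexity, lower semicontinuity, and weak monotonicity of $\overline{V}$ in $\pa$.

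For weak monotonicity in $\eps$, note first that the family of $u$ that $\eps$-rationalize the dataset is weakly increasing in $\eps$. Second, $\overline{V}_{u,A}(\pa,\eps) = V_u(\pa)$ does not depend on $\eps$, while $\underline{V}_{u,A}(\pb,\eps)$, being the infimum of $u(x) - \pb \cdot x$ subject to $u(x) - \pb \cdot x \geq V_u(\pb) - \eps$, weakly decreases in $\eps$ because the constraint set enlarges. Both effects push $\overline{V}$ weakly upward.

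For finiteness when $\pa \in \CCo(\{p^t\}_{t=1}^T)$, write $\pa \geq \sum_t \alpha_t p^t$ with nonnegative weights summing to one. I would chain three inequalities valid for any $u$ that $\eps$-rationalizes the data: monotonicity of $V_u$ gives $V_u(\pa) \leq V_u(\sum_t \alpha_t p^t)$; convexity of $V_u$ gives $V_u(\sum_t \alpha_t p^t) \leq \sum_t \alpha_t V_u(p^t)$; and combining $V_u(p^t) \leq u(x^t) - p^t \cdot x^t + \eps$ (from $\eps$-rationalization at $p^t$) with $V_u(\pb) \geq u(x^t) - \pb \cdot x^t$ (definition of indirect utility) yields $V_u(p^t) - V_u(\pb) \leq (\pb - p^t) \cdot x^t + \eps$. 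Since $\underline{V}_{u,A}(\pb,\eps) \geq V_u(\pb) - \eps$, assembling these produces $V_u(\pa) - \underline{V}_{u,A}(\pb,\eps) \leq \sum_t \alpha_t (\pb - p^t) \cdot x^t + 2\eps$, a finite bound uniform in $u$.

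For the infinity claim, suppose $\pa \not\in \CCo(\{p^t\}_{t=1}^T \cup \pb)$. By the separating hyperplane theorem and the upper comprehensive structure of $\CCo$, there exists $\lambda \geq 0$, $\lambda \neq 0$, with $\lambda \cdot q > \lambda \cdot \pa$ for every $q \in \{p^t\}_{t=1}^T \cup \pb$. I would construct a concave piecewise-linear utility
\[
u(x) = \min\left( \min_{t=1,\ldots,T}[v_t + p^t \cdot (x - x^t)],\ K + \pb \cdot x \right),
\]
where $\{v_t\}$ come from any $\eps$-rationalization of the data (which exists since $\eps \geq \eps^*$) and $K$ is chosen large enough, say $K \geq \max_t (v_t - \pb \cdot x^t)$, so the added $\pb$-affine piece is slack at each $x^t$ and hence $u$ continues to $\eps$-rationalize. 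The presence of the $\pb$-piece bounds $u(x) - \pb \cdot x \leq K$ everywhere, so $V_u(\pb) \leq K$ and therefore $\underline{V}_{u,A}(\pb,\eps) \geq K - \eps > -\infty$. Evaluating along the ray $x_n = n \lambda$, every term in the outer min of $u(x_n) - \pa \cdot x_n$ grows linearly in $n$ with strictly positive slope $\lambda \cdot (q - \pa) > 0$, so $V_u(\pa) = \infty$ and hence $\overline{V}(\pa,\pb,\eps) = \infty$. The main obstacle is precisely this construction: one must simultaneously preserve $\eps$-rationalization at the observed data, force $V_u(\pb)$ to be finite, and drive $V_u(\pa)$ to infinity. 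The upper comprehensive convex hull is the right notion because, for concave piecewise-linear utilities with gradient set $\{q_i\}$, the effective domain of $V_u$ coincides with $\CCo(\{q_i\})$.
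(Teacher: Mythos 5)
Your proof is correct, and three of the four claims track the paper's own argument closely: the convexity/monotonicity/lower-semicontinuity in $\pa$ via writing $\overline{V}$ as a supremum over $u$ of convex, decreasing, lsc functions of $\pa$ (the paper normalizes $\underline{V}_{u,A}(\pb,\eps)=0$ rather than carrying it as a $\pa$-independent offset, but this is cosmetic), and the finiteness bound $\overline{V}(\pa,\pb,\eps)\le\sum_t\alpha_t x^t\cdot(\pb-p^t)+2\eps$ is identical. You diverge in two places. For monotonicity in $\eps$ you argue directly from the definition (the rationalizing class of $u$ enlarges and $\underline{V}_{u,A}(\pb,\cdot)$ is decreasing because its constraint set enlarges), whereas the paper invokes the linear-programming characterization of Proposition~\ref{prop:lpwelfare} and monotonicity of the feasible region; both are valid, and yours has the advantage of not depending on the LP equivalence. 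The more substantive difference is the unboundedness claim: the paper first appeals to Proposition~\ref{prop:nonempty} to produce a rationalizing utility with an exact maximizer $\tilde{x}^0$ at $\pb$, augments the dataset with $(\tilde{x}^0,\pb)$, and applies the Afriat-type min construction to the augmented dataset so that $V_U(\pb)<\infty$; you instead cap the standard Afriat construction with an extra affine piece $K+\pb\cdot x$ whose constant is chosen large enough to be slack at every $x^t$. This is a genuinely different and arguably more elementary route: it forces $V_u(\pb)\le K$ by inspection, avoids any appeal to existence of exact maximizers, and the remaining separation argument (including the observation that upper comprehensiveness forces the separating direction $\lambda$ to be nonnegative, so the ray $n\lambda$ stays in $\mathbb{R}^K_+$ and every piece of $u(n\lambda)-\pa\cdot n\lambda$ has strictly positive slope) matches the paper's. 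The one step worth spelling out in a final write-up is the verification that the capped function still $\eps$-rationalizes, i.e., that $\tilde u(x^t)=\min_s\{v_s+p^s\cdot(x^t-x^s)\}$ is unchanged by the new piece and that $\tilde u\le U$ pointwise preserves the rationalization inequalities; both follow from your choice $K\ge\max_t(v_t-\pb\cdot x^t)$.
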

The shape restrictions in Proposition~\ref{prop:shapev} are those of an indirect utility function. We do not obtain global continuity here because the welfare bounds can be infinite. However, $\overline{V}(\pa,\pb,\eps)$ is continuous in $\pa$ over the relative interior of $\CCo(\{p^t\}_{t = 1}^T)$ because it is convex and finite over this set (\cite{rockafellar2015convex}, Theorem 10.1).

Recall that Proposition~\ref{prop:unboundedu} shows that bounds on utility differences are trivial when a quantity is not in the dataset. In contrast, Proposition~\ref{prop:shapev} shows that the bound on approximate indirect utility $\overline{V}(\pa,\pb,\eps)$ is typically finite provided $\pa$ is not too low. In particular, neither $\pa$ nor $\pb$ need be in the dataset. The reason we obtain these contrasting results is that indirect utility functions must satisfy certain shape restrictions while we consider utility functions that need not satisfy shape restrictions such as concavity or monotonicity.

\section{Continuity and Convexity in Quantities and Approximation Error} \label{sec:shape}

In classic revealed preference, a small amount of measurement error can lead to refutation of the model. In this case, there is no way to use the model for counterfactual or welfare analysis. Below we show continuity of the welfare/counterfactual bounds in both quantities and degree of approximation error. Thus, we provide a way to still conduct analysis when the model is not perfect, and do so in a way that is a continuous enlargement of the standard conceptual framework.\footnote{To be clear, results in this paper are also new under $\eps = 0$ with a few exceptions that are noted.}

In more detail, here we study the joint mapping from quantities and approximation error to the bounds analyzed previously. One motivation for this is that in applications, an analyst may not observe a dataset of interest $\{ (x^t, p^t)\}_{t = 1}^T$ exactly, and may instead only have an estimate of the quantities. We show below that if we can consistently estimate quantities, then we can consistently estimate the bounds. 

For concreteness, suppose an analyst is conducting a representative agent analysis, and quantities are mean quantities from a population at a each time period. We examine the mean demand vector at period $t$ so that $x^t = \E[ X^{i,t}]$, where $X^{i,t}$ is demand for individual $i$ at time $t$. Here $X^{i,t}$ is treated as a random variable that is identically distributed across individuals. An analyst estimates $\widehat{\E[X^{i,t}]}$ from a cross-sectional dataset in which individuals at each time period face the same prices. For example, the estimator could be the sample average of demands at time $t$ across many individuals.

We now turn to the formal results. While we allow estimation error associated with quantities, here we take each price $p^t$ as nonrandom and measured exactly. Recall that $\overline{x}_k(\tilde{p},\eps)$ is the maximal quantity of good $k$ at counterfactual price $\tilde{p}$ assuming approximation error is no greater than $\eps$. The definition of $\overline{x}_k$ is presented in  Proposition~\ref{prop:qcounterfactuals}. We now treat $\overline{x}_k$ as a function of the dataset of quantities, and with minor abuse of notation we write $\overline{x}_k \left(d, \tilde{p}, \eps \right)$, where $d = \left(d^1, \ldots, d^T \right) \in \mathbb{R}^{K \times T}_+$ denotes quantities across all goods at time periods. This allows us to study how the bound depends on quantities in the dataset while keeping prices, $\{p^t\}_{t=1}^T$, fixed. Similarly, $\underline{x}_k \left(d,\tilde{p},\eps \right)$ denotes the lower bound. Finally, let $A \subseteq \mathbb{R}^{K \times T}_{+} \times \mathbb{R}_+$ denote combinations of quantities and approximation error such that the counterfactual/welfare objects are defined, i.e.
\[
A = \{ (d,\eps) \in \mathbb{R}_{+}^{K \times T} \times \mathbb{R}_{+} \mid \eps \ge \eps^* \left(\{ (d^t, p^t) \}_{t = 1}^{T} \right) \}.
\]
\begin{prop} \label{prop:convexcount}
Fix a price $\tilde{p}$ where we wish to bound counterfactual quantities and assume the dataset of prices $\{ p^t \}_{t = 1}^T$ is fixed. The set $A$ is convex. The mapping $\overline{x}_k(\cdot, \tilde{p}, \cdot) : A \rightarrow \mathbb{R}_+ \cup \{ \infty \}$ is  concave in $(d,\eps)$, and is continuous in $(d,\eps)$ at any point where it is finite. The mapping $\underline{x}_k(\cdot, \tilde{p}, \cdot) : A \rightarrow \mathbb{R}_{+}$ is convex and continuous in $(d,\eps)$.
\end{prop}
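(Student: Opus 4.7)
The plan is to exploit the parametric linear programming structure from Proposition~\ref{prop:Xdetail}, which writes $X(\tilde{p}, D, \eps) = \{\tilde{x} \in \mathbb{R}^K_+ : \tilde{A}\tilde{x} \leq b(d,\eps)\}$, where the matrix $\tilde{A}$ depends only on the fixed prices $\{p^t\}_{t=1}^T$ and $\tilde{p}$, while the right-hand side $b(d,\eps)$ is affine in $(d,\eps)$. Both $\overline{x}_k$ and $\underline{x}_k$ are then optimal values of parametric LPs with fixed constraint matrix and affinely parameterized right-hand side, and all three claims (convexity of $A$, concavity/convexity, continuity) will follow from standard LP sensitivity reasoning.

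First I would establish convexity of $A$ by showing $\eps^*$ is convex in $d$. Using the Afriat-style characterization underlying Proposition~\ref{prop:minqeps} (cf.\ \cite{allen2020satisficing}), $\eps^*(d)$ is the optimal value of a linear program in auxiliary utility levels $U^1,\ldots,U^T$ and $\eps$ whose constraints $U^r - U^s + \eps \geq p^r \cdot (d^r - d^s)$ have right-hand side linear in $d$. A convex-combination argument on feasible $(U,\eps)$-pairs then yields $\eps^*(\lambda d_1 + (1-\lambda) d_2) \leq \lambda \eps^*(d_1) + (1-\lambda) \eps^*(d_2)$, so $A$, the intersection of the nonnegative orthant with the epigraph of $\eps^*$, is convex.

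Concavity of $\overline{x}_k$ follows by the same device applied to primal optima: given $(d_i,\eps_i) \in A$ with optimal $\tilde{x}_i$ attaining $\overline{x}_k$, the point $\lambda \tilde{x}_1 + (1-\lambda)\tilde{x}_2$ remains feasible at the combined parameter because $\tilde{A}$ is fixed and $b$ is affine, and selecting the $k$th coordinate delivers the concavity inequality (in the case $\tilde{p} \notin \interior \CCo(\{p^t\})$, where Proposition~\ref{prop:Xfinite} gives $\overline{x}_k \equiv \infty$ on $A$, concavity is vacuous). Reversing $\max$ and $\min$ gives convexity of $\underline{x}_k$; that bound is always finite because $\tilde{x}_k \geq 0$ by the nonnegativity constraint and $X$ is nonempty on $A$ by Proposition~\ref{prop:nonempty}.

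For continuity I would invoke LP duality to obtain a piecewise-linear representation. When $\overline{x}_k$ is finite, duality gives $\overline{x}_k(d,\tilde{p},\eps) = \min\{ b(d,\eps)^T y : y \geq 0,\ \tilde{A}^T y \geq e_k\}$; the dual feasible set is a fixed polyhedron in the nonnegative orthant, contains no lines, and therefore has finitely many vertices, at one of which the minimum is attained by LP theory. Hence $\overline{x}_k$ equals the pointwise minimum, over a fixed finite vertex set, of affine functions of $(d,\eps)$, and is continuous on all of $\mathbb{R}^{KT}_+ \times \mathbb{R}_+$, in particular at boundary points $\eps = \eps^*(d)$ of $A$. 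An analogous duality argument (with sign changes to account for $\min$ versus $\max$) writes $\underline{x}_k$ as a pointwise maximum of finitely many affine functions, and thus continuous. The main obstacle I anticipate is exactly this continuity on the relative boundary of $A$: a general concave (or convex) function on a convex set need not be continuous there, and the polyhedral LP-duality representation is what bypasses this cleanly and avoids Berge-type machinery.
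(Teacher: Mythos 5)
Your proposal is correct. For convexity of $A$ and the concavity/convexity of the bounds, you follow essentially the same route as the paper: the paper's Lemma~\ref{lem:convexfeas} averages the Afriat-style inequalities of Lemma~\ref{lem:epsqrat}(ii) (including the auxiliary utility numbers), while you average the cycle inequalities of Proposition~\ref{prop:Xdetail} directly; both exploit that the constraint matrix depends only on prices and the right-hand side is affine in $(d,\eps)$, and the two characterizations are interchangeable here. You are in fact slightly more careful than the paper on the extended-real case, observing via Proposition~\ref{prop:Xfinite} that finiteness of $\overline{x}_k$ depends only on whether $\tilde{p} \in \interior \CCo(\{p^t\}_{t=1}^T)$ and not on $(d,\eps)$, so that concavity is vacuous when the bound is identically $+\infty$ on $A$. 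Where you genuinely diverge is continuity: the paper truncates the objective and invokes B\"{o}hm's theorem on continuity of the LP value function in the right-hand-side vector (its Lemma~\ref{lem:lpcont}), whereas you pass to the dual, note that the dual feasible polyhedron is fixed (independent of $(d,\eps)$), pointed, and hence has finitely many vertices at one of which the optimum is attained, and conclude that $\overline{x}_k$ is a pointwise minimum of finitely many affine functions of $(d,\eps)$ (and $\underline{x}_k$ a pointwise maximum). Your argument needs strong duality, which holds on $A$ since the primal is feasible there by Proposition~\ref{prop:nonempty} and bounded whenever the value is finite; in exchange it yields a sharper conclusion (an explicit piecewise-affine, hence globally continuous, representation that settles continuity on the relative boundary of $A$ without any Berge-type machinery), while the paper's citation-based route is shorter. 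Both are valid proofs.
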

We obtain a similar result for the bounds on approximate indirect utility $\overline{V}$ and $\underline{V}$ when we view them as a function of the dataset of quantities. To formalize this, with minor abuse of notation let $\overline{V} \left(d,\pa,\pb,\eps \right)$ describe the upper bound as a mapping of  the quantities $d \in \mathbb{R}^{K \times T}_+$ in a dataset. Similarly, $\underline{V} \left(d, \pa,\pb,\eps \right)$ denotes the lower bound.

\begin{prop} \label{prop:convexwelfare}
Fix a price pair $\pa$ and $\pb$ where we wish to bound the difference in approximate indirect utility, and assume the dataset of prices $\{ p^t \}_{t = 1}^T$ is fixed. The mapping  $\overline{V}(\cdot, \pa,\pb,\cdot) : A \rightarrow \mathbb{R} \cup \{ \infty \}$ is concave in $(d,\eps)$, and is continuous in $(d,\eps)$ at any point where it is finite. The mapping $\underline{V}(\cdot, \pa,\pb,\cdot) : A \rightarrow \mathbb{R} \cup \{ -\infty \}$ is convex in $(d,\eps)$, and is continuous in $(d,\eps)$ at any point where it is finite.
\end{prop}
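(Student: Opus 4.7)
The plan is to mirror the proof strategy behind Proposition~\ref{prop:convexcount}, now combining the LP characterizations of the welfare bounds (Proposition~\ref{prop:lpwelfare} in Appendix~\ref{a:mainproofs}) with linear programming duality. The key structural observation is that, for fixed $\pa,\pb$ and fixed dataset prices $\{p^t\}_{t=1}^{T}$, both bounds can be written as values of linear programs
\[
\overline{V}(d,\pa,\pb,\eps) = \max_{z}\{c^\top z : Mz \le b(d,\eps)\}, \qquad \underline{V}(d,\pa,\pb,\eps) = \min_{z}\{{c'}^\top z : Mz \le b(d,\eps)\},
\]
in which the coefficient matrix $M$ and the objective vectors $c,c'$ are independent of $(d,\eps)$, while the right-hand side $b(d,\eps)$ is affine in $(d,\eps)$. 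The affine dependence arises because $(d,\eps)$ enters the program only through expenditure terms $p^t\cdot d^t$ on the right of the $\eps$-rationalizability inequalities and through the additive $\eps$ slacks appearing there and in the near-optimality constraint defining $\underline{V}_{u,A}(\pb,\eps)$. Verifying this form by unwinding Proposition~\ref{prop:lpwelfare} is the first substantive step.

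With the LP form in hand, I would apply LP duality to the maximization problem for $\overline{V}$. The dual feasible set $\Lambda = \{\lambda \ge 0 : M^\top\lambda = c\}$ is independent of $(d,\eps)$ and is a polyhedron with only finitely many extreme points $\lambda_1,\dots,\lambda_N$. For any $(d,\eps)\in A$ the primal is feasible by the very definition of $A$, so primal boundedness is equivalent to dual feasibility, and the latter does not depend on $(d,\eps)$. This yields an all-or-nothing dichotomy on $A$: either $\overline{V}\equiv +\infty$ throughout $A$, or $\overline{V}(d,\pa,\pb,\eps)<\infty$ throughout $A$ and strong duality gives
\[
\overline{V}(d,\pa,\pb,\eps) \;=\; \min_{i=1,\dots,N}\; \lambda_i^{\top} b(d,\eps).
\]
The right-hand side is the pointwise minimum of finitely many affine functions of $(d,\eps)$, hence concave and jointly continuous on $A$, proving both claims for $\overline{V}$ (the constant-$+\infty$ case is trivially concave). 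The argument for $\underline{V}$ is symmetric: its LP dual is a maximization over a fixed polyhedron, so wherever $\underline{V}$ is finite on $A$ it equals the pointwise maximum of finitely many affine functions in $(d,\eps)$, which is convex and jointly continuous. Alternatively, one can invoke the identity $\underline{V}(d,\pa,\pb,\eps)=-\overline{V}(d,\pb,\pa,\eps)$ to pass directly from the concavity/continuity of $\overline{V}$ to the convexity/continuity of $\underline{V}$.

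The hard part is the first step: extracting from Proposition~\ref{prop:lpwelfare} an LP whose coefficient matrix and objective vector are truly independent of $(d,\eps)$ while only the right-hand side is affine in $(d,\eps)$. This amounts to a careful rewriting of the $\eps$-rationalizability and near-optimality constraints so that every appearance of $d^t$ and $\eps$ lands on the right-hand side, which is possible because the only terms in those inequalities involving $d^t$ or $\eps$ are of the form $p^t\cdot d^t$ (with $p^t$ fixed) or the additive $\eps$ slack. Once that structural rewriting is completed, the remainder is a standard LP-duality argument that parallels the proof of Proposition~\ref{prop:convexcount}.
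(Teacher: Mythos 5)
Your proposal is correct, but it reaches the conclusion by a genuinely different route than the paper. The paper's proof splits the two claims: concavity/convexity is obtained by a primal averaging argument (a convex combination of feasible points for two datasets $(d^0,\eps^0)$ and $(d^1,\eps^1)$ is feasible for the averaged dataset, exactly as in Lemma~\ref{lem:convexfeas}, so the max is concave and the min is convex), while continuity is obtained separately by citing Lemma~\ref{lem:lpcont} (B\"ohm's continuity theorem for LP value functions in the right-hand side). You instead pass to the dual: after verifying -- correctly -- that $(d,\eps)$ enters the program of Proposition~\ref{prop:lpwelfare} only through a right-hand side $b(d,\eps)$ that is affine in $(d,\eps)$, strong duality and attainment at an extreme point of the (data-independent, pointed) dual polyhedron give $\overline{V}=\min_i \lambda_i^{\top}b(d,\eps)$, a pointwise minimum of finitely many affine functions. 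This delivers concavity and joint continuity in one stroke, and in fact proves more than the proposition asks: the bound is piecewise linear in $(d,\eps)$, and finiteness is an all-or-nothing property on $A$ (consistent with Proposition~\ref{prop:shapev}, where finiteness depends only on the prices). The paper's route is more elementary -- no duality, no extreme-point attainment argument -- and parallels the treatment of the counterfactual quantity bounds. Two small caveats for your version: feasibility of the primal on all of $A$ is not quite ``by the very definition of $A$'' but rests on Proposition~\ref{prop:nonempty} (which is how the paper establishes it); and Proposition~\ref{prop:lpwelfare} identifies $\overline{V}$ with the LP value only where $\overline{V}$ is finite, so to run the dichotomy cleanly you should note that the supremum defining $\overline{V}$ and the LP value are infinite on the same set. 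Neither point is a substantive gap.
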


Recall Proposition~\ref{prop:shapev} shows that when $\pa \in \CCo(\{p^t\}_{t=1}^T)$, $\overline{V}(d, \pa,\pb,\eps)$ is finite for any $(d,\eps) \in A$.

We need one more result. Here, we interpret the minimal approximation error $\eps^*$ as a function of quantities for fixed  prices $\{ p^t \}_{t = 1}^T$.
\begin{prop}[\cite{allen2020satisficing}] \label{prop:meascont}
The mapping $\eps^* :  \mathbb{R}^{K \times T}_+ \rightarrow \mathbb{R}_+$ is convex and continuous.
\end{prop}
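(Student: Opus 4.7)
The plan is to leverage the linear programming characterization of $\eps^*$ from Proposition~\ref{prop:minqeps}. With prices $\{p^t\}_{t=1}^T$ fixed, $\eps^*(d)$ is the optimal value of
\[
\min_{\eps \geq 0,\ u^1,\ldots,u^T \in \mathbb{R}} \eps \quad \text{s.t.} \quad u^t - u^s + \eps \geq p^t \cdot (d^t - d^s) \text{ for all } s, t \in \{1,\ldots,T\},
\]
a linear program whose constraints are affine in the decision variables $(\eps, u^1, \ldots, u^T)$ and in which the data $d$ appears only as affine expressions on the right-hand side. Convexity and continuity will then follow from standard properties of the value function of such a parametric LP.

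For convexity, I would run the classical parametric-LP argument. Given $d_0, d_1 \in \mathbb{R}^{K\times T}_+$ and $\alpha \in [0,1]$, let $(\eps^*(d_i), u^1_i, \ldots, u^T_i)$ be optimal for the LP at $d_i$, $i = 0, 1$. Because each constraint is affine in both the variables and the data, the convex combination $(\alpha \eps^*(d_1) + (1-\alpha)\eps^*(d_0),\ \alpha u^t_1 + (1-\alpha) u^t_0)$ is feasible at $d_\alpha = \alpha d_1 + (1-\alpha) d_0$, and thus $\eps^*(d_\alpha) \leq \alpha \eps^*(d_1) + (1-\alpha)\eps^*(d_0)$.

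For continuity, I first observe that $\eps^*(d)$ is everywhere finite: setting $u^t = 0$ for all $t$ reduces the constraints to $\eps \geq p^t \cdot (d^t - d^s)$, so any $\eps \geq \max_{s,t} p^t \cdot (d^t - d^s)$ is feasible. Since nothing in the convexity argument used $d \geq 0$, the same LP extends $\eps^*$ to a finite-valued convex function on all of $\mathbb{R}^{K\times T}$. Because a finite convex function on a finite-dimensional vector space is automatically continuous (in fact locally Lipschitz on the interior of its effective domain; see \cite{rockafellar2015convex}, Theorem~10.1, already invoked elsewhere in the paper), continuity on $\mathbb{R}^{K\times T}_+$ follows. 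The main thing to be careful about is ensuring LP feasibility at every $d$ so that $\eps^*$ is never $+\infty$; the explicit witness $u^t \equiv 0$ with a sufficiently large $\eps$ handles this uniformly, and is the only genuine content beyond the general convexity/continuity machinery.
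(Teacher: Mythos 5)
Your proof is correct, but it takes a different route from the paper's. The paper disposes of convexity by citation and establishes continuity by invoking the explicit dual representation of $\eps^*$ as a maximum of finitely many functions that are affine in $d$ (this comes from the cycle characterization in Lemma~\ref{lem:epsqrat}(iii): $\eps^*(d)$ is the larger of $0$ and the maximum over cycles of $\frac{1}{M}\sum_{m}p^{t_m}\cdot(d^{t_m}-d^{t_{m+1}})$), which delivers convexity, piecewise linearity, and global Lipschitz continuity in one stroke. You instead work entirely on the primal side: convexity via the standard parametric-LP argument of averaging feasible solutions (essentially the same device as Lemma~\ref{lem:convexfeas}), and continuity via finiteness plus the theorem that a finite convex function on an open set is continuous. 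Both arguments are sound. Your approach has the virtue of not requiring the explicit cycle/dual representation, and you are right to flag the one place where care is needed --- extending the LP to all of $\mathbb{R}^{K\times T}$ so that continuity at boundary points of $\mathbb{R}^{K\times T}_+$ is not left to the relative-interior version of Rockafellar's theorem; the paper's finite-max-of-affine argument handles that boundary issue automatically. The only loose thread in your write-up is the implicit assumption that the LP minimum is attained when you pick optimal $(\eps^*(d_i), u^1_i,\ldots,u^T_i)$; this is harmless (the value is bounded below by $0$ and the program is feasible, so attainment follows from standard LP theory, e.g.\ Lemma~\ref{lem:lpexist}, and in any case the convexity inequality survives passing to near-optimal solutions), but it deserves a sentence.
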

\iffalse
By directional differentiability we mean that for quantities $\tilde{d} \in \mathbb{R}^{K \times T}_+$ and direction $h \in \mathbb{R}^{K \times T}$,
\[
\lim_{\lambda \downarrow 0} \frac{\eps^* ( \tilde{d} + \lambda h) - \eps^* ( \tilde{d} )}{\lambda}
\]
exists. Formally, we mean this only for directions such that $\tilde{d} + \lambda h \in \mathbb{R}^{K \times T}_+$ for small positive $\lambda$.
\fi

The previous continuity results imply the following consistency results.
\begin{cor} \label{cor:counterfactualconsistency}
Suppose we have some estimator of the quantities that satisfies $\hat{d}^n \xrightarrow{p} d$. Then
\begin{align*}
\overline{x}_k \left(\hat{d}^n, \tilde{p}, \eps^*\left(\hat{d}^n \right)\right) & \xrightarrow{p} \overline{x}_k(d, \tilde{p}, \eps^*(d)) \\
\underline{x}_k \left(\hat{d}^n, \tilde{p}, \eps^*\left(\hat{d}^n \right) \right) & \xrightarrow{p} \underline{x}_k(d, \tilde{p}, \eps^*(d) ). \\
\overline{V} \left(\hat{d}^n, \pa, \pb, \eps^*\left(\hat{d}^n \right) \right) & \xrightarrow{p} \overline{V} \left(d, \pa, \pb, \eps^*\left(d \right) \right) \\
\underline{V} \left(\hat{d}^n, \pa, \pb, \eps^*\left(\hat{d}^n \right) \right) & \xrightarrow{p} \underline{V} \left(d, \pa, \pb, \eps^*\left(d \right)  \right),
\end{align*}
where each result holds whenever the right hand side result is finite.
\end{cor}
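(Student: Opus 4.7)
The plan is to combine the continuity results established in Propositions~\ref{prop:convexcount}, \ref{prop:convexwelfare}, and \ref{prop:meascont} with the continuous mapping theorem. The hypothesis $\hat{d}^n \xrightarrow{p} d$ is the only input needed; everything else follows from properties of the plug-in map already proven.

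First, I would apply Proposition~\ref{prop:meascont}, which establishes that $\eps^* : \mathbb{R}^{K \times T}_+ \to \mathbb{R}_+$ is continuous. The continuous mapping theorem then gives $\eps^*(\hat{d}^n) \xrightarrow{p} \eps^*(d)$. Since convergence in probability of each coordinate implies joint convergence in probability, we obtain
\[
\bigl(\hat{d}^n, \eps^*(\hat{d}^n)\bigr) \xrightarrow{p} \bigl(d, \eps^*(d)\bigr).
\]
By the very definition of $\eps^*$, the pair $(\hat{d}^n, \eps^*(\hat{d}^n))$ lies in the set $A$ for every $n$, and $(d, \eps^*(d))$ lies in $A$ as well, so the bound functions $\overline{x}_k$, $\underline{x}_k$, $\overline{V}$, $\underline{V}$ are all well defined along the sequence and at the limit.

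Next, I would invoke Propositions~\ref{prop:convexcount} and \ref{prop:convexwelfare}, which state that each of these four mappings, viewed as functions of $(d,\eps)$ on $A$, is continuous at every point at which it is finite. The corollary is stated under the assumption that the right-hand side is finite, which places the limit point $(d, \eps^*(d))$ in the region of continuity for the corresponding functional. Applying the continuous mapping theorem one more time, for each of the four functionals separately, yields the four stated convergences.

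The main potential obstacle is the possibility of infinite values, since the continuity results are only guaranteed at points where the bound is finite; however, the corollary explicitly conditions on finiteness of the target, so the argument reduces to a clean two-step application of the continuous mapping theorem (once for $\eps^*$ and once for each bound), with no additional delicacy needed. The only other care needed is to ensure joint convergence of $(\hat{d}^n, \eps^*(\hat{d}^n))$ rather than coordinatewise convergence, which follows immediately because the $\eps^*$ coordinate is a deterministic continuous function of $\hat{d}^n$.
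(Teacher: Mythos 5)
Your proposal is correct and follows essentially the same route as the paper: establish $\eps^*(\hat{d}^n)\xrightarrow{p}\eps^*(d)$ via Proposition~\ref{prop:meascont}, note $(\hat{d}^n,\eps^*(\hat{d}^n))\in A$ by construction, and then apply the continuity statements of Propositions~\ref{prop:convexcount} and~\ref{prop:convexwelfare} at the finite limit point together with the continuous mapping theorem. The only difference is that you spell out the joint convergence of the pair explicitly, which the paper leaves implicit.
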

This provides a theoretical foundation for plug-in estimation. We omit a formal description of the sampling scheme since the result applies to any collection of random variables $\left\{ \hat{d}^n \right\}$ that converges in probability to $d$. For example, if we have panel data and the quantities $(X^{i,t})_{t = 1}^T$ are independent and identically distributed across individuals, one can use sample averages so that $\hat{d}^n= \left( \frac{1}{n}\sum_{i=1}^n X^{i,t} \right)_{t=1}^T$ when estimating $d= \left( \mathbb{E}[X^{i,t}] \right)_{t=1}^T$.

Finally, we consider shape restrictions for the bounds on utility differences $\overline{u}$ and $\underline{u}$, viewed as functions of quantities. As before, fixing prices $\{ p^t \}_{t = 1}^T$, we study dependence on the quantities $d \in \mathbb{R}^{K \times T}_{+}$. With minor abuse of notation write $\overline{u}(d, \xb, \xa, \eps)$ as a function of quantities and approximation error, and similarly for $\underline{u}$.

Recall that Proposition~\ref{prop:ubounds}  shows that when $\xb$ is in the dataset of quantities $d$, $\overline{u}(d,\xa,\xb,\eps)$ is finite provided $\eps \geq \eps^*$. In contrast, Proposition~\ref{prop:unboundedu} shows that whenever $\xb$ is outside the dataset, $\overline{u}(d,\xa,\xb,\eps) = \infty$. We conclude that when viewed as a mapping of quantities, $\overline{u}$ is no longer continuous. It is, however, continuous over a certain subset of $A$. To describe this, for a vector $\tilde{x} \in \mathbb{R}^K_{+}$ let $A(\tilde{x}) = \{ (d,\eps) \in A \mid d^1 = \tilde{x} \}$. This restricts attention to quantities datasets that all contain a certain vector $\tilde{x}$ as the first component. We formalize continuity and concavity results as follows.

\begin{prop} \label{prop:ushape}
Fix a quantity pair $\xa \neq \xb$ where we wish to bound the difference in utility, and assume the dataset of prices $\{ p^t \}_{t = 1}^T$ is fixed. The set $A(\tilde{x})$ is convex for any $\tilde{x} \in \mathbb{R}_+^K$. The mapping $\overline{u}(\cdot,\xa,\xb,\cdot) : A \rightarrow \mathbb{R} \cup \{ \infty \}$ is concave and continuous in $(d,\eps)$ over the region $A(\xb)$. The mapping $\underline{u}(\cdot,\xa,\xb,\cdot) : A \rightarrow \mathbb{R} \cup \{ -\infty \}$ is convex and continuous in $(d,\eps)$ over the region $A(\xa)$.
\end{prop}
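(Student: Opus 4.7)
The plan is to rely almost entirely on the explicit min-of-affine-functions formula from Proposition~\ref{prop:ubounds}, together with the symmetry identity $\underline{u}(\xa,\xb,\eps) = -\overline{u}(\xb,\xa,\eps)$ noted in the text following that proposition. I will first establish the convexity of $A(\tilde{x})$, then read concavity and continuity of $\overline{u}$ on $A(\xb)$ off the formula, and finally deduce the $\underline{u}$ statement by the symmetry.

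First, $A$ itself is convex: by Proposition~\ref{prop:meascont}, $\eps^*$ is a convex function of $d$, so $A = \{(d,\eps) : \eps \ge \eps^*(d)\}$ is the epigraph of a convex function and hence convex. The set $A(\tilde{x})$ is the intersection of $A$ with the affine subspace $\{(d,\eps) : d^1 = \tilde{x}\}$, so it is convex as well.

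Second, for concavity and continuity of $\overline{u}(\cdot,\xa,\xb,\cdot)$ on $A(\xb)$, observe that every point $(d,\eps) \in A(\xb)$ has $d^1 = \xb$, placing $\xb$ in the dataset with index $S=1$. The precondition $\xa \neq x^S$ in Proposition~\ref{prop:ubounds} is exactly our standing assumption $\xa \neq \xb$, so the formula applies and
\[
\overline{u}(d,\xa,\xb,\eps) = \min_{\sigma \in \Sigma_{1}} \left\{ p^{\sigma(M)}\cdot(\xa - d^{\sigma(M)}) + \sum_{m=1}^{M-1} p^{\sigma(m)} \cdot (d^{\sigma(m+1)} - d^{\sigma(m)}) + M\eps \right\}.
\]
With prices fixed, each term in braces is affine (hence both concave and continuous) in $(d,\eps)$. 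Because $\Sigma_{1}$ consists of acyclic sequences drawn from $\{1,\ldots,T\}$, it is a finite set, so $\overline{u}(\cdot,\xa,\xb,\cdot)$ is a \emph{finite} minimum of affine functions and is therefore concave and continuous on all of $\mathbb{R}^{K \times T}_+ \times \mathbb{R}_+$, and in particular on the restricted domain $A(\xb)$.

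Third, for $\underline{u}$ on $A(\xa)$, apply the symmetry identity $\underline{u}(d,\xa,\xb,\eps) = -\overline{u}(d,\xb,\xa,\eps)$. Points in $A(\xa)$ have $d^1 = \xa$, so the previous step applies with the roles of $\xa$ and $\xb$ swapped, yielding $\overline{u}(d,\xb,\xa,\eps)$ as a finite minimum of affine functions of $(d,\eps)$. Its negation is a finite \emph{maximum} of affine functions, which is convex and continuous, giving the claim. The only real work is in reading the formula correctly; the main point of care is that we must operate on $A(\xb)$ (respectively $A(\xa)$) so that the relevant quantity lies in the dataset and the Proposition~\ref{prop:ubounds} representation is actually available, since Proposition~\ref{prop:unboundedu} shows that off these slices the bound jumps to $\pm\infty$ and continuity on the full set $A$ is lost.
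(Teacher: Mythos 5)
Your proof is correct and reaches the same conclusions the paper does, but the route for the concavity/convexity half is slightly different. The paper establishes concavity of $\overline{u}$ by showing that the feasibility region of the linear program in Proposition~\ref{prop:lputility}(i) is jointly convex in $(d,\eps)$ (an argument in the style of Lemma~\ref{lem:convexfeas}) and then invoking that a supremum over a convex feasible set of a linear objective is concave; it reserves the closed-form expression in Proposition~\ref{prop:lputility}(ii) for the continuity claim only. You instead read \emph{both} concavity and continuity directly off the closed form: on $A(\xb)$ the set $\Sigma_S$ of acyclic sequences is finite, each term is affine in $(d,\eps)$ with prices fixed, and a finite minimum of affine functions is concave and continuous; the $\underline{u}$ statement then follows from the identity $\underline{u}(\xa,\xb,\eps)=-\overline{u}(\xb,\xa,\eps)$ exactly as the paper suggests. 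Your approach is shorter and self-contained given Proposition~\ref{prop:ubounds}, while the paper's feasibility-region argument is the one that generalizes to the bounds (such as $\overline{V}$ and $\overline{x}_k$) for which no comparably clean closed form is exploited. Your justification of convexity of $A(\tilde{x})$ as the intersection of the convex set $A$ (the epigraph of the convex function $\eps^*$ from Proposition~\ref{prop:meascont}) with an affine subspace is also cleaner than the paper's description of $A(\tilde{x})$ as a ``projection'' of $A$, which is loose wording for what is really a slice.
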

Continuity over all of $A$ does not hold and so one cannot directly apply the continuous mapping theorem to establish a consistency result like Corollary~\ref{cor:counterfactualconsistency}. If some quantity vector $d^1$ is measured without error, however, then it is possible to consistently estimate the bounds on utility differences between $d^1$ and other bundles though we omit details for brevity.

\section{Application} \label{sec:application}

We now illustrate the results in the paper with data on the demand for gasoline. Data are from the 2001 United States National Household Travel Survey, and have previously been used in \cite{blundell2012measuring}. The data are from a single cross-section. For brevity we refer to \cite{blundell2012measuring} for additional details, including construction of the particular sample.

The primary observables of interest are quantities and prices. Quantities are annual gasoline consumption, which is constructed from odometer readings and an estimate of fuel efficiency. Prices are the average tax-inclusive price per gallon, in the county where the individual lives.

First note it is possible to map the raw quantities and prices to a dataset $\{ (X^i, P^i) \}_{i = 1}^n$, and then apply our previous analysis. Here $i$ denotes the individual and $n$ denotes the sample size. We use this notation rather than $t$ and $T$ to emphasize we have a cross-section. We use upper case $X^i$ and $P^i$ to denote that these are random variables.

We do not use the raw dataset, and instead ``pre-process it'' to map to our framework. We do so because we have a cross-section of individuals. We wish to both to diminish the impact of sampling variability as well as incorporate heterogeneity along observable variables.\footnote{\cite{allen2020satisficing} study how stochastic shocks \textit{and} approximation error can be studied in a common framework. That paper provides several aggregation theorems, and discusses a representative agent in this setting.} As in, \cite{blundell2012measuring} we pre-process by first considering a partially linear model given by
\[
X = g(P,Y) + \beta'W + U,
\]
where $P$ is price, $Y$ is income, $W$ are observed covariates, and $U$ is unobservable heterogeneity. While \cite{blundell2012measuring} interpret $g$ as a demand curve for a representative agent for the general model of utility maximization subject to a budget constraint, here we have a different interpretation. We interpret $g(\cdot, Y)$ as the demand curve for the representative agent with income level $Y$; $Y$ thus serves as a demographic characteristic that alters the shape of the demand curve. We close the model with the restriction
\[
\E[ U \mid P = p, Y = y, W = w] = 0.
\]
This specification allows price sensitivity to depend on the level of income of an individual. For each level of income $\tilde{y}$, we consider a dataset (in the sense used previously in the paper) of the form
\[
D(\tilde{y}) = \left\{ \hat{g}(P^i,\tilde{y}), P^i \right\}_{i = 1}^{\tilde{n}},
\]
where $\hat{g}$ is an estimator of $g$ described below. Thus, $\hat{g}(P^i,\tilde{y})$ is akin to the structural quantity $x^t$ in the previous notation, and $P^i$ is akin to $p^t$. Like \cite{blundell2012measuring}, we consider prices between the $5$-th and $95$-th quantile to mitigate endpoint issues, so $\tilde{n}$ enumerates these observations.

The estimator $\hat{g}$ is constructed similar to \cite{blundell2012measuring}. We first  estimate $\hat{\beta}$ by a double residual regression as in \cite{robinson1988root}.\footnote{We use the biweight kernel with ad hoc bandwidth $.75$ after standardizing the data.} Then we set
\[
\hat{g}(P^i, \tilde{y}) = \frac{\sum_{j = 1}^N \left(X^j - \hat{\beta}'W_j\right) K_{h_p}(P^j - P^i)K_{h_y}(Y^j -\tilde{y})}{\sum_{j = 1}^N K_{h_p}(P^j - P^i)K_{h_y}(Y^j -\tilde{y})},
\]
where $K_h$ is a kernel with bandwith $h$. Following \cite{blundell2012measuring} we use the biweight kernel. Throughout, the bandwidths $h_p$ and $h_y$ are chosen so that
\[
\frac{h_p}{h_y} = \frac{\hat{\sigma}_P}{\hat{\sigma}_Y},
\]
where $\hat{\sigma}^2_P = \frac{1}{n}\sum_{i = 1}^n (P^i - \overline{P})^2$, $\overline{P} = \frac{1}{n} \sum_{i = 1}^n P_i$, $\hat{\sigma}^2_Y = \frac{1}{n}\sum_{i = 1}^n (Y^i - \overline{Y})^2$, and $\overline{Y} = \frac{1}{n} \sum_{i = 1}^n Y_i$. Note that these are all constructed with all observations.

Figure~\ref{fig:empirical} presents analysis with two choices of bandwidths. These correspond to the ad hoc choices $.75$ and $1$ after standardizing. The top two panels display the kernel-smoothed ``dataset''
\[
D \left(\overline{Y} \right) = \left\{ \hat{g} \left(P^i,\overline{Y} \right), P^i \right\}_{i = 1}^{\tilde{n}}
\]
as well as counterfactual bounds, where $\hat{g} \left(P^i,\overline{Y} \right)$ is interpreted as a quantity for observation $i$ facing prices $P^i$. Recall $\tilde{n}$ denotes the middle $90\%$ of observations in terms of price, where we drop the lower and upper $5\%$ to mitigate endpoint issues. Income is evaluated at the sample mean $\overline{Y}$. The welfare bounds for approximate indirect utility are displayed in the middle panels. The bounds are evaluated relative to the mean price $\overline{P} = 1.334$. The bounds for differences in utility for certain quantities are displayed in the lower panels. The utility bounds are relative to the median quantity in the dataset. There are 101 comparisons, which for computational reasons are made between 101 of the points in the dataset of quantities $\left\{ \hat{g} \left(P^i,\overline{Y} \right) \right\}_{i = 1}^{\tilde{n}}$. Recall from Proposition~\ref{prop:unboundedu}, comparisons in utility when one quantity is not in the dataset will have at least one trivial bound. This is why we restrict attention to comparisons in which both quantities are in the dataset. It is important to note that in practice, simply bounding quantities over a grid will lead to trivial bounds for many points in the grid.

\begin{figure}[H]

    \begin{subfigure}[h]{0.48\textwidth}
    \includegraphics[scale = .25]{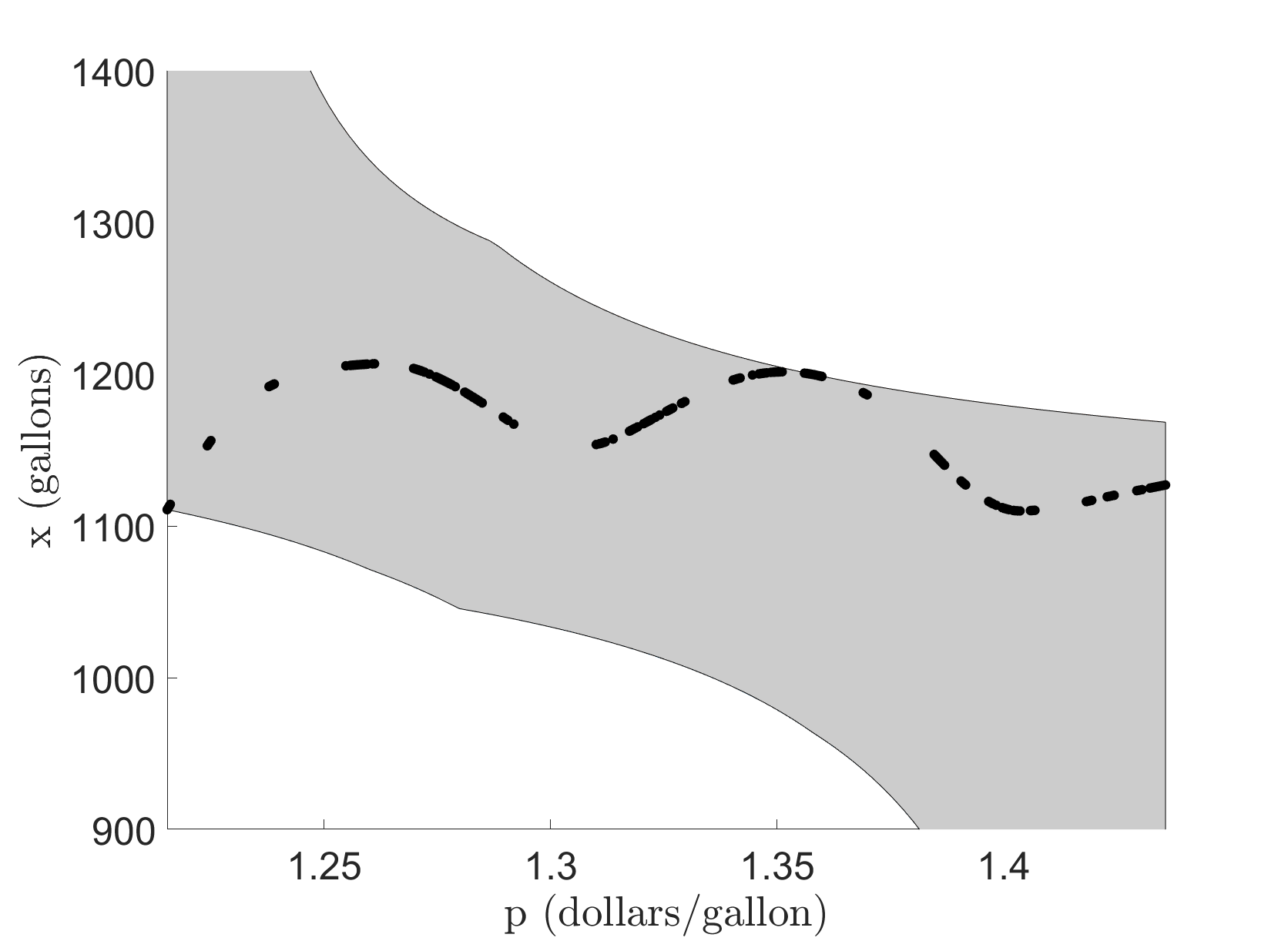}
    \caption{}\label{fig:low_a}
    \end{subfigure}
    \begin{subfigure}[h]{0.48\textwidth}
    \includegraphics[scale = .25]{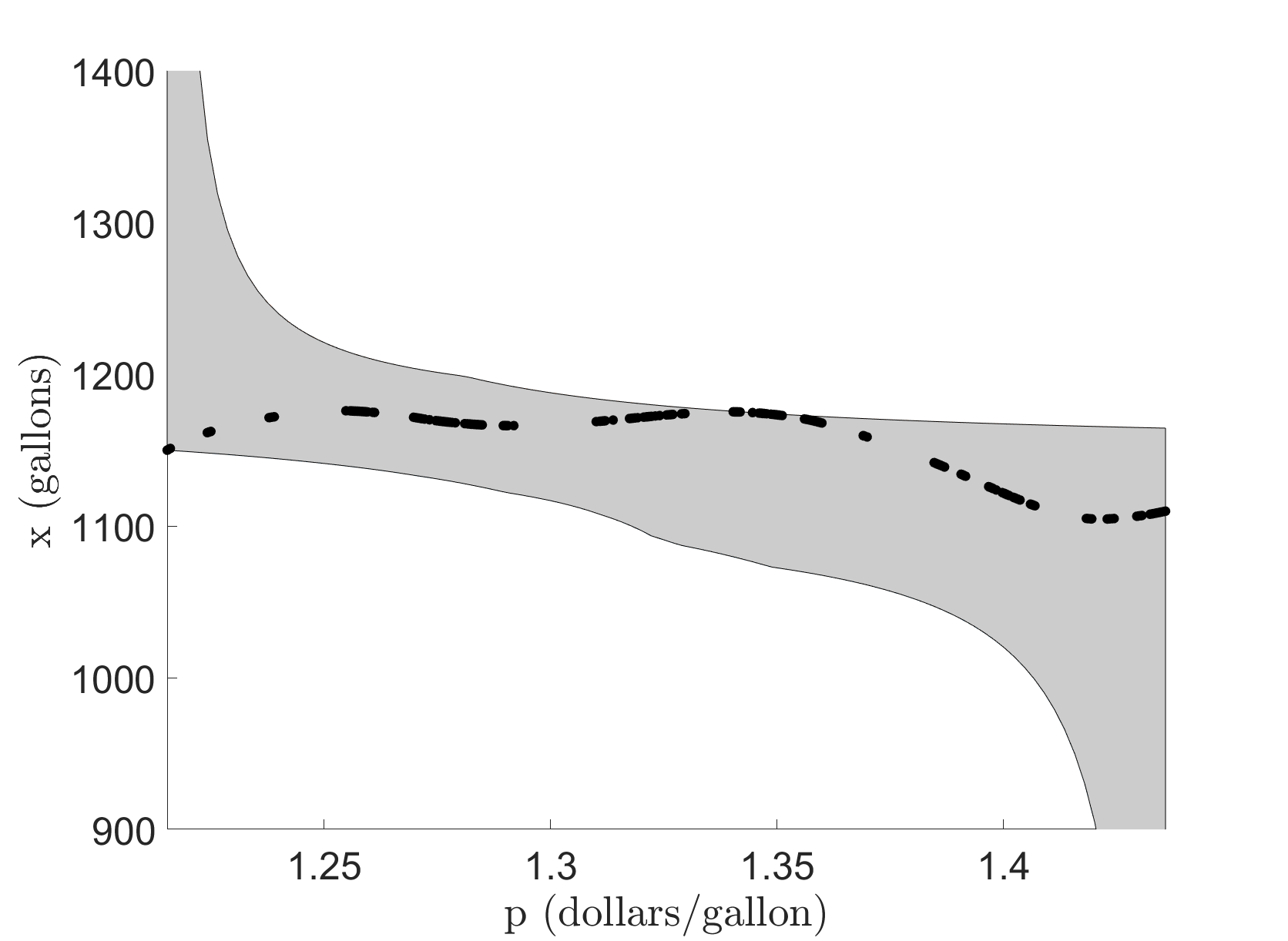}
    \caption{}\label{fig:low_b}
    \end{subfigure}

    \begin{subfigure}[h]{0.48\textwidth}
    \includegraphics[scale = .25]{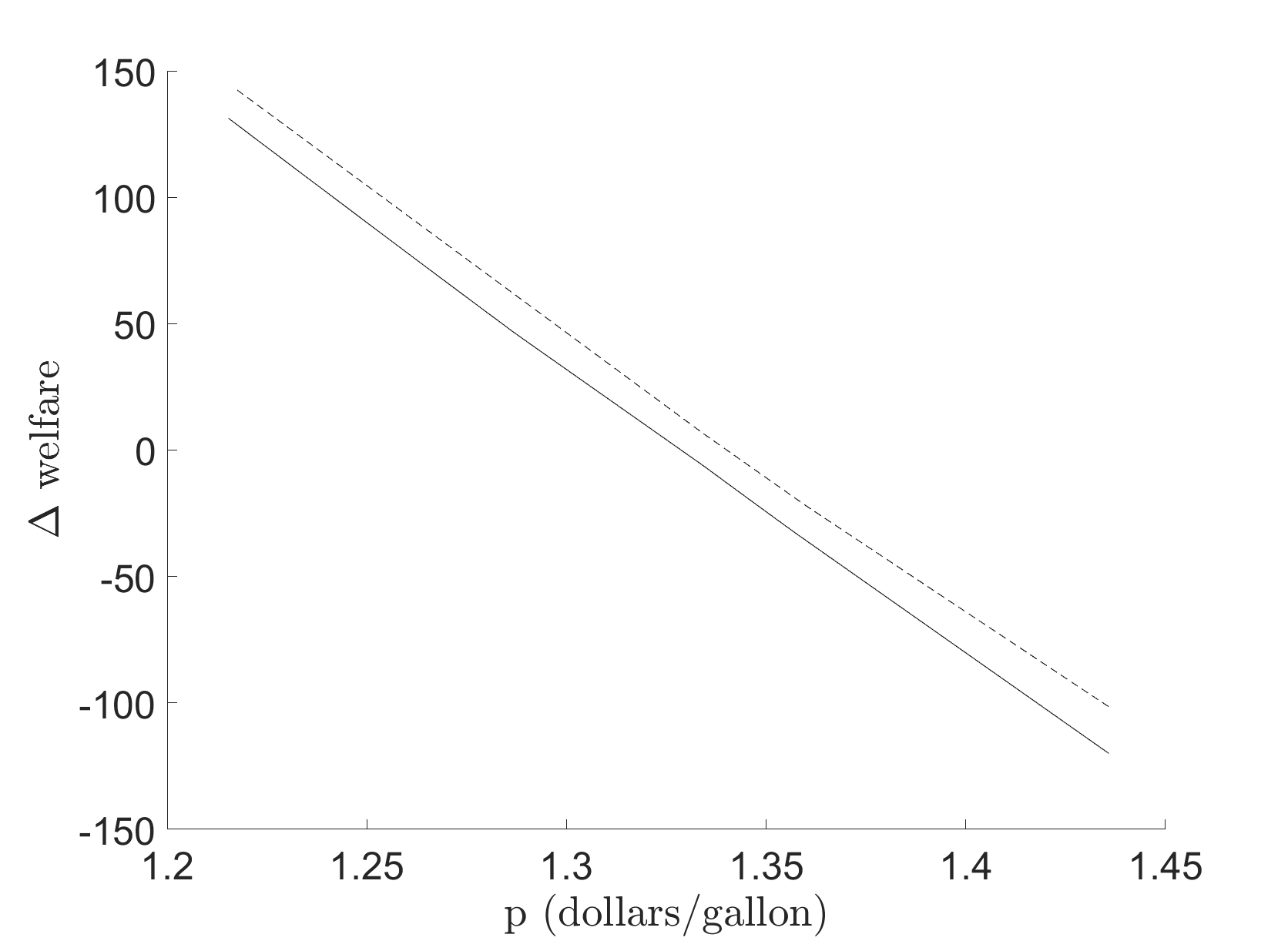}
    \caption{}\label{fig:low_c}
    \end{subfigure}
    \begin{subfigure}[h]{0.48\textwidth}
    \includegraphics[scale = .25]{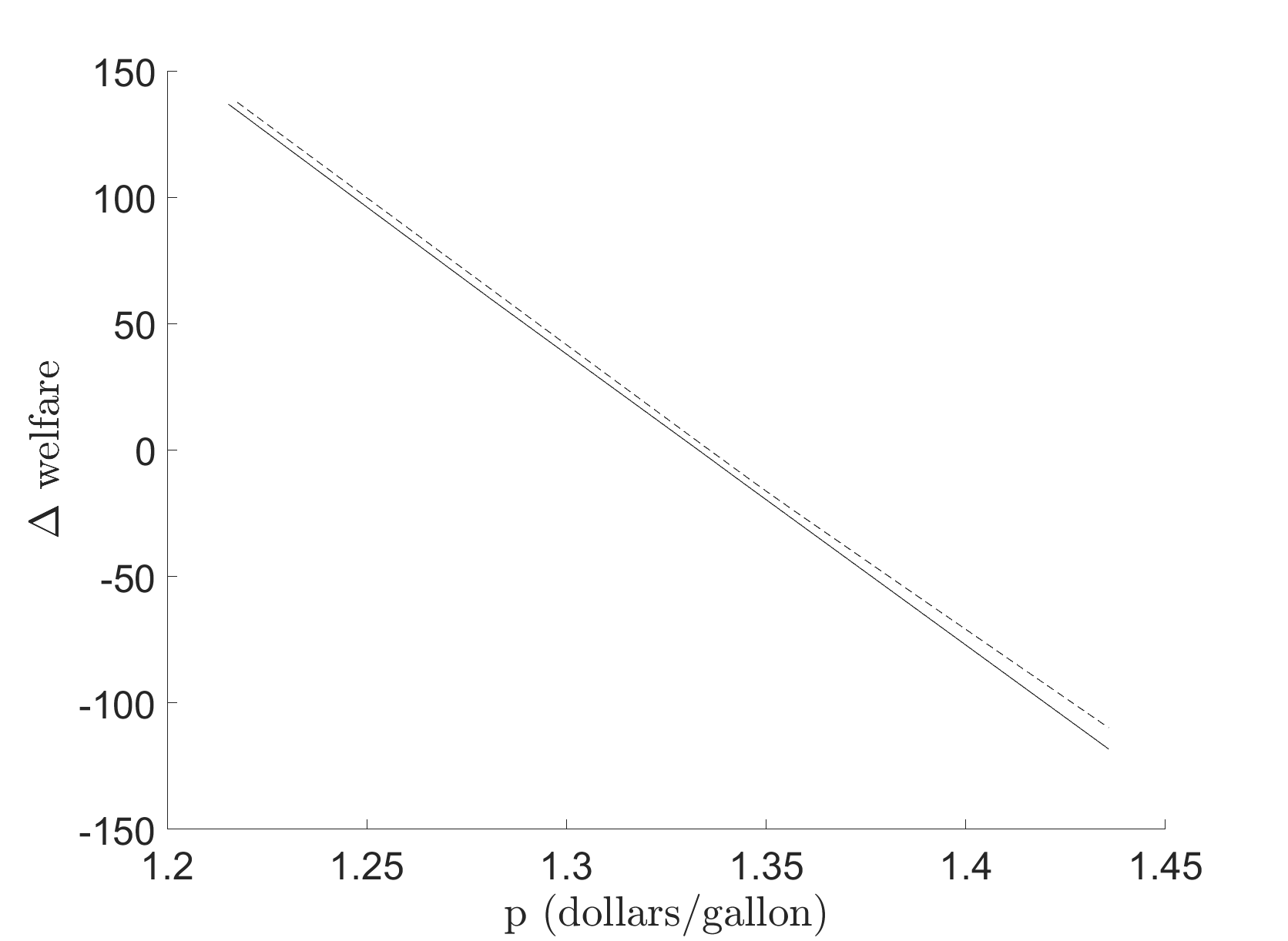}
    \caption{}\label{fig:low_d}
    \end{subfigure}
    
    \begin{subfigure}[h]{0.48\textwidth}
    \includegraphics[scale = .25]{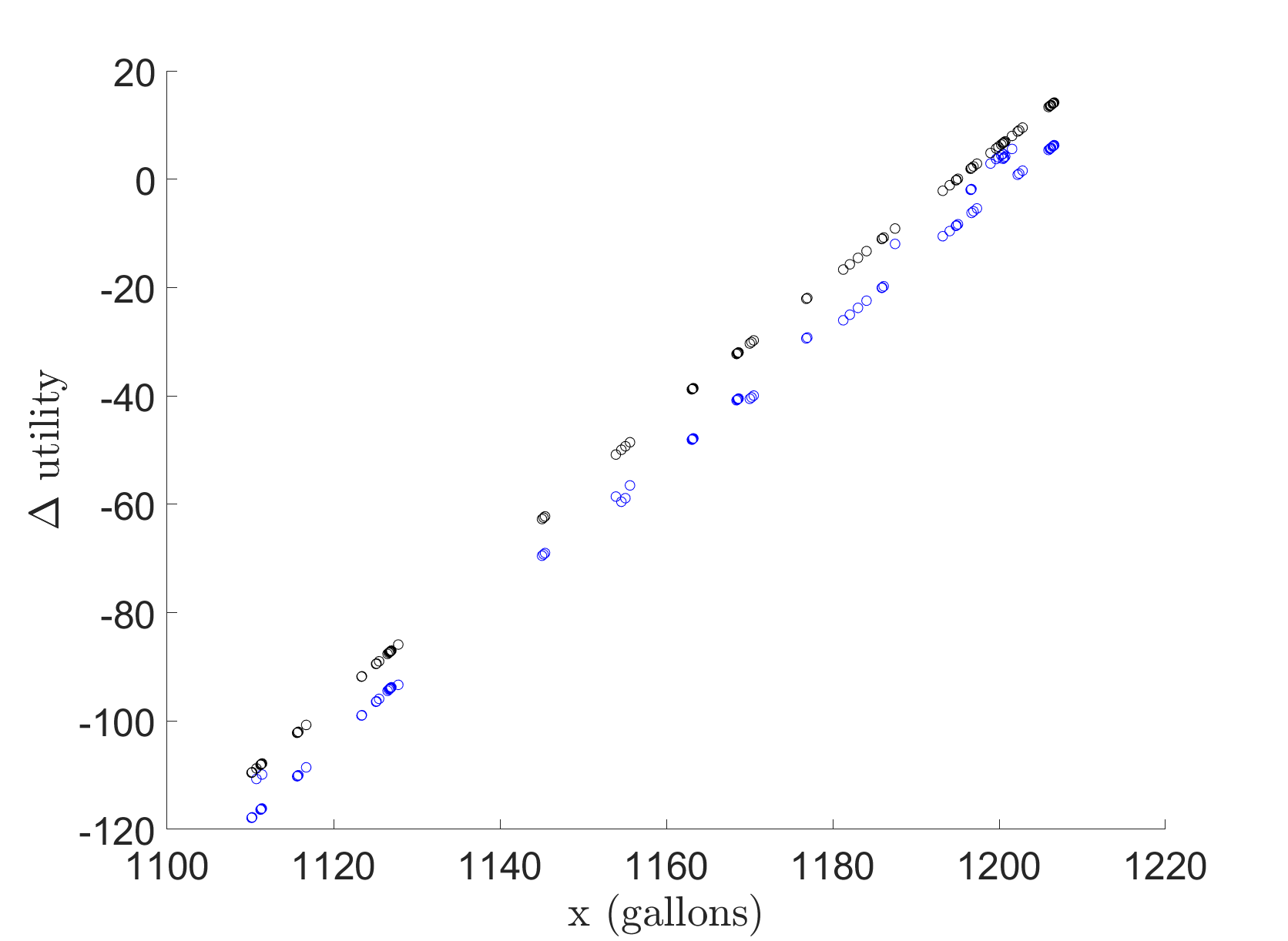}
    \caption{}\label{fig:low_c}
    \end{subfigure}
    \begin{subfigure}[h]{0.48\textwidth}
    \includegraphics[scale = .25]{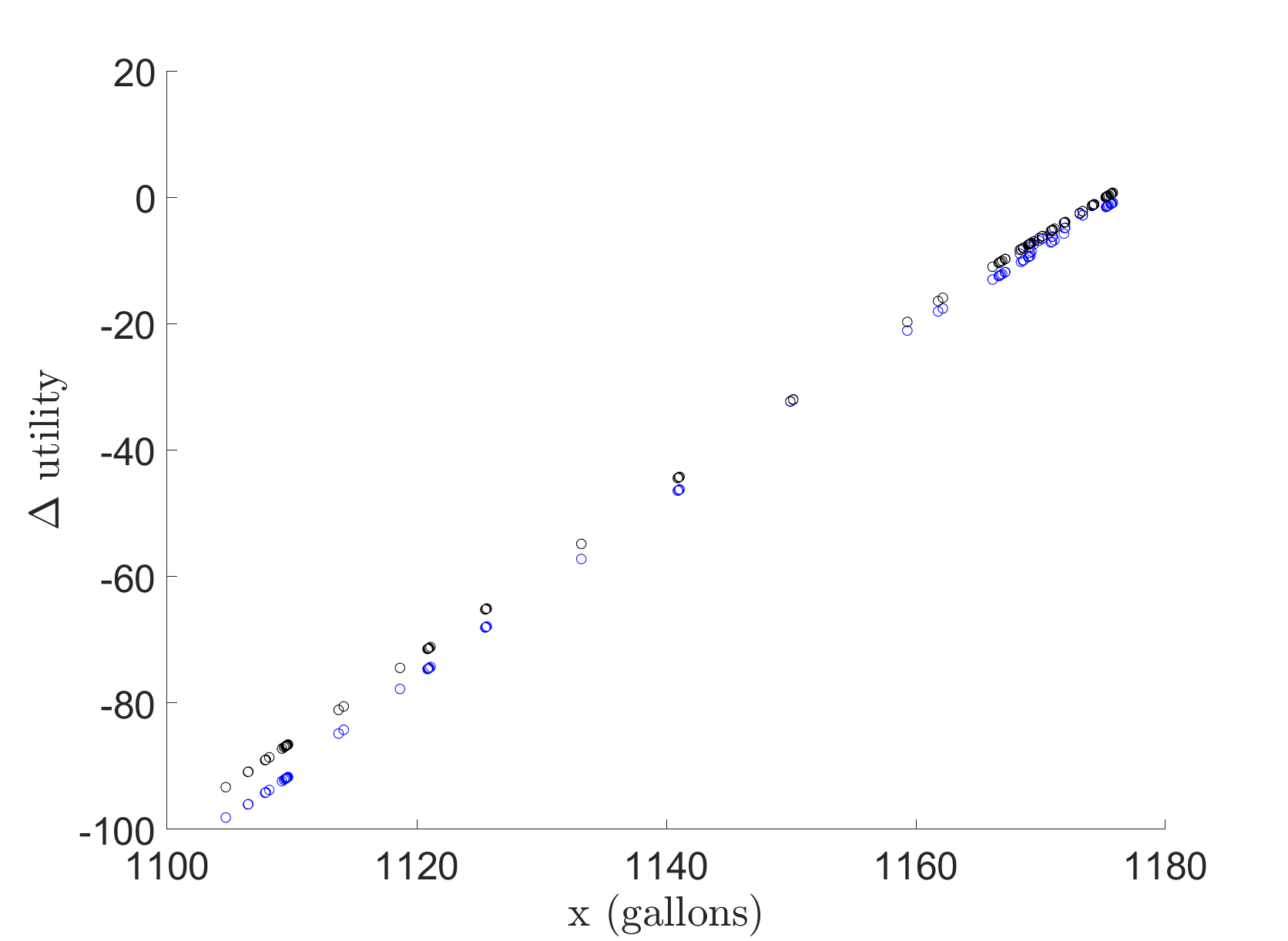}
    \caption{}\label{fig:low_d}
    \end{subfigure}
    
    \caption{Quantity, Approximate Indirect Utility, and Utility Bounds at Two Bandwidths}
    
    \subcaption*{Notes: The top panels depict quantity bounds at new prices. The middle panels depict bounds on approximate indirect utility relative to the mean price. The lower panels depict bounds on utility at certain quantities, relative to the median quantity in the dataset.}
    \label{fig:empirical}
    
\end{figure}

As can be seen from the figures, the choice of bandwidth noticeably alters the informativeness of the counterfactual bounds (upper panels). In contrast, the bounds on approximate indirect utility (middle panels) and utility (lower panels) are relatively narrow for both bandwidths. Similar results obtain for alternative bandwidths and are available upon request.

In the lower panels, a contrast emerges between the lower and upper bounds on utility. Recall that these bounds are for $u(\xa) - u(\xb)$, where $\xb$ is the median quantity, i.e. in the dataset. Because the second argument is in the dataset, Proposition~\ref{prop:ubounds} applies to the upper bounds and establishes monotonicity in the argument $\xa$. The graphs are consistent with this, since the upper bounds are monotone in quantities. In contrast, the lower bounds are not monotone. In order to make lower bounds on utilities monotone, it would be necessary to have the \textit{first argument} be in the dataset and fixed ($\xa$). See Proposition~\ref{prop:lputility}(vi) in Appendix~\ref{a:mainproofs}.

\section{Conclusion} \label{sec:conclusion}

This paper provides a conceptual framework for counterfactual and welfare analysis for approximate models. Our main conceptual assumption is that model approximation error has the same magnitude in new settings as the data we have seen. We formalize this for the quasilinear utility model. This assumption is portable to other settings, and generalizes the standard approach that requires correct specification in both the data we have seen and at hypothetical values.

Engaging with the possibility that a model may not perfectly match data is especially important using the nonparametric revealed preference approach. Indeed, a natural intuition is that if approximation error is ``small,'' then it is second order and we can ignore it for certain questions. Unfortunately, this intuition is false in the standard approach used in the revealed preference literature, since small violations of the model mean it cannot be used for counterfactual or welfare analysis. This paper presents an adaptive approach allowing the analyst to use the model formally viewing it as an approximation. Moreover, our counterfactual/welfare bounds are continuous in the degree of approximation error, and so they continuously transition to the standard framework when approximation error is negligible.

\counterwithin{theorem}{section}
\counterwithin{prop}{section}
\counterwithin{lemma}{section}
\counterwithin{cor}{section}
\counterwithin{assm}{section}
\counterwithin{defn}{section}
\counterwithin{remark}{section}

\begin{appendices}

\section{Proofs of Main Results} \label{a:mainproofs}

This appendix provides proofs of the results in the main text. It also provides explicit descriptions of the linear programs mentioned in the main text. Some of the proofs require additional lemmas contained in Supplemental Appendix~\ref{supp}.

\subsection{Proofs for Section~\ref{sec:counterfactuals}}

\begin{proof}[Proof of Facts \ref{f:const}-\ref{f:min}]
The proofs are in the main text.
\end{proof}

\begin{proof}[Proof of Proposition~\ref{prop:nonempty}]
Emptiness of $X(\tilde{p},D,\eps)$ when $\eps < \eps^*$ is immediate from Fact~\ref{fact:nonempty}. It remains to show that when $\eps \geq \eps^*$, the set $X(\tilde{p},D,\eps)$ is nonempty.

First, fix $(x^1, p^1) \in \{ (x^t, p^t) \}_{t=1}^T$ and let $\Sigma_1$ denote the set of finite sequences of $t \in \{1,\ldots,T\}$ with no cycles that begins at $(x^1, p^1)$. Define
\[
U(x) = \min_{\sigma \in \Sigma_1} \left\{ p^{\sigma(M)} \cdot \left(x - x^{\sigma(M)}\right) +  \sum_{m=1}^{M-1} p^{\sigma(m)} \cdot (x^{\sigma(m+1)} - x^{\sigma(m)})  + M  \eps \right\},
\]
where $\sigma \in \Sigma_1$ is a sequence of length  $M$, for all $m \in \{1,\ldots, M\}$ it follows that $\sigma(m) \in \{1,\ldots,T\}$, and $\sigma(1)=1$. \cite{allen2020satisficing} have shown that for $\eps \geq \eps^*$, this function $\eps$-rationalizes the data in the sense that for each $t \in \{1, \ldots, T\}$ and each $x \in \mathbb{R}^K_{+}$,
\[
U(x^t) - p^t \cdot x^t \geq U(x) - p^t \cdot x - \eps.
\]
The function $U(x)$ need not induce an $\eps$-maximizer when prices take low values. However, the constructed utility can be modified to guarantee maximizers exist. 

To that end, let \[
\overline{U} = \sup_{x \in \Co(\{x^t\}_{t = 1}^T) } U(x)
\]
where $\Co( \{x^t\}_{t = 1}^T )$ denotes the convex hull, i.e. the smallest convex set containing $\{x^t\}_{t = 1}^T$. We see $\overline{U} < \infty$ since $U$ is continuous and $\Co(\{x^t\}_{t = 1}^T )$ is compact. Define $f : \mathbb{R}^K_+ \rightarrow \mathbb{R}$ by $f(x) = \frac{ \sum_{k = 1}^K x_k }{ 1 + \sum_{k = 1}^K x_k } + \overline{U}$, which is bounded and concave. To see this, note the function $h : \mathbb{R}_+ \rightarrow \mathbb{R}$ given by $h(z)=z/(1 + z)$ is concave by inspecting derivatives. Since $f$ is a composition of a concave function and an affine and strictly increasing function it is concave. 

Now construct $\tilde{U}(x) = \min \{ U(x), f(x) \}$. This function rationalizes the data since for $t \in \{1, \ldots, T\}$ and $x \in \mathbb{R}^K_{+}$ we have
\[
\tilde{U}(x^t) - p^t \cdot x^t = U(x^t) - p^t \cdot x^t \geq U(x) - p^t \cdot x - \eps \geq \tilde{U}(x) - p^t \cdot x - \eps.
\]
In addition, $\tilde{U}$ is concave since it is the minimum of concave functions. Similarly, $\tilde{U}$ is  continuous and strictly increasing as it is the minimum of finitely many continuous and strictly increasing functions. It remains to show this utility admits an $\eps$-maximizer for all prices $p\in \mathbb{R}_{++}$.

To that end, note the indirect utility of $f$, denoted
\[
V_f(p) = \sup_{x \in \mathbb{R}^K_+} f(x) - p \cdot x,
\] is everywhere finite over the region $p \in \mathbb{R}^K_{++}$ because $f$ is bounded between $\overline{U}$ and $\overline{U} + 1$. Moreover, since  $\tilde{U} \leq f$ pointwise, we also have $V_{\tilde{U}}(p) \leq V_f(p)$, so that $V_{\tilde{U}}$ is finite for any $p \in \mathbb{R}^K_{++}$. Since $\mathbb{R}^K_{++}$ is open, from Lemma~\ref{lem:existence} we conclude that
\[
\tilde{U}(x) - p \cdot x
\]
admits an exact maximizer in $x$ for any $p \in \mathbb{R}^K_{++}$. In particular, it admits $\eps$-maximizers, completing the proof.
\end{proof}

\begin{proof}[Proof of Proposition~\ref{prop:Xdetail}]
An equivalent definition of $X(\tilde{p},D,\eps)$ is $\tilde{x} \in X(\tilde{p},D,\eps)$ if and only if the augmented dataset $D \cup (\tilde{x},\tilde{p})$ is $\eps$-rationalized by quasilinear utility. From the characterization in Lemma~\ref{lem:epsqrat}(iii), this is equivalent to showing that certain sequences satisfy an inequality. For each sequence involving the augmented dataset there are two cases. If the sequence does not contain $(\tilde{x},\tilde{p})$, then the inequality in Lemma~\ref{lem:epsqrat}(iii) is satisfied because we assume $\eps \geq \eps^*(D)$. (Note that $\eps^*(D)$ is constructed to have this property.) It remains to check sequences involving $(\tilde{x},\tilde{p})$. Rearranging the inequality of Lemma~\ref{lem:epsqrat}(iii), we see that
\begin{equation}
(\tilde{p} - p^{t_M})\cdot \tilde{x} \leq (M+1) \eps + \tilde{p} \cdot x^{t_1}  - p^{t_M} \cdot x^{t_M} - \sum_{m = 1}^{M-1} p^{t_m} \cdot (x^{t_m} - x^{t_{m+1}})
\end{equation}
must hold for all finite sequences $\{ t_m \}_{m = 1}^M$ without cycles where $t_m \in \{1, \ldots, T\}$ and $M \geq 1$. Here we have the $M+1$ coefficient on $\eps$ since the sequences include the counterfactual observation and a length $M$ sequence. This characterizes the set $X(\tilde{p},D,\eps)$ as an intersection of finitely many half-spaces. Thus, $X(\tilde{p},D,\eps)$ is a closed, convex polyhedron.
\end{proof}

To prove Proposition~\ref{prop:qcounterfactuals}, we prove a stronger result that explicitly describes the linear program.

\begin{prop} \label{a:qcounterfactuals}
For a dataset $D = \{ (x^t, p^t) \}_{ t = 1}^T$, let $\eps \geq \eps^*$. Then whenever $X_k(\tilde{p},D,\eps)$ is bounded above, its maximum is given by the linear program
\begin{align*}
\overline{x}_k(\tilde{p},\eps) = \max_{ \substack{ \tilde{x} \in \mathbb{R}_+^K \\ u^1, \ldots, u^T, \tilde{u} \in \mathbb{R}_+}} &\quad \tilde{x}_k & \\
        \text{s.t.} \quad u^s & \le u^r + p^r \cdot (x^s-x^r) + \varepsilon &\quad \text{for all}\; r,s \in \{1,\ldots, T\} \\
        \tilde{u} & \le u^r + p^r \cdot (\tilde{x}-x^r) + \varepsilon &\quad \text{for all}\; r \in \{1,\ldots, T\} \\
        u^{r} & \le \tilde{u} + \tilde{p} \cdot (x^r-\tilde{x}) + \varepsilon &\quad \text{for all}\; r \in \{1,\ldots, T\}.
\end{align*}
The upper bound $\overline{x}_k(\tilde{p},\eps)$ may be equivalently calculated as
\begin{align*}
\overline{x}_k(\tilde{p},\eps) = \max_{ \tilde{x} \in \mathbb{R}_+^K } &\quad \tilde{x}_k \\
\text{s.t.}&\quad (\tilde{p} - p^{t_M})\cdot \tilde{x} \leq (M+1) \eps + \tilde{p} \cdot x^{t_1} - p^{t_M} \cdot x^{t_M} - \sum_{m = 1}^{M-1} p^{t_m} \cdot (x^{t_m} - x^{t_{m+1}}),
\end{align*}
where this inequality must hold for all finite sequences $\{t_m\}_{m=1}^M$ with $t_m \in \{1,\ldots,T\}$ and $M\ge 1$. The value of  $\underline{x}_k(\tilde{p}, \eps)$ is calculated as the minimum of the objective with either constraint set of the above linear programs. The value $\underline{x}_k(\tilde{p}, \eps)$ is weakly increasing in $\eps$ over the region $\eps \geq \eps^*$, and $\underline{x}_k (\tilde{p},\eps)$ is weakly decreasing in $\eps$ over the region $\eps \geq \eps^*$.
\end{prop}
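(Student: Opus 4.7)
The plan is to derive both linear program formulations from the Afriat-style characterization of $\eps$-rationalizability given in Lemma~\ref{lem:epsqrat} together with the half-space description of $X(\tilde{p},D,\eps)$ proved in Proposition~\ref{prop:Xdetail}; the monotonicity and attainment claims then fall out as corollaries. In this sense the result is largely bookkeeping around tools already developed, which is also why I do not expect a genuinely hard step.

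First I would verify the primal LP. By the definition of $X(\tilde{p},D,\eps)$, a candidate $\tilde{x}$ lies in the set if and only if the augmented dataset $D \cup (\tilde{x},\tilde{p})$ is $\eps$-rationalized by a quasilinear utility. Lemma~\ref{lem:epsqrat} characterizes this via the existence of utility numbers $u^1,\ldots,u^T,\tilde{u}$ satisfying the pairwise Afriat-style inequalities for every ordered pair in the augmented dataset. Enumerating these pairs gives exactly the three families of inequalities stated: (i) in-sample versus in-sample, (ii) in-sample versus counterfactual, and (iii) counterfactual versus in-sample. The non-negativity constraints on the $u$'s are innocuous since quasilinear utility is invariant under additive translation, so one can always shift the utility numbers to be non-negative. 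Projecting the feasible set onto $\tilde{x}$ and maximizing (resp.\ minimizing) $\tilde{x}_k$ then delivers $\overline{x}_k(\tilde{p},\eps)$ and $\underline{x}_k(\tilde{p},\eps)$.

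For the equivalent reduced LP I would appeal directly to Proposition~\ref{prop:Xdetail}, which already identifies the projection of the above feasible set onto $\tilde{x}$ with the polyhedron defined by the telescoped cycle-free inequalities
\[
(\tilde{p} - p^{t_M})\cdot \tilde{x} \le (M+1)\eps + \tilde{p}\cdot x^{t_1} - p^{t_M}\cdot x^{t_M} - \sum_{m=1}^{M-1} p^{t_m}\cdot(x^{t_m}-x^{t_{m+1}}).
\]
Since both LPs share the same projected feasible set, they share the same optimal value. Monotonicity in $\eps$ is then immediate from the shape of these constraints: each takes the form $a\cdot\tilde{x}\le b + c\eps$ with $c = M+1 \ge 0$, so the projected feasible set weakly expands as $\eps$ increases. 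Hence $\overline{x}_k(\tilde{p},\eps)$ is weakly increasing and $\underline{x}_k(\tilde{p},\eps)$ is weakly decreasing in $\eps$ over $\eps \ge \eps^*$. (The printed sentence appears to contain a typographical error asserting both statements for $\underline{x}_k$; I read the intended claim as the symmetric one, consistent with Proposition~\ref{prop:Xfinite}.)

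Finally, for attainment, when $\overline{x}_k(\tilde{p},\eps)$ is finite the projected feasible set is a closed polyhedron, nonempty by Proposition~\ref{prop:nonempty} since $\eps\ge\eps^*$, and the linear objective $\tilde{x}\mapsto\tilde{x}_k$ is bounded above on it by hypothesis, so the supremum is attained at an extreme point; the analogous argument handles $\underline{x}_k$, where non-negativity of quantities guarantees the infimum is finite. The only step that requires any real thought is the equivalence between the primal and reduced LPs, and even that was essentially settled in the proof of Proposition~\ref{prop:Xdetail} via a telescoping elimination of the $u$ variables, so I expect no genuine obstacle here.
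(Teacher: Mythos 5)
Your proposal is correct and follows essentially the same route as the paper's proof: the primal LP comes from applying the Afriat-style characterization (Lemma~\ref{lem:epsqrat}) to the augmented dataset, the reduced LP is the projection already established in Proposition~\ref{prop:Xdetail}, attainment follows from boundedness of the LP value (the paper cites Lemma~\ref{lem:lpexist}), and monotonicity follows from the feasible region expanding in $\eps$ exactly as in the proof of Proposition~\ref{prop:Xfinite}. Your reading of the final sentence as containing a typo (the first occurrence of $\underline{x}_k$ should be $\overline{x}_k$) matches the paper's intent.
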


The first linear program is easy to implement as it has order $(T+1)^2$ constraints and $T+1+K$ unknowns. The second linear program is useful to understand the mapping from data to bounds. However, directly operationalizing the second linear program would require enumerating all finite sequences of the dataset that do not contain cycles, which is computationally costly.

Related bounds have appeared in \cite{chiong2017counterfactual} and \cite{ARidentification}, which focus on latent utility models with observable characteristics of goods other than prices. The result here differs since $\eps$ can be nonzero and the first set of bounds directly describes a convenient linear program used to compute bounds.\footnote{\cite{chiong2017counterfactual} essentially start with the second formulation of the bounds (in terms of cycles) and show that while there are many cycles, only a certain number are effectively binding. The first formulation of Proposition~\ref{a:qcounterfactuals} complements their analysis by describing an explicit linear program with order $(T+1)^2$ scalar inequalities. \cite{ARidentification} describe bounds in certain models with characteristics in place of prices, and use a characterization similar to the cycles condition, but do not study extreme points or describe computations.} We take $\overline{x}_k$ to be positive infinity when there is no finite upper bound. 

\begin{proof}[Proof of Proposition~\ref{a:qcounterfactuals} (and Proposition~\ref{prop:qcounterfactuals})]
The linear programming formulations are immediate from Lemma~\ref{lem:epsqrat} and the proof of Proposition~\ref{prop:Xdetail}. From Lemma~\ref{lem:lpexist}, the maximum is attained because the linear program has a bounded value function by construction. Recall that for the second formulation, we only need to consider cycles involving the counterfactual quantity-price tuple because we have assumed $\eps \geq \eps^*$. Recall that all cycles that do not involve the counterfactual quantity are necessarily less than $\eps^*$ and will not bind. We leveraged these properties in Proposition~\ref{prop:Xdetail} already. The proof for $\underline{x}_k$ are analogous and are omitted.

Now we argue that these bounds cannot be improved under Assumption~\ref{assm:prime}. To see this, note from Proposition~\ref{prop:Xfinite} that the bounds are weakly monotone in $\eps$. Moreover, when $\eps < \eps^*$ we know the set $X(\tilde{p},D,\eps)$ (and hence $X_k(\tilde{p},D,\eps)$) is empty from Proposition~\ref{prop:nonempty}.
\end{proof}

\begin{proof}[Proof of Proposition~\ref{prop:Xfinite}]
We begin by showing that $\overline{x}_k(\tilde{p},\eps)$ is finite if and only if $\tilde{p} \in \text{int} \text{CCo} ( \{p^t\}_{t=1}^T )$. First, let $\tilde{p} \in \text{int} \text{CCo} ( \{p^t\}_{t=1}^T )$ so that 
$\tilde{p} > \sum_{t=1}^T \alpha_t p^t$ for some nonnegative $\alpha_t$ such that $\sum_{t}^T \alpha_t=1$. Note that for each $t \in \{1,\ldots T\}$, the approximate law of demand yields
\[ (\tilde{p}-p^t)\cdot (\tilde{x}-x^t) \le 2\eps .\]
Multiplying by $\alpha^{t}$ and summing up the inequalities gives that 
\[
\left(\tilde{p}-\sum_{t=1}^{T} \alpha_t p^{t} \right)\cdot \left(\tilde{x}-\sum_{t=1}^L \alpha_t x^{t} \right) \le 2\eps .\]
Thus, 
\[ \left(\tilde{p}- \sum_{t=1}^{T} \alpha_t p^{t} \right) \cdot \tilde{x} \le 2\eps + \left(\tilde{p}-\sum_{t=1}^{T} \alpha_t p^{t} \right) \cdot \sum_{t=1}^L \left(\alpha_t x^{t} \right) .\]
Since $\left( \tilde{p}- \sum_{t=1}^{T} \alpha_t p^{t} \right)>0$ and $\tilde{x}\in \mathbb{R}_{+}^K$ one can bound the values on each dimension of $\tilde{x}$ so that 
\[ \tilde{x} \in \prod_{k=1}^K \left[0, \frac{2\eps + \left(\tilde{p}-\sum_{t=1}^{T} \alpha_t p^{t}\right) \cdot \sum_{t=1}^L \alpha_t x^{t})}{\tilde{p}_k- \sum_{t=1}^{T} \alpha_t p_k^{t}} \right]. \]
This shows that $\overline{x}_k(\tilde{p},\eps)$ is finite when $\tilde{p} \in \text{int} \text{CCo} ( \{p^t\}_{t=1}^T )$.

Next, we show when $\tilde{p} \notin \text{int} \text{CCo} ( \{p^t\}_{t=1}^T )$ that $\overline{x}_k(\tilde{p},\eps)$ is unbounded. Suppose that $\tilde{p} \notin \text{int} \text{CCo}( \{p^t\}_{t=1}^T )$. This means for all $t \in \{1,\ldots,T\}$ that $\tilde{p}\le p^t$. From Proposition~\ref{prop:Xdetail} we know $\tilde{x} \in X(\tilde{p},D,\eps)$ if and only if for any sequence $\{t_m\}_{m=1}^M$ with $M \geq 1$,
\[
(\tilde{p}-p^{t_M})\cdot(\tilde{x}-x^{t_1}) \le (M+1)\eps + p^{t_M} \cdot x^{t_1} - p^{t_M} \cdot x^{t_M} - \sum_{m=1}^{M-1} p^{t_m}\cdot (x^{t_m}-x^{t_{m+1}}).
\]
Note that the right hand side of the expression is always weakly positive. Moreover, $\tilde{p}-p^{t_M}\le 0$ for every dimension. If all terms are zero, this places no restrictions on $\tilde{x}$ so that one can choose arbitrarily positive amounts of every good. If there is a dimension $k$ such that $\tilde{p}_k-p_k^{t_M}<0$, then one can choose arbitrarily high amounts of $\tilde{x}_k$ to satisfy all such inequalities. This establishes that $\overline{x}_k(\tilde{p},\eps)$ is unbounded above.

Note that the lower bound $\underline{x}_k(\tilde{p},\eps)$ is always finite because it is bounded below by $0$.

To show monotonicity in $\eps$ note that the feasibility region is weakly increasing (with regard to set inclusion) as $\eps$ increases. Thus, $\overline{x}_k$ is weakly increasing in $\eps$, and $\underline{x}_k$ is weakly decreasing in $\eps$.
\end{proof} 

\begin{proof}[Proof of Proposition~\ref{prop:singleprice}]
First, let $\Sigma$ be the set of sequences $\{t_m\}_{m=1}^M$ that contain no cycles where  $t_m \in \{1,\ldots,T\}$. From Proposition~\ref{prop:Xdetail}, the counterfactual bounds on demand are given by inequalities of the form
\begin{equation}\label{eq:CDbounds}
(\tilde{p} - p^{t_M})\tilde{x} \leq (M+1) \eps + \tilde{p} x^{t_1}  - p^{t_M} x^{t_M} - \sum_{m = 1}^{M-1} p^{t_m} (x^{t_m} - x^{t_{m+1}}),
\end{equation}
which much hold for every sequence in $\Sigma$. Dot products are removed since all objects are one-dimensional.

Whether a certain inequality of the form (\ref{eq:CDbounds}) provides an upper bound or lower bound on $\tilde{x}$ depends on the sign of $\tilde{p} - p^{t_M}$. To see this, note that rearranging Equation~\ref{eq:CDbounds} when $\tilde{p}>p^{t_M}$ yields
\begin{align}
 \tilde{x} &\le \frac{ (M+1) \eps + \tilde{p} x^{t_1}  - p^{t_M} x^{t_M} - \sum_{m = 1}^{M-1} p^{t_m} (x^{t_m} - x^{t_{m+1}}) }{\tilde{p} - p^{t_M}} \nonumber\\
    &=x^{t_1} + \frac{(M+1)\eps + p^{t_M}x^{t_{1}} - p^{t_M} x^{t_M} - \sum_{m = 1}^{M-1} p^{t_m} (x^{t_m} - x^{t_{m+1}}) }{\tilde{p} - p^{t_M}} \label{propmon:second}. 
\end{align}
Note the the expression in Equation~\ref{propmon:second} that is divided by $\tilde{p}-p^{t_m}$ is positive since the terms above are those for a cycle of length $M$. To see this, note that
\[
p^{t_M} (x^{t_{M}} - x^{t_1}) + \sum_{m = 1}^{M-1} p^{t_m} (x^{t_m} - x^{t_{m+1}}) \leq M \eps \leq (M+1) \eps,
\]
where the first inequality holds because $\eps \geq \eps^*$ and the left hand side is a sequence of length $M$. Thus, such sequences constitute upper bounds. 

When instead $\tilde{p} < p^{t_M}$, the sequence yields a lower bound since one is dividing by a negative number. Since the sign of the difference matters, we partition the set of sequences in $\Sigma$ as follows. We consider the counterfactual prices where $\tilde{p}^0<\tilde{p}^1$ without loss of generality. Let $\{t_m\}_{m=1}^M = \sigma \in \Sigma^1$ when $p^{t_M} \le \tilde{p}^0 < \tilde{p}^1$. Let $\{t_m\}_{m=1}^M = \sigma \in \Sigma^0$ when $\tilde{p}^0 < p^{t_M} < \tilde{p}^1$. Lastly, let $\{t_m\}_{m=1}^M = \sigma \in \Sigma^{-1}$ when $\tilde{p}^0 < \tilde{p}^1 \le p^{t_M}$. 

Upper bounds on counterfactual demand for the price $\tilde{p}^0$ involve sequences where  $p^{t_M} \leq \tilde{p}^0$ (i.e. sequences in $\Sigma^1$). Upper bounds on counterfactual demand for the price $\tilde{p}^1$ involve sequences where $p^{t_M} \leq \tilde{p}^1$ (i.e. sequences in $\Sigma^1 \cup \Sigma^0$). We denote the upper bound inequalities by
\[ UB(\tilde{p}^0) = \{ \tilde{x} \in \mathbb{R}_+ \mid \text{Equation~\ref{eq:CDbounds} holds for sequences} \; \sigma \in \Sigma^1  \; \text{with} \; \tilde{p}=\tilde{p}^0 \} \]
and 
\[ UB(\tilde{p}^1)= \{ \tilde{x} \in \mathbb{R}_+ \mid \text{Equation~\ref{eq:CDbounds} holds for sequences} \; \sigma \in \Sigma^1 \cup \Sigma^0 \; \text{with} \; \tilde{p}=\tilde{p}^1 \} .\]

We use Equation~(\ref{propmon:second}) to show that $UB(\tilde{p}^1) \subseteq  UB(\tilde{p}^0)$. First, if $\tilde{x} \in UB(\tilde{p}^1)$ then for every sequence $\{t_m\}_{m=1}^M=\sigma \in \Sigma^1$ with $p^{t_M}<\tilde{p}^0<\tilde{p}^1$ it follows that
\begin{align*}
 \tilde{x} &\le x^{t_1} + \frac{(M+1)\eps + p^{t_M}x^{t_{1}} - p^{t_M} x^{t_M} - \sum_{m = 1}^{M-1} p^{t_m} (x^{t_m} - x^{t_{m+1}}) }{\tilde{p}^1 - p^{t_M}} \nonumber \\
    &\le x^{t_1} + \frac{(M+1)\eps + p^{t_M}x^{t_{1}} - p^{t_M} x^{t_M} - \sum_{m = 1}^{M-1} p^{t_m} (x^{t_m} - x^{t_{m+1}}) }{\tilde{p}^0 - p^{t_M}}
\end{align*}
where the second inequality holds since the numerator is positive and $0<\tilde{p}^0-p^{t_M}<\tilde{p}^1-p^{t_M}$. If $\tilde{p}^0=p^{t_M}$ for a sequence $\sigma \in \Sigma^1$, then there is no restriction on the counterfactual demands. Since $UB(\tilde{p}^0)$ only is restricted by sequences in $\Sigma^1$ while $UB(\tilde{p}^1)$ is restricted by sequences in $\Sigma^1$ and $\Sigma^0$, this shows $UB(\tilde{p}^1) \subseteq  UB(\tilde{p}^0)$. This proves that $\bar{x}(\tilde{p}^1,\eps) \le \bar{x}(\tilde{p}^0,\eps)$ since $\bar{x}$ is the maximum, the upper bounds satisfy $UB(\tilde{p}^1) \subseteq  UB(\tilde{p}^0)$, and a maximum over a larger set is weakly larger.

Next note that the lower bounds on counterfactual demand $\tilde{x}$ are given by the following 
\[ LB(\tilde{p}^0) = \{ \tilde{x} \in \mathbb{R}_+ \mid \text{Equation~\ref{eq:CDbounds} holds for sequences} \; \sigma \in \Sigma^{-1}\cup \Sigma^0  \; \text{with} \; \tilde{p}=\tilde{p}^0 \} \]
and 
\[ LB(\tilde{p}^1)= \{ \tilde{x} \in \mathbb{R}_+ \mid \text{Equation~\ref{eq:CDbounds} holds for sequences} \; \sigma \in \Sigma^{-1} \; \text{with} \; \tilde{p}=\tilde{p}^1 \} .\]
To see this, note that rearranging Equation~\ref{eq:CDbounds} when $\tilde{p}<p^{t_M}$ yields
\begin{align*}
 \tilde{x} &\ge x^{t_1} - \frac{(M+1)\eps + p^{t_M}x^{t_{1}} - p^{t_M} x^{t_M} - \sum_{m = 1}^{M-1} p^{t_m} (x^{t_m} - x^{t_{m+1}}) }{p^{t_M}-\tilde{p}}. 
\end{align*}

We now show that $LB(\tilde{p}^0) \subseteq  LB(\tilde{p}^1)$. If $\tilde{x} \in LB(\tilde{p}^0)$, then for every sequence $\{t_m\}_{m=1}^M=\sigma \in \Sigma^{-1}$ with $\tilde{p}^0<\tilde{p}^1<p^{t_M}$ it follows that  
\begin{align*}
\tilde{x} &\ge x^{t_1} - \frac{(M+1)\eps + p^{t_M}x^{t_{1}} - p^{t_M} x^{t_M} - \sum_{m = 1}^{M-1} p^{t_m} (x^{t_m} - x^{t_{m+1}}) }{p^{t_M}-\tilde{p}^0} \\
    &\ge x^{t_1} - \frac{(M+1)\eps + p^{t_M}x^{t_{1}} - p^{t_M} x^{t_M} - \sum_{m = 1}^{M-1} p^{t_m} (x^{t_m} - x^{t_{m+1}}) }{p^{t_M}-\tilde{p}^1}
\end{align*}
since the term being subtracted weakly increases when dividing by a smaller difference since $0<p^{t_M}-\tilde{p}^0<p^{t_M}-\tilde{p}^1$. (Recall the numerator in each fraction is positive.) When the sequence $\sigma \in \Sigma^{-1}$ has $\tilde{p}^{1}=p^{t_M}$ there is no restriction on counterfactual demands. Since $LB(\tilde{p}^1)$ only is restricted from sequences in $\Sigma^{-1}$ while $LB(\tilde{p}^0)$ is restricted by sequences in $\Sigma^{-1}$ and $\Sigma^0$, this shows $LB(\tilde{p}^0) \subseteq  LB(\tilde{p}^1)$. This also shows that $\underline{x}(\tilde{p}^1,\eps) \le \underline{x}(\tilde{p}^0,\eps)$ since $\underline{x}$ is a minimum, the constraint set on the lower bounds  $LB(\tilde{p}^0) \subseteq  LB(\tilde{p}^1)$, and a minimum over a smaller set is weakly larger.

\end{proof}

%Miscellaneous stuff that was in the proof
\iffalse
Similarly, lower-bounds are given by
\begin{equation}\label{eq:counterub} LB(k,\tilde{p}) = \left\{ \tilde{x}_k \mid \forall \; r \in \{1,\ldots, T\} \quad  \tilde{x}_k \ge \frac{\tilde{u}}{p_k^r}-\frac{u^r}{p_k^r}-\frac{p^{r}_{-k}\tilde{x}_{-k}}{p_k^r} + \frac{p^rx^r}{p_k^r} + \frac{\eps}{p^r_k}  \right\} \end{equation}.
\[ \tilde{u} & \le u^r + p^r \cdot (\tilde{x}-x^r) + \varepsilon \quad \text{for all}\; r \in \{1,\ldots, T\} \]
\fi

\subsection{Proofs for Section~\ref{sec:welfare}}

\subsubsection{Proofs for Section~\ref{sec:utilitybounds}}

Propositions~\ref{prop:utilitylp} and~\ref{prop:ubounds} are proven together in the following result.
\begin{prop} \label{prop:lputility}
For a dataset $\{ (x^t, p^t) \}_{ t = 1}^T$, let $\eps \geq \eps^*$.

\begin{enumerate}[i.]
\item If $\xb$ is in the dataset, i.e. $\xb = x^S$ for some $S \in \{1, \ldots, T \}$, and $\xa \neq \xb$, then 
\begin{align*}
\overline{u}(\xa, \xb, \eps) & = \max_{ \substack{ u^1, \ldots, u^T, \tilde{u}^{1} \in \mathbb{R}_+}} \quad \tilde{u}^{1} - u^S \\
\text{s.t.}&\quad u^s \le u^r + p^r \cdot (x^s-x^r) + \varepsilon \quad \text{for all}\; r,s \in \{1,\ldots, T\} \\
                & \quad \tilde{u}^{1} \le u^r + p^r \cdot (\xa-x^r) + \varepsilon \quad \text{for all}\; r \in \{1,\ldots, T\} \\ 
                & \quad \tilde{u}^{1} = u^r \quad \textit{for all}\; r \in \{1, \ldots, T \} \; \textit{ with } \xa = x^r.
\end{align*}

\item If $\xb = x^S$ is in the dataset and $\xa \neq \xb$, then the upper bound is equivalently given by \[
\overline{u}(\xa, \xb, \eps) = \min_{\sigma \in \Sigma_{S}} \left\{ p^{\sigma(M)}\cdot(\xa-x^{\sigma(M)}) + \sum_{m=1}^{M-1}p^{\sigma(m)}\cdot(x^{\sigma(m+1)}-x^{\sigma(m)}) + M\eps \right\},
\] 
where $\Sigma_S$ is the set of sequences that start with $\sigma(1)=S$, have no cycles, and have length at least $M\ge 1$. 

\item If $\xb = x^S$ is in the dataset, the function $\overline{u}$ is strictly increasing and continuous in $(\xa,\eps)$ over the region that excludes $\xa = \xb$. In particular, under Assumption~\ref{assm:maint} the bound cannot be improved.

\item If $\xa$ is in the dataset, i.e. $\xa = x^F$ for some $F \in \{1, \ldots, T \}$, and $\xa \neq \xb$, then 
\begin{align*}
\overline{u}(\xa, \xb, \eps) & = \min_{ \substack{ u^1, \ldots, u^T, \tilde{u}^{0} \in \mathbb{R}_+}} \quad u^{F} - \tilde{u}^{0} \\
\text{s.t.}&\quad u^s \le u^r + p^r \cdot (x^s-x^r) + \varepsilon \quad \text{for all}\; r,s \in \{1,\ldots, T\} \\
                &\quad \tilde{u}^{0} \le u^r + p^r \cdot (\xb-x^r) + \varepsilon \quad \text{for all}\; r \in \{1,\ldots, T\} \\
                & \quad \tilde{u}^{0} = u^r \quad \textit{for all}\; r \in \{1, \ldots, T \} \textit{ with } \xb = x^r.
\end{align*}

\item 
If $\xa = x^F$ is in the dataset and $\xa \neq \xb$, then the lower bound is equivalently given by
\[
\underline{u}(\xa, \xb, \eps) = \max_{ \sigma \in \Sigma_{F} } \left\{ p^{\sigma(M)}\cdot(x^{\sigma(M)} - \xb) + \sum_{m=1}^{M-1} p^{\sigma(m)}\cdot(x^{\sigma(m)}-x^{\sigma(m+1)}) - M\eps \right\}.
\]

\item If $\xa = x^F$ is in the dataset, the function $\underline{u}$ is strictly decreasing and continuous in $(\xb,\eps)$ over the region that excludes $\xb = \xa$. In particular, under Assumption~\ref{assm:maint} the bound cannot be improved.
\end{enumerate}
\end{prop}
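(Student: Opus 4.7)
The plan is to prove the six parts of Proposition~\ref{prop:lputility} by (1) establishing the LP in part (i) as a direct translation of $\eps$-rationalizability, (2) obtaining the cycles formula in (ii) as the LP dual, (3) reading monotonicity and continuity in (iii) off the finite-min formula, and (4) deducing (iv)--(vi) from the symmetry $\underline{u}(\xa,\xb,\eps) = -\overline{u}(\xb,\xa,\eps)$. Throughout, the hypothesis $\eps \geq \eps^*$ is used to guarantee that the constraint set defining $\overline{u}$ is nonempty; sharpness under Assumption~\ref{assm:prime} follows because $\eps = \eps^*$ is the smallest admissible value by Fact~\ref{fact:nonempty}.

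For part (i), the forward inclusion is immediate: if $u$ is any function $\eps$-rationalizing the data, setting $u^r = u(x^r)$ and $\tilde{u}^1 = u(\xa)$ yields a feasible point of the LP whose objective equals $u(\xa) - u(x^S) = u(\xa) - u(\xb)$. For the converse, given any feasible collection $\{u^r\}_{r=1}^T, \tilde{u}^1$, I would use a min-over-sequences construction analogous to the function $U(x)$ built in the proof of Proposition~\ref{prop:nonempty}, but seeded with the given utility levels. Specifically, define $u(x)$ as the minimum over all sequences ending outside the augmented sample of the corresponding partial-sum expression, then verify that $u(x^r) = u^r$, $u(\xa) = \tilde{u}^1$, and that $u$ $\eps$-rationalizes the data. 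The equality constraint $\tilde{u}^1 = u^r$ whenever $\xa = x^r$ makes this construction well-defined precisely at coincident points.

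Part (ii) is obtained by LP duality on the program in (i). Interpret the data points and $\xa$ as nodes in a directed graph, with edge weight $p^r \cdot (x^s - x^r) + \eps$ from $r$ to $s$ and weight $p^r \cdot (\xa - x^r) + \eps$ from $r$ to $\xa$. Then maximizing $\tilde{u}^1 - u^S$ subject to constraints of the form $u^s - u^r \leq w_{rs}$ is a classical shortest-path LP, and its value equals the length of the shortest path from $S$ to $\xa$. Expanding such a path $S = \sigma(1), \sigma(2), \ldots, \sigma(M), \xa$ yields exactly the expression inside the $\min$; restricting to acyclic sequences is without loss by Lemma~\ref{lem:epsqrat} and the assumption $\eps \geq \eps^*$, since any cycle among data points contributes a weakly nonnegative amount. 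For part (iii), the explicit formula exhibits $\overline{u}(\xa, x^S, \eps)$ as a minimum over the finite set $\Sigma_S$ of functions that are each affine, strictly increasing in $\xa$ (because $p^{\sigma(M)} \in \mathbb{R}^K_{++}$), and affine increasing in $\eps$; thus $\overline{u}$ is strictly increasing and continuous jointly in $(\xa, \eps)$ on the region $\xa \neq x^S$ (continuity at $\xa = x^S$ may fail when $\eps > 0$ because the $M=1$ sequence gives $p^S \cdot (\xa - x^S) + \eps$, whose limit as $\xa \to x^S$ is $\eps$, whereas the true value jumps to $0$ at $\xa = x^S$).

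Parts (iv)--(vi) follow by applying (i)--(iii) after the substitution $(\xa, \xb) \leftrightarrow (\xb, \xa)$ and using $\underline{u}(\xa,\xb,\eps) = -\overline{u}(\xb,\xa,\eps)$; signs flip on the objective and on the sequence expression, converting $\min$ into $\max$. The main obstacle is the converse direction of part (i): producing a globally-defined utility function from the finite LP data whose values at the observations and at $\xa$ are exactly prescribed. The Afriat-style construction via minima over cost-of-path expressions is what makes this work, and it is here that the combinatorial structure exploited by the duality argument for part (ii) genuinely earns its keep.
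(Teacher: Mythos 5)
Your architecture is essentially the paper's: the forward half of (i) by plugging utility values into the LP, the sequence formula in (ii) by accumulating the difference constraints along paths out of $S$ (the paper does this summation by hand; your shortest-path LP duality is the same computation with a name, and the paper's constructed utility is precisely the shortest-path potential), part (iii) read off a finite minimum of affine, componentwise strictly increasing functions, and (iv)--(vi) by the sign flip $\underline{u}(\xa,\xb,\eps) = -\overline{u}(\xb,\xa,\eps)$. The sharpness claim via monotonicity in $\eps$ plus infeasibility below $\eps^*$ also matches the paper.

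The one step that does not go through as written is the converse half of (i), which is load-bearing in your design: you assert that the min-over-sequences construction ``seeded with the given utility levels'' satisfies $u(x^r) = u^r$ and $u(\xa) = \tilde{u}^1$ exactly. If $u$ is defined by that formula at the sample points themselves, this fails for $\eps > 0$: the length-one term already forces $u(x^r) \le u^r + \eps$ while feasibility only gives $u(x^r) \ge u^r$, so the formula can overshoot each prescribed value by up to $\eps$ (with $T=1$ it equals $u^1 + \eps$ exactly), and the certified difference $u(\xa) - u(x^S)$ can then fall short of $\tilde{u}^1 - u^S$ by $\eps$. The repair is simple but must be stated: define $u$ piecewise, equal to the prescribed numbers on the finite set $\{x^t\}_{t=1}^T \cup \{\xa\}$ and equal to $\min_t \{ u^t + p^t \cdot (x - x^t) + \eps \}$ elsewhere; $\eps$-rationalizability then reduces exactly to the LP feasibility inequalities, because the definition of $\eps$-rationalization only bounds $u$ from above by affine functions anchored at the observed $(x^t, p^t)$, and the equality constraints in the LP are what keep this well-defined when $\xa$ coincides with a data point. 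The paper sidesteps the issue by never extending an arbitrary feasible point: it shows every feasible objective value is dominated by the sequence minimum and then exhibits one rationalizing utility---the sequence-minimum function with the separate, deliberately discontinuous assignment $\tilde{u}(x^S) = 0$---that attains it; that normalization at $x^S$ is exactly what makes the bound tight, and your argument should make the analogous normalization explicit.
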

Parts (i) and (iv) describe the linear programs used for computation and stated as Proposition~\ref{prop:utilitylp} in the main text. Note that parts (ii) and (v) show that the bounds are finite, as claimed in Proposition~\ref{prop:utilitylp}. The other parts cover Proposition~\ref{prop:ubounds} stated in the main text. Parts (ii) and (v) provide analytical characterizations of the bounds on utility differences. Parts (iii) and (vi) describe shape restrictions of the bounds.

\begin{proof}[Proof of Proposition~\ref{prop:lputility}]
We first prove parts (i) and (ii).

The definition of $\eps$-rationalizability yields
\begin{align*}
\overline{u}(\xa, x^S, \eps) & \leq \sup_{ \substack{ u^1, \ldots, u^T, \tilde{u}^{1} \in \mathbb{R}_+}} \quad \tilde{u}^{1} - u^S \\
\text{s.t.}&\quad u^s \le u^r + p^r \cdot (x^s-x^r) + \varepsilon \quad \text{for all}\; r,s \in \{1,\ldots, T\} \\
                &\quad \tilde{u}^{1} \le u^r + p^r \cdot (\xa-x^r) + \varepsilon \quad \text{for all}\; r \in \{1,\ldots, T\} \\
                & \quad \tilde{u}^{1} = u^r \quad \textit{for all}\; r \in \{1, \ldots, T \} \textit{ with } \xa = x^r.
\end{align*}
We shall show the opposite inequality holds to prove (i), and in doing so characterize the maximum as stated in part (ii). First, note that the problem on the right hand side is feasible since for the dataset $D = \{ (x^t, p^t) \}_{t = 1}^T$, we assumed $\eps\ge \eps^*(D)$. We show that there is a utility function $\tilde{u}$ such that for any $u^1, \ldots, u^T, \tilde{u}^{1} \in \mathbb{R}_+$ that are feasible,
\[
\tilde{u}^{1} - u^S \leq \tilde{u}(\xa) - \tilde{u}(x^S).
\]

To that end, first consider feasible values $u^1, \ldots, u^T, \tilde{u}^{1}$. For any sequence that begins at $\sigma(1) = S$, we can sum up the inequalities in the program to obtain
\[
\tilde{u}^{1} - u^S \leq p^{\sigma(M)}\cdot (\xa-x^{\sigma(M)}) + \sum_{m=1}^{M-1}p^{\sigma(m)}\cdot(x^{\sigma(m+1)}-x^{\sigma(m)}) + M\eps.
\]
Thus,
\[
\tilde{u}^{1} - u^S \leq \min_{\sigma \in \Sigma_S} \left\{ p^{\sigma(M)}\cdot(\xa-x^{\sigma(M)}) + \sum_{m=1}^{M-1}p^{\sigma(m)}\cdot(x^{\sigma(m+1)}-x^{\sigma(m)}) + M\eps \right \},
\]
where $\Sigma_S$ is the set of sequences with $\sigma(1)=S$, have no cycles, and have length at least $M\ge 1$.  We show in particular that provided $\xa \neq x^S$, the upper bound on the right hand side can be attained by the utility function $\tilde{u}$, defined for $x \neq x^S$ by
\[
\tilde{u}(x)= \min_{\sigma \in \Sigma_{S}} \left\{ p^{\sigma(M)}\cdot(x-x^{\sigma(M)}) + \sum_{m=1}^{M-1}p^{\sigma(m)}\cdot(x^{\sigma(m+1)}-x^{\sigma(m)}) + M\eps \right\},
\] 
and defined for $x^S$ by  $\tilde{u}(x^S) = 0$. Note that the summation on the right side defining $\tilde{u}(x)$ is zero whenever $M = 1$ because it is a summation over an empty set of indices. Note that $\tilde{u}$ is not continuous at $x^S$, which is key for our arguments.

We show that this utility function rationalizes the data. For any $x \in \mathbb{R}_{+}^K$, it follows that for any $t \in \{1,\ldots, T\}$ such that $x^t \neq x^S$,
\begin{align*}
    \tilde{u}(x)-p^t\cdot x & \leq p^t \cdot (x - x^t) + p^{\sigma^{*,t}(M^{*,t})} \cdot (x^t - x^{\sigma^{*,t}(M^{*,t})}) + \\
    & \quad \sum_{m=1}^{M^{*,t}-1} p^{\sigma^{*,t}(m)}\cdot(x^{\sigma^{*,t}(m+1)}-x^{\sigma^{*,t}(m)}) + (M^{*,t}+1)\eps-p^t \cdot x \\
    &= p^{\sigma^{*,t}(M^{*,t})} \cdot (x^t - x^{\sigma^{*,t}(M^{*,t})}) + \\
    & \quad \sum_{m=1}^{M^{*,t}-1} p^{\sigma^{*,t}(m)}\cdot(x^{\sigma^{*,t}(m+1)}-x^{\sigma^{*,t}(m)}) + (M^{*,t}+1)\eps-p^t \cdot x^t \\
    &= \tilde{u}(x^t)-p^t \cdot x^t+\eps
\end{align*}
where $\sigma^{*,t}\in \Sigma_{S}$ is a sequence that obtains the minimum of $\tilde{u}(x^t)$ and $M^{*,t}$ is the length of that sequence. 

Lastly, consider the observation $S \in \{1,\ldots, T\}$. For any $x \in \mathbb{R}_+^K$, it follows that
\begin{align*}
    \tilde{u}(x)-p^S \cdot x &\le p^S \cdot(x-x^S) + \eps - p^S \cdot x \\
    & = \tilde{u}(x^S)-p^S\cdot x^S + \eps
\end{align*}
where the inequality follows by looking at the sequence length one which only has observation $S$ and the equality follows since $\tilde{u}(x^S)=0$.

This utility function gives
\begin{align*}
\tilde{u}(\xa)-\tilde{u}(x^S) & =\min_{\sigma \in \Sigma_{S}} \left\{ p^{\sigma(M)}\cdot(\xa-x^{\sigma(M)}) + \sum_{m=1}^{M-1}p^{\sigma(m)}\cdot(x^{\sigma(m+1)}-x^{\sigma(m)}) + M\eps \right\} \\
& \leq \overline{u}(\xa, x^S, \eps).
\end{align*}
The inequality holds because $\tilde{u}$ $\eps$-rationalizes the dataset. The first part of the proof of the proposition established $ \overline{u}(\xa, x^S, \eps) \leq \tilde{u}(\xa)-\tilde{u}(x^S)$. This proves part (ii).

To prove part (i), note that we can use the function $\tilde{u}(x)$ to generate utility numbers that satisfy the inequality and equality conditions in the linear programming formulation. Indeed, set $u^t=\tilde{u}(x^t)-\min_{x \in \{x^s\}_{s=1}^T \cup \tilde{x}^1} \{ \tilde{u}(x) \}$ for $t \in \{1, \ldots, T\}$ and $\tilde{u}^1=\tilde{u}(\tilde{x}^1)-\min_{x \in \{x^s\}_{s=1}^T \cup \tilde{x}^1} \{ \tilde{u}(x) \}$. We subtract the minimum because $\tilde{u}(x)$ can be negative. Here the values $u^t$ and $\tilde{u}^1$ are weakly positive and satisfy the inequalities.

To prove (iii), recall that the upper bound is the minimum of finitely many affine functions as shown in (ii), except at $\xb = x^S$ when $\eps > 0$. Each such function is strictly increasing and continuous in $(\xb,\eps)$. From this we conclude that $\overline{u}$ is strictly increasing and continuous in $(\xb,\eps)$ (except at $\xb = x^S$ when $\eps > 0$). Thus, taking the minimum over $\eps$ subject to the constraint that the bound is defined, we see that under Assumption~\ref{assm:prime} the bound cannot be tightened. Note that here we use that for the special case $\xa = \xb$, $\overline{u}(\xa,\xb,\eps) = 0$, which is also the tightest possible under Assumption~\ref{assm:prime}.

The proofs for (iv)-(vi) are analogous since $\overline{u}(\xa,\xb,\eps) = -\underline{u}(\xb,\xa,\eps)$, and are omitted.
\end{proof}

\begin{proof}[Proof of Proposition~\ref{prop:unboundedu}]
First suppose $\xb$ is not in the dataset. Since $\eps \geq \eps^*$ there is some utility function $u$ that $\eps$-rationalizes the dataset. If we modify the utility function to make $u(\xb)$ arbitrarily negative, the modified function still rationalizes the dataset. Note that this modified function satisfies local nonsatiation, but is not (globally) strictly increasing or concave. This proves $\overline{u}(\xb,\xa,\eps) = \infty$.

Now instead suppose $\xa$ is not in the dataset. For any utility function that $\eps$-rationalizes the dataset, we can modify $u(\xa)$ to be arbitrariliy negative. Such a modified utility function still rationalizes the dataset, and so $\underline{u}(\xb,\xa,\eps) = -\infty$.
\end{proof}

\subsection{Proofs for Section~\ref{sec:indirectbounds}}

We prove a stronger and more formal version of Proposition~\ref{prop:welfarecomp}, explicitly describing a tractable linear program. To state this result, relative to the main text we use argument $\tilde{p}^{T+2}$ in place of $\pa$ and $\tilde{p}^{T+1}$ in place of $\pb$. We use this notation because in the proofs it is helpful to think of these as extra observations relative to a dataset of $T$ observations.

\begin{prop}\label{prop:lpwelfare}
For a dataset $\{ (x^t, p^t) \}_{ t = 1}^T$, let $\eps \geq \eps^*$. Whenever $\overline{V}(\tilde{p}^{T+2},\tilde{p}^{T+1},\eps)$ is finite, the change in the approximate indirect utility can be bounded by the following linear program:
\begin{align*}
\overline{V}(\tilde{p}^{T+2},\tilde{p}^{T+1},\eps) = & \max_{ \substack{ \tilde{x}^{T+1},\tilde{x}^{T+2}\in \mathbb{R}_+^K \\ u^1, \ldots, u^T, \tilde{u}^{T+1}, \tilde{u}^{T+2} \in \mathbb{R}_+}} \quad \tilde{u}^{T+2} - \tilde{p}^{T+2} \cdot \tilde{x}^{T+2}  - \tilde{u}^{T+1} + \tilde{p}^{T+1} \cdot \tilde{x}^{T+1}  \\
\text{s.t.}&\quad u^s \le u^r + p^r \cdot (x^s-x^r) + \varepsilon \quad \text{for all}\; r,s \in \{1,\ldots, T\} \\
                &\quad \tilde{u}^{T+1} \le u^r + p^r \cdot (\tilde{x}^{T+1}-x^r) + \varepsilon \quad \text{for all}\; r \in \{1,\ldots, T\} \\
                &\quad \tilde{u}^{T+2} \le u^r + p^r \cdot (\tilde{x}^{T+2}-x^r) + \varepsilon \quad \text{for all}\; r \in \{1,\ldots, T\}\\
                &\quad u^{r} \le \tilde{u}^{T+1} + \tilde{p}^{T+1} \cdot (x^r-\tilde{x}^{T+1}) + \varepsilon \quad \text{for all}\; r \in \{1,\ldots, T\}\\
                 &\quad u^{r} \le \tilde{u}^{T+2} + \tilde{p}^{T+2} \cdot (x^r-\tilde{x}^{T+2}) + \varepsilon \quad \text{for all}\; r \in \{1,\ldots, T\}\\
                &\quad \tilde{u}^{T+1} \le \tilde{u}^{T+2} + \tilde{p}^{T+2} \cdot (\tilde{x}^{T+1}-\tilde{x}^{T+2}) + \varepsilon \\
                &\quad \tilde{u}^{T+2} \le \tilde{u}^{T+1} + \tilde{p}^{T+1} \cdot (\tilde{x}^{T+2}-\tilde{x}^{T+1}) + \varepsilon.
\end{align*}
Moreover, when $\underline{V}(\tilde{p}^{T+2},\tilde{p}^{T+1},\eps)$ is finite it is the minimum of the same problem.

Under Assumption~\ref{assm:prime} ($\eps = \eps^*$), these bounds cannot be improved.
\end{prop}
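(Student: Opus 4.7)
The plan is to identify the listed LP as computing the exact value of $\overline{V}(\tilde{p}^{T+2},\tilde{p}^{T+1},\eps)$, by matching the LP constraints to the $\eps$-rationalizability of the augmented dataset $\{(x^t,p^t)\}_{t=1}^T \cup \{(\tilde{x}^{T+1},\tilde{p}^{T+1}),(\tilde{x}^{T+2},\tilde{p}^{T+2})\}$. Writing $y^s, q^s, v^s$ for the augmented bundles, prices, and candidate utility values (so $y^{T+1}=\tilde{x}^{T+1}$, $q^{T+2}=\tilde{p}^{T+2}$, $v^{T+i}=\tilde{u}^{T+i}$, etc.), my first step is a direct enumeration verifying that each of the seven families of constraints in the statement coincides with exactly one case of the $\eps$-Afriat inequality $v^s \le v^r + q^r\cdot(y^s-y^r) + \eps$ from Lemma~\ref{lem:epsqrat} applied to the augmented dataset. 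Nonnegativity of $u^t,\tilde{u}^{T+1},\tilde{u}^{T+2}$ is WLOG since adding a common constant to every candidate utility value preserves both the constraints and the objective.

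For the direction LP value $\ge \overline{V}$, I start from any $u$ that $\eps$-rationalizes the original dataset with $V_u(\tilde{p}^{T+2})$ and $V_u(\tilde{p}^{T+1})$ finite. I pick $\tilde{x}^{T+1}$ that is $\eps$-optimal at $\tilde{p}^{T+1}$ with approximate indirect utility close to $\underline{V}_{u,A}(\tilde{p}^{T+1},\eps)$, and $\tilde{x}^{T+2}$ that is $\eps$-optimal at $\tilde{p}^{T+2}$ with approximate indirect utility close to $V_u(\tilde{p}^{T+2})$. Setting $v^s = u(y^s)$ makes the tuple LP-feasible, because its constraints are exactly the $\eps$-rationalization inequalities of the augmented dataset by $u$; the LP objective at this tuple equals $[u(\tilde{x}^{T+2})-\tilde{p}^{T+2}\cdot\tilde{x}^{T+2}] - [u(\tilde{x}^{T+1})-\tilde{p}^{T+1}\cdot\tilde{x}^{T+1}]$. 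Taking suprema/infima over the two free bundles and then supremum over $u$ reproduces exactly the definition of $\overline{V}$.

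For the reverse LP value $\le \overline{V}$, I convert a feasible LP tuple into an actual utility function on $\mathbb{R}_+^K$. Under the generic assumption that the augmented bundles are distinct, define $u(y^s) = v^s$ at each augmented bundle and $u(x) = \min_{r}\{v^r + q^r\cdot(x-y^r) + \eps\}$ elsewhere. The LP constraints then imply that $u$ $\eps$-rationalizes the augmented dataset: when $x = y^r$ the required inequality is precisely the LP constraint, and when $x \notin \{y^r\}$ taking the $r=s$ term in the defining minimum does the job. Because $u(\tilde{x}^{T+2})=v^{T+2}$, it follows that $V_u(\tilde{p}^{T+2}) \ge v^{T+2}-\tilde{p}^{T+2}\cdot\tilde{x}^{T+2}$; because $u(\tilde{x}^{T+1})=v^{T+1}$ and $\tilde{x}^{T+1}$ is $\eps$-optimal at $\tilde{p}^{T+1}$ under this $u$, we get $\underline{V}_{u,A}(\tilde{p}^{T+1},\eps) \le v^{T+1}-\tilde{p}^{T+1}\cdot\tilde{x}^{T+1}$. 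Subtracting yields LP objective $\le V_u(\tilde{p}^{T+2}) - \underline{V}_{u,A}(\tilde{p}^{T+1},\eps) \le \overline{V}$.

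The main obstacle is the distinctness assumption: when augmented bundles coincide (for instance $\tilde{x}^{T+1}=x^t$), the piecewise definition of $u$ is ambiguous. At any such collision the LP constraints force the competing $v$-values to lie within $\eps$ of each other, and I resolve this either by a continuity/perturbation argument --- the LP value is continuous in the bundle data, and feasible points with pairwise distinct bundles are dense in the feasibility region --- or by fixing the value of $u$ at the collision point case by case and checking that the one-sided bounds needed in the two directions still go through. The $\underline{V}$ claim then follows from the identity $\underline{V}(\tilde{p}^{T+2},\tilde{p}^{T+1},\eps) = -\overline{V}(\tilde{p}^{T+1},\tilde{p}^{T+2},\eps)$, which flips the LP into a minimization over the same constraints. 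Finally, sharpness under Assumption~\ref{assm:prime} ($\eps = \eps^*$) follows because the LP is infeasible for $\eps < \eps^*$ by Fact~\ref{fact:nonempty} and the LP value is weakly monotone in $\eps$ over the feasible range, so $\eps = \eps^*$ delivers the tightest bounds.
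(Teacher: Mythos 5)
Your overall architecture matches the paper's: both directions of a sandwich, with the easy direction (LP value $\ge \overline{V}$) obtained by evaluating the LP at utility numbers induced by a rationalizing $u$ and approximately optimal bundles, and sharpness from monotonicity in $\eps$ plus infeasibility below $\eps^*$. Where you genuinely diverge is the hard direction (LP value $\le \overline{V}$): you extend a feasible LP tuple to a utility function by \emph{pinning} $u(y^s)=v^s$ exactly at every augmented bundle and using a one-step formula $\min_r\{v^r+q^r\cdot(x-y^r)+\eps\}$ elsewhere. The paper instead uses the path-based construction of Proposition~\ref{prop:lputility}, anchored at the bundle for $\tilde p^{T+1}$, and never tries to match the LP values exactly; it only uses the inequality $\tilde u^{T+2}-\tilde u^{T+1}\le \tilde u(x^{T+2})-\tilde u(x^{T+1})$, which is obtained by summing LP constraints along sequences and therefore holds whether or not bundles coincide.

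That difference is exactly where your gap sits. Your construction requires $u(\tilde x^{T+1})=v^{T+1}$ and $u(\tilde x^{T+2})=v^{T+2}$, which is ill-posed when an augmented bundle collides with a data bundle carrying a different $v$-value, and neither of your proposed repairs is secure. The density/perturbation fix can fail outright: take $K=1$, $\eps=0$, and data $(\bar x,p^1),(\bar x,p^2)$ with $p^1<\tilde p^{T+1}<p^2$; the law-of-demand constraints force $\tilde x^{T+1}=\bar x$ for \emph{every} feasible point, so feasible points with distinct bundles are not dense. (In that particular case the LP also forces $\tilde u^{T+1}=u^1$, so the collision is harmless, but that observation is itself a case analysis you have not done, and for $\eps>0$ the LP permits the colliding $v$-values to differ by up to $\eps$.) The case-by-case fix is also not routine: assigning the minimum of the colliding values at the shared bundle breaks the rationalization inequality in one direction, assigning the maximum breaks the bound $\underline{V}_{u,A}(\tilde p^{T+1},\eps)\le v^{T+1}-\tilde p^{T+1}\cdot\tilde x^{T+1}$ in the other, and no single assignment obviously serves both roles. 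The clean resolution is the paper's: derive $\tilde u^{T+2}-\tilde u^{T+1}\le \tilde u(x^{T+2})-\tilde u(x^{T+1})$ directly from chaining the LP inequalities, so the constructed utility only needs to dominate the LP \emph{difference}, not reproduce the individual values. If you replace your pinned construction with that inequality, the rest of your argument goes through essentially as the paper's does.
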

Note that we do not impose the constraint that if $\tilde{p}^{T+2} = p^t$ for some $t$, then $\tilde{x}^{T+2} = x^t$. This is because  the observed demand $x^{t}$ is not known to \textit{exactly} maximize utility at the price $p^t$ so we must account for the fact that $\tilde{x}^{T+2}$ can differ.

\begin{proof}[Proof of Proposition~\ref{prop:lpwelfare} (and Proposition~\ref{prop:welfarecomp})]

Recall
\[
\overline{V}(\tilde{p}^{T+2},\tilde{p}^{T+1},\eps) = \sup_{ \left\{u \mid u \text{ } \eps-\text{rationalizes } \{ (x^t, p^t ) \}_{t = 1}^T \right\}}  \left\{ \overline{V}_{u,A}(\tilde{p}^{T+2},\eps) - \underline{V}_{u,A}(\tilde{p}^{T+1},\eps) \right\}.
\]

For a utility function $u$, price $\tilde{p}$, and bound on approximate optimization given by $\eps$, let the set of approximate optimizers be given by
\[
AO_u(\tilde{p},\eps) = \{ x \in \mathbb{R}^K_{+} \mid u(x) - p \cdot x \geq V_u(p) - \eps \}.
\]
We can write
\begin{align*}
\overline{V} (\tilde{p}^{T+2}  , \tilde{p}^{T+1},\eps) = \sup_{ \left\{u \mid u \text{ } \eps-\text{rationalizes } \{ (x^t, p^t ) \}_{t = 1}^T \right\}}  & \Bigg\{ \sup_{ \substack{\tilde{x}^{T+2} \in AO_{u}(\tilde{p}^{T+2},\eps)} } \left( u(\tilde{x}^{T+2}) - \tilde{p}^{T+2} \cdot \tilde{x}^{T+2} \right) \\
& - \inf_{\substack{\tilde{x}^{T+1} \in AO_{u}(\tilde{p}^{T+1},\eps)} } \left(u(\tilde{x}^{T+1}) - \tilde{p}^{T+1} \cdot \tilde{x}^{T+1} \right) \Bigg\}.
\end{align*}
We can write the difference as
\begin{align*}
\overline{V}(\tilde{p}^{T+2},\tilde{p}^{T+1},\eps) \leq \sup_{ \substack{u \\ \tilde{x}^{T+1},\tilde{x}^{T+2}\in \mathbb{R}_+^K} } & u(\tilde{x}^{T+2}) - \tilde{p}^{T+2} \cdot \tilde{x}^{T+2} - u(\tilde{x}^{T+1}) + \tilde{p}^{T+1} \cdot \tilde{x}^{T+1} \\
\text{s.t. } & u(x^t) - p^t \cdot x^t \geq \sup_{x \in \mathbb{R}^K_+} u(x) - p^t \cdot x - \eps \quad \forall t \in \{1, \ldots, T \} \\
& u(\tilde{x}^{T+2}) - \tilde{p}^{T+2} \cdot \tilde{x}^{T+2} \geq \sup_{x \in \mathbb{R}^K_+} u(x) - \tilde{p}^{T+2} \cdot x - \eps \\
& u(\tilde{x}^{T+1}) - \tilde{p}^{T+1} \cdot \tilde{x}^{T+1} \geq \sup_{x \in \mathbb{R}^K_+} u(x) - \tilde{p}^{T+1} \cdot x - \eps.
\end{align*}
The first inequality constraint imposes the requirement that $u$ $\eps$-rationalizes the dataset. The second inequality constraint only involves the variables $\tilde{x}^{T+2}$, and so when we take a supremum this is the upper approximate indirect utility $\overline{V}_{u,A}(\tilde{p}^{T+2},\eps) = V_u(\tilde{p}^{T+2})$. The third inequality constraint has infimum (over $\tilde{x}^{T+1}$) at the lower approximate indirect utility $\underline{V}_{u,A}(\tilde{p}^{T+1},\eps)$. 

Consider the feasibility region of this problem. Checking the inequalities for all $x$ is weakly more restrictive than checking for $x \in \{ x^1, \ldots, x^T, \tilde{x}^{T+1}, \tilde{x}^{T+2} \}$. Thus, we will replace the suprema over all $x \in \mathbb{R}^K_{+}$ with a finite collection of inequalities involving $\{ x^1, \ldots, x^T, \tilde{x}^{T+1}, \tilde{x}^{T+2} \}$.  In addition, searching over all utility functions to satisfy these inequalities is weakly more restrictive than searching over all utility numbers. From these two monotonicity observations, and the fact that the value function is monotone in its feasibility region (with regard to set inclusion), we obtain 
\begin{align*}
\overline{V}(\pa,\pb,\eps) \leq & \sup_{ \substack{ \tilde{x}^{T+1},\tilde{x}^{T+2}\in \mathbb{R}_+^K \\ u^1, \ldots, u^T, \tilde{u}^{T+1}, \tilde{u}^{T+2} \in \mathbb{R}_+}} \quad \tilde{u}^{T+2} - \tilde{p}^{T+2} \cdot \tilde{x}^{T+2}  - \tilde{u}^{T+1} + \tilde{p}^{T+1} \cdot \tilde{x}^{T+1}  \\
\text{s.t.}&\quad u^s \le u^r + p^r \cdot (x^s-x^r) + \varepsilon \quad \text{for all}\; r,s \in \{1,\ldots, T\} \\
                &\quad \tilde{u}^{T+1} \le u^r + p^r \cdot (\tilde{x}^{T+1}-x^r) + \varepsilon \quad \text{for all}\; r \in \{1,\ldots, T\} \\
                &\quad \tilde{u}^{T+2} \le u^r + p^r \cdot (\tilde{x}^{T+2}-x^r) + \varepsilon \quad \text{for all}\; r \in \{1,\ldots, T\}\\
                &\quad u^{r} \le \tilde{u}^{T+1} + \tilde{p}^{T+1} \cdot (x^r-\tilde{x}^{T+1}) + \varepsilon \quad \text{for all}\; r \in \{1,\ldots, T\}\\
                 &\quad u^{r} \le \tilde{u}^{T+2} + \tilde{p}^{T+2} \cdot (x^r-\tilde{x}^{T+2}) + \varepsilon \quad \text{for all}\; r \in \{1,\ldots, T\}\\
                &\quad \tilde{u}^{T+1} \le \tilde{u}^{T+2} + \tilde{p}^{T+2} \cdot (\tilde{x}^{T+1}-\tilde{x}^{T+2}) + \varepsilon \\
                &\quad \tilde{u}^{T+2} \le \tilde{u}^{T+1} + \tilde{p}^{T+1} \cdot (\tilde{x}^{T+2}-\tilde{x}^{T+1}) + \varepsilon.
\end{align*}

We will now show the opposite inequality holds. First, recall that Proposition~\ref{prop:nonempty} shows that this program is feasible provided $\eps \geq \eps^*$. Let $u^1, \ldots, u^T, \tilde{u}^{T+1}, \tilde{u}^{T+2}$, $\tilde{x}^{T+1}$, and $\tilde{x}^{T+2}$ denote some values that are feasible. Construct the augmented dataset $\{ (x^t, p^t) \}_{t = 1}^{T+2}$ that has $(x^{T+1},p^{T+1}) = (\tilde{x}^{T+1},\tilde{p}^{T+1})$ and $(x^{T+2},p^{T+2}) = (\tilde{x}^{T+2},\tilde{p}^{T+2})$. Construct a utility function as
\[
\tilde{u}(x) = \min_{\sigma \in \Sigma_{T+1}} \left\{ p^{\sigma(M)}\cdot(x-x^{\sigma(M)}) + \sum_{m=1}^{M-1}p^{\sigma(m)}\cdot(x^{\sigma(m+1)}-x^{\sigma(m)}) + M\eps \right \},
\]
for $x \neq x^{T+1}$, where $\Sigma_{T+1}$ is the set of sequences in the augmented dataset that start with $\sigma(1)=T+1$, have no cycles, and have length at least $M\ge 1$. Finally, set $\tilde{u}(x^{T+1}) = 0$. The proof of Proposition~\ref{prop:lputility} shows that this function $\eps$-rationalizes the augmented dataset $\{ (x^t, p^t) \}_{t = 1}^{T+2}$. Moreover, Proposition~\ref{prop:lputility} also established
\begin{equation} \label{ubound}
\tilde{u}^{T+2} - \tilde{u}^{T+1} \leq \tilde{u}(x^{T+2}) - \tilde{u}(x^{T+1}).
\end{equation}
Recall we set $(x^{T+1},p^{T+1}) = (\tilde{x}^{T+1},\tilde{p}^{T+1})$ and $(x^{T+2},p^{T+2}) = (\tilde{x}^{T+2},\tilde{p}^{T+2})$ for hypotheticals. We conclude
\begin{align*}
u^{T+2} - p^{T+2} \cdot x^{T+2} - ( u^{T+1} & - p^{T+1} \cdot x^{T+1}) \\
& \leq \tilde{u}(x^{T+2}) - p^{T+2} \cdot x^{T+2}  - ( \tilde{u}(x^{T+1}) - p^{T+1} \cdot x^{T+1}) \\
& \leq \overline{V}_{\tilde{u},A} (p^{T+2},\eps) - \underline{V}_{\tilde{u},A} (p^{T+1},\eps) \\
& \leq \overline{V}(\tilde{p}^{T+2},\tilde{p}^{T+1},\eps).
\end{align*}
The first inequality uses (\ref{ubound}). The second inequality holds because for $\tilde{u}$, $x^{T+2}$ is an approximate optimizer given $p^{T+2}$ and similarly for $x^{T+1}$. The third inequality is the definition of the bounds on approximate indirect utility. Since this is true for any feasible values, we conclude that $\overline{V}(\tilde{p}^{T+2},\tilde{p}^{T+1},\eps)$ is obtained by the linear program described in the proposition. Recall that while we have used suprema throughout, in this last step since we have established $\overline{V}(\tilde{p}^{T+2},\tilde{p}^{T+1},\eps)$ as the (bounded) value of a linear program, we know the supremum is attained by Lemma~\ref{lem:lpcont}.

Finally, we note that the bounds are weakly monotone in $\eps$ by Proposition~\ref{prop:shapev}. For $\eps < \eps^*$ we know the bounds are not defined because no utility function $\eps$-rationalizes the dataset. Thus, the bounds are the tightest possible under Assumption~\ref{assm:prime} ($\eps = \eps^*$).
\end{proof}

\begin{proof}[Proof of Proposition~\ref{prop:welfaresandwich}]

Step 1 provides the upper bound on $\overline{V}$. Step 2 provides the lower bound on $\overline{V}$. Recall that $\pa=P^S$ for some $S \in \{1,\ldots,T\}$. 

\textbf{Step 1.}
Recall from Equation~\ref{eq:indineq} that for any $r \in \{1 , \ldots, T \}$ and $p \in \mathbb{R}^K_{++}$,
\[
V_u(p^r) - V_u(p) \leq x^r \cdot (p - p^r) + \eps.
\]
By summing up such inequalities over sequences, we obtain the upper bound
\[
V_u(p^S) - V_u(\pb) \leq \min_{\sigma \in \Sigma_S} \left \{ x^{\sigma(M)} \cdot (\pb - p^{\sigma(M)}) + \sum_{m = 1}^{M-1} x^{\sigma(m)} \cdot (p^{\sigma(m+1)} - p^{\sigma(m)}) + M \eps \right\}.
\]
Since
\[
\overline{V}_{u,A}(p^S,\eps) - \underline{V}_{u,A}(\pb,\eps) \le V_u(p^S) - V_u(\pb) + \eps
\]
by construction, we prove that
\[
\overline{V}(p^{S},\pb,\eps) \leq \min_{\sigma \in \Sigma_S} \left\{ x^{\sigma(M)} \cdot (\pb - p^{\sigma(M)}) + \sum_{m = 1}^{M-1} x^{\sigma(m)} \cdot (p^{\sigma(m+1)} - p^{\sigma(m)}) + M \eps \right\} + \eps.
\]

\textbf{Step 2.}
We now establish the lower bound for $\overline{V}(p^S,\pb,\eps)$. Define
\[
V'(p) = -\min_{\sigma \in \Sigma_S} \left \{ x^{\sigma(M)} \cdot (p - p^{\sigma(M)}) + \sum_{m = 1}^{M-1} x^{\sigma(m)} \cdot (p^{\sigma(m+1)} - p^{\sigma(m)}) + M \eps \right\}.
\]
Let $\underline{V}'$ denote the minimum of $V'(p)$ over the convex hull of $\{ p^t \}_{t = 1}^T$, which is attained and finite because $V'$ is continuous and the convex hull here is compact. Define $V(p) = \max \{ V'(p), \underline{V}' \}$.

First we show that $V$ satisfies a set of inequalities that are a dual version of $\eps$-rationalizability. Duality is considered in more detail in the Supplemental Appendix~\ref{supp:duality}, and we use several results from that Appendix.

For some $t \in \{1, \ldots, T \}$, let $\sigma^{*,t} \in \Sigma_S$ be a sequence that obtains the minimum of $V'(p^t)$, and let $M^{*,t}$ be the length of that sequence. We have
\begin{align*}
-V'(p) - x^t \cdot p & \leq x^t \cdot (p - p^t) + x^{\sigma^{*,t}(M^{*,t})} \cdot (p^t - p^{\sigma^{*,t}(M^{*,t})}) + \\
& \sum_{m = 1}^{M^{*,t} - 1} p^{\sigma^{*,t}(m)} \cdot (x^{\sigma^{*,t}(m + 1)} - x^{\sigma^{*,t}(m)}) + (M^{*,t} + 1) \eps - x^t \cdot p \\
& = -V'(p^t) - x^t \cdot p^t + \eps.
\end{align*}
Recall that $V(p) \geq V'(p)$ and $V(p^t) = V'(p^t)$ for $t \in \{1, \ldots, T\}$. This implies
\begin{equation} \label{eq:vsubg}
V(p) \geq V(p^t) - x^t \cdot (p - p^t) - \eps
\end{equation}
for any $t \in \{1, \ldots, T\}$ and any $p \in \mathbb{R}^K$. 

Define $u_V : \mathbb{R}^K \rightarrow \mathbb{R} \cup \{ - \infty \}$ by $u_V(x) = \inf_{p \in \mathbb{R}^K_{+}} V(p) + p \cdot x$. Proposition~\ref{slem:duality}(ii) shows that $u_V$ $\eps$-rationalizes the dataset. Note that since $u_V(x) \geq \underline{V'}$, $u_V$ is everywhere finite.

The function $V$ is the maximum of finitely many affine functions, each weakly decreasing in $p$, and is hence continuous, weakly decreasing, and convex. We conclude from Lemma~\ref{slem:doubleconj} that $V = V_{u_V}$. In particular, $V$ is the indirect utility function for $u_V$ and we have established that $u_V$ $\eps$-rationalizes the dataset. We conclude that 
\begin{equation} \label{eq:lowerboundv}
V(p^S) - V(\pb) \leq \overline{V}_{u_V,A}(p^S,\eps) - \underline{V}_{u_V,A}(\pb,\eps)\leq \overline{V}(p^S,\pb,\eps).
\end{equation}

We now characterize $V(p^S) - V(\pb)$ to state the lower bound on the proposition. Since $p^S$ is in the convex hull of prices, $V(p^S) = V'(p^S)$. Note that $V'(p^S) \leq 0$ because the dataset is $\eps$-rationalized by quasilinear utility (see Lemma~\ref{supp:misc}); this relies on the fact that the sum of each sequence defining $V'$ makes a cycle because it begins and ends at $p^S$. In addition, by considering a sequence of length $1$, $V'(p^S) \geq - \left( x^{S} \cdot (p^S - p^S) + \eps \right) = -\eps$. Thus,
\begin{align*}
V(p^S) & - V(\pb) \geq -\eps - V(\pb) \\ 
& = \min_{\sigma \in \Sigma_S} \left \{ x^{\sigma(M)} \cdot (\pb - p^{\sigma(M)}) + \sum_{m = 1}^{M-1} x^{\sigma(m)} \cdot (p^{\sigma(m+1)} - p^{\sigma(m)}) + M \eps \right\} - \eps.
\end{align*}
So from (\ref{eq:lowerboundv}),
\begin{align*}
\overline{V}(p^S,\pb,\eps) & \geq \min_{\sigma \in \Sigma_S} \left \{ x^{\sigma(M)} \cdot (\pb - p^{\sigma(M)}) + \sum_{m = 1}^{M-1} x^{\sigma(m)} \cdot (p^{\sigma(m+1)} - p^{\sigma(m)}) + M \eps \right\} - \eps \\
& = h(\tilde{p}^0) - \eps,
\end{align*}
establishing the lower bound.

It is worth noting that the arguments above rely on the inequality $-\eps \leq V(p^S) \leq 0$. If we instead rely on the true value for $V(p^S)$ we actually prove the stronger result here that
\begin{align*}
\overline{V}(p^S,\pb,\eps) & \geq \min_{\sigma \in \Sigma_S} \left \{ x^{\sigma(M)} \cdot (\pb - p^{\sigma(M)}) + \sum_{m = 1}^{M-1} x^{\sigma(m)} \cdot (p^{\sigma(m+1)} - p^{\sigma(m)}) + M \eps \right\} \\
& -\min_{\sigma \in \Sigma_S} \left \{ x^{\sigma(M)} \cdot (p^S - p^{\sigma(M)}) + \sum_{m = 1}^{M-1} x^{\sigma(m)} \cdot (p^{\sigma(m+1)} - p^{\sigma(m)}) + M \eps \right\} \\
& = h(\pb) \\
& -\min_{\sigma \in \Sigma_S} \left \{ x^{\sigma(M)} \cdot (p^S - p^{\sigma(M)}) + \sum_{m = 1}^{M-1} x^{\sigma(m)} \cdot (p^{\sigma(m+1)} - p^{\sigma(m)}) + M \eps \right\}.
\end{align*}
Thus, the lower bound $h(\pb)$ stated in the proposition can be tightened a bit.
\end{proof}

\begin{proof}[Proof of Proposition~\ref{prop:intformula}]

Recall for this proof we consider when $K=1$ so quantities and prices are scalar. We suppress dot product notation for this proof. We first show that when $\pa > \min \{p^1, \ldots, p^T, \pb\}$,
\[
\overline{V}(\pa,\pb,0) \leq \int^1_0 \overline{x}_1 (t \pa + (1 - t) \pb, 0) \left(\pb - \pa \right)  dt.
\]
For any utility $u$ function that admits maximizers over $p > \min \{p^1, \ldots, p^T, \pb\}$, write
\[
x_u(p) \in \argmax_{x \in \mathbb{R}_{+}} u(x) - p x
\]
for some selector from the argmax correspondence. We first argue
\[
\overline{V}(\pa,\pb,0) = \sup_{ \left\{u \in \mathcal{U} \mid u \text{ } \text{rationalizes } \{ (x^t, p^t ) \}_{t = 1}^T \right\} } \int x_{u} \left(t \pa + (1 - t) \pb \right) \left (\pb - \pa \right) dt
\]
where the supremum is over the set of utility functions $\mathcal{U}$ that admit a maximizer over $p > \min \{p^1, \ldots, p^T, \pb\}$.

To that end, note that $\pa \in \CCo ( \{ p^t \}_{t = 1}^T \cup \pb )$. From Proposition~\ref{prop:shapev}, $\overline{V}(\pa,\pb,0)$ can be written as a supremum of differences in indirect utility, where each indirect utility $V_u$ is finite for $p \geq \{p_1, \ldots, p_T, \pb\}$. Each $V_u$ is convex, and for such functions, the subdifferential
\[
\partial V_{u}(p) = \left\{ x \in \mathbb{R}_+ \mid V_{u}(\tilde{p}) \geq V_{u} (p) + x (\tilde{p} - p) \qquad \forall \tilde{p} \in \mathbb{R}_{++} \right\} 
\]
is nonempty for any $p > \min \{p^1, \ldots, p^T, \pb\}$. See the proof of Lemma~\ref{lem:existence}. We can construct a function $\tilde{x}_{V_u}$ by selecting from the subdifferential, so that for each $p > \min \{p_1, \ldots, p_T, \pb\}$
\[
\tilde{x}_{V_u}(p) \in \partial V_u(p).
\]
Any such function satisfies the formula
\begin{equation} \label{eq:consumer}
V_u(\pa) - V_u(\pb) = \int^1_0 \tilde{x}_{V_u} \left(t \pa + (1 - t) \pb \right) \left (\pb - \pa \right) dt.
\end{equation}
See for example \cite{rockafellar2015convex}, Corollary 24.2.1, or \cite{chambers2017characterization}, Theorem 2. (The selector $\tilde{x}_{V_u}(p)$ is always Reimann integrable.)

We note that for any $V_u$ such that $u$ $\eps$-rationalizes the dataset, the utility function $u_{V_u}(x) = \inf_{p \in \mathbb{R}^K} V(p) + p x$ also $\eps$-rationalizes the dataset from Lemma~\ref{slem:duality}. Moreover, $u_{V_u}$ is concave, weakly increasing, and satisfies $V_{u_{V_u}} = V_u$ from Lemmas~\ref{slem:shapeconjugate} and~\ref{slem:doubleconj}. From this and the proof of Lemma~\ref{lem:existence} we conclude that $\tilde{x}_{V_u}$ is a maximizer of $u_{V_u}$. Summing up, we conclude that it is without loss of generality to consider utility functions $\mathcal{U}$ that induce a maximizer for all $p > \min\{p^1, \ldots, p^T, \tilde{p}^0\}$..

Putting these arguments together, we conclude that
\begin{align*}
\overline{V}(\pa,\pb,0) & = \sup_{ \left\{u \mid u \text{ } \text{rationalizes } \{ (x^t, p^t ) \}_{t = 1}^T \right\} } \left\{ \overline{V}_{u,A}(\pa,0) - \underline{V}_{u,A}(\pb,0) \right\} \\
& = \sup_{ \left\{u \mid u \text{ } \text{rationalizes } \{ (x^t, p^t ) \}_{t = 1}^T \right\} } \left\{ V_u(\pa) - V_u(\pb) \right\} \\
& = \sup_{ \left\{u \in \mathcal{U} \mid u \text{ } \text{rationalizes } \{ (x^t, p^t ) \}_{t = 1}^T \right\} } \left\{ V_u(\pa) - V_u(\pb) \right\} \\
& = \sup_{ \left\{u \in \mathcal{U} \mid u \text{ } \text{rationalizes } \{ (x^t, p^t ) \}_{t = 1}^T \right\} } \int^1_0 x_{u} \left(t \pa + (1 - t) \pb \right) \left (\pb - \pa \right) dt \\
& \leq \int^1_0 \overline{x}_1 (t \pa + (1 - t) \pb, 0) \left(\pb - \pa \right)  dt.
\end{align*}
The first equality is the definition. The second equality uses the fact that when $\eps = 0$, the approximate indirect utilities equal the indirect utility. The third equality uses the arguments above to conclude it is without loss of generality to consider utility functions that induce a maximizer for $p > \min \{p^1, \ldots, p^T, \pb \}$. The fourth equality uses (\ref{eq:consumer}). The first inequality uses the fact that $\overline{x}_1$ is a pointwise maximizer of demand induced by $u$ that are $\eps$-rationalized by the dataset.

Now it remains to show the opposite inequality. We first show that $\overline{x}_1$ is the demand induced by some quasilinear utility function. When $K = 1$ and $\eps = 0$, for $p > \min\{p^1, \ldots, p^T, \pb\}$, it follows that $\overline{x}_1(p)$ is finite from Proposition~\ref{prop:Xfinite}. Consider the sets $E^1 = \{ (\overline{x}_1 (p),p) \}_{p > \min\{p^1, \ldots, p^T, \pb \}}$ and $E^2 = \{ (x^t, p^t) \}_{t = 1}^T$. We argue that for $e = (e_x, e_p) \in \{E^1 \cup E^2 \}$ and $e' = (e'_x, e'_p)\in \{ E^1 \cup E^2 \}$, the inequality
\[
(e_x - e'_x)(e_p - e'_p) \leq 0
\]
holds. We argue by cases. Indeed, when $e,e' \in E^1$ this follows from Proposition~\ref{prop:singleprice}. When $e,e' \in E^2$, the inequality holds by Lemma~\ref{lem:epsqrat} and the fact that $\eps = 0$. When $e \in E^1$ and $e' \in E^2$, the result follows from Equation~\ref{eq:xhalfspace}. This covers all cases.

Now consider the set $E = E^1 \cup E^2$. From \cite{rockafellar2015convex}, p. 240 the set $E$ is the graph of a multivalued mapping that is cyclically monotononically decreasing. From \cite{rockafellar2015convex}, Theorem 24.3 there is some lower semicontinuous convex function $f$ such that for any $e = (e_x, e_p) \in E$,
\[
-e_x \in \partial f(e_p),
\]
where $\partial f$ denotes the subdifferential of $f$.\footnote{Note that we differ from the statement of Theorem 24.3 in \cite{rockafellar2015convex} because we consider cyclically monotonically decreasing mappings while that result considers increasing mappings; this is why we need to take a negative involving $e_x$.} Let $f^*$ denote the convex conjugate of $f$, which is defined in Appendix~\ref{supp:duality}. We conclude from \cite{rockafellar2015convex}, Theorem 23.5 that for each $e = (e_x,e_p) \in E$,
\[
-e_x \in \argmax_{x} x e_p - f^*(x).
\]
Since $e_x \geq 0$, we conclude via a change in variables that
\[
e_x \in \argmax_{x \geq 0} - x e_p - f^*(-x).
\]

We conclude that by setting $\tilde{u}(x) = -f^*(-x)$, we have for each $e \in E$, the price $e_p$ induces the demand $e_x$ as some exact maximizer of a quasilinear utility function. In particular, recalling $E = E^1 \cup E^2$ and using that $E^1$ is the graph of $\overline{x}_1$ over $p > \min\{p^1, \ldots, p^T, \pb\}$, we conclude that $\overline{x}_1$ is a demand function generated by quasilinear utility with $\tilde{u}$. Moreover, recall $\tilde{u}$ rationalizes the dataset. Thus,
\begin{align*}
\int \overline{x}_1 (t & \pa + (1 - t) \pb) dt \left (\pb - \pa \right) \\ & \leq \sup_{ \left\{u \in \mathcal{U} \mid u \text{ rationalizes } \{ (x^t, p^t ) \}_{t = 1}^T \right\}}\int^1_0 x_u \left(t \pa + (1 - t) \pb \right) \left (\pb - \pa \right) dt \\
& = \overline{V}(\pa,\pb,0).
\end{align*}

\end{proof}

\begin{proof}[Proof of Proposition~\ref{prop:shapev}]
To show shape restrictions on $\overline{V}(\pa,\pb,\eps)$, recall
\[
\overline{V}(\pa,\pb,\eps) = \sup_{ \left\{u \mid u \text{ } \eps-\text{rationalizes } \{ (x^t, p^t ) \}_{t = 1}^T \right\}}  \left\{ \overline{V}_{u,A}(\pa,\eps) - \underline{V}_{u,A}(\pb,\eps) \right\},
\]
where the supremum is over $u$ such that $ \underline{V}_{u,A}(\pb,\eps) \neq \infty$. For any $u$ that $\eps$-rationalizes the data, we can add or subtract a constant to $u$ and the new utility function rationalizes the data as well. In addition, if we let $u + a$ be the utility $u$ plus the constant $a$, then
\[
\overline{V}_{u,A}(\pa,\eps) - \underline{V}_{u,A}(\pb,\eps) = \overline{V}_{u + a,A}(\pa,\eps) - \underline{V}_{u + a,A}(\pb,\eps).
\]
Thus, it is thus without loss of generality to restrict $u$ such that $\underline{V}_{u,A}(\pb,\eps) = 0$. Now recall the upper approximate indirect utility satisfies $\overline{V}_{u,A}(\pa,\eps) = V_u(\pa)$. Combining these arguments we can write
\[
\overline{V}(\pa,\pb,\eps) = \sup_{ \left\{u \mid u \text{ } \eps-\text{rationalizes } \{ (x^t, p^t ) \}_{t = 1}^T \right\}}  \left\{ V_{u}(\pa) - 0 \right\},
\]
where the supremum is over $u$ such that $\underline{V}_{u,A}(\pb,\eps) = 0$. We know that each $V_u$ is convex, weakly decreasing, and lower semicontinuous by Lemma~\ref{slem:shapeconjugate}. Note that this is true for any $u$, regardless of whether it is concave or upper semicontinuous.

We conclude that when viewing $\overline{V}(\pa,\pb,\eps)$ only as a function of $\pa$, it is the supremum (over $u$) of convex, weakly decreasing, lower semicontinuous functions. It is therefore convex, weakly decreasing, and lower semicontinuous in $\pa$ from \cite{rockafellar2015convex}, Theorems 5.5 and 9.4.

To see that $\overline{V}(\pa,\pb,\eps)$ is weakly increasing in $\eps$, we recall the characterization as a linear program in Proposition~\ref{prop:lpwelfare}. The feasibility region of the linear program is weakly increasing (with regard to set inclusion) in $\eps$, and so the value function of the problem is weakly increasing in $\eps$.

We now establish the finiteness properties in the proposition. Let $\pa \in \CCo(\{p^t\}_{t = 1}^T)$. We show $\overline{V}(\pa,\pb,\eps)$ is finite. We can write
\[
\pa \geq \sum_{t = 1}^T \alpha_t p^t
\]
for some nonnegative $\alpha_1, \ldots, \alpha_T$ that sum to $1$ where the inequality holds componentwise. We have
\begin{align*}
V_u(\pa) & \leq V_u \left(\sum_{t = 1}^T \alpha_t p^t \right)  \\
& \leq \sum_{t = 1}^T \alpha_t V_u(p^t),
\end{align*}
where the first inequality follows because $V_u$ is weakly decreasing, and the second inequality follows because $V_u$ is convex. Recall from (\ref{eq:indineq}) in the main text that for any $u$ that $\eps$-rationalizes the dataset, we have
\[
\overline{V}_{u,A}(p^t,\eps) - \underline{V}_{u,A}(\pb,\eps) \leq V_u(p^t) - V_u(\pb) + \eps \leq (x^t \cdot (\pb - p^t) + \eps) + \eps
\]
for any $t \in \{1, \ldots, T \}$. Combining the previous steps we obtain
\[
\overline{V}(\pa,\pb,\eps) \leq \sum_{t = 1}^T \alpha_t x^t \cdot (\pb - p^t) + 2 \eps < \infty.
\]

Now we show that if $\pa \not\in \CCo(\{p^t\}_{t = 1}^T \cup \pb)$, then $\overline{V}(\pa,\pb,\eps) = \infty$. First note that from Proposition~\ref{prop:nonempty}, there is some utility function that $\eps$-rationalizes the dataset and has a maximizer at the price $\pb$. Let $\tilde{x}^0$ denote such a maximizer. Now construct the augmented dataset $\{ (x^t, p^t) \}_{t = 1}^{T+1}$, where $(x^{T+1},p^{T+1}) = (\tilde{x}^0, \pb)$. Note that by construction, there is some utility function $u$ that $\eps$-rationalizes the augmented dataset $\{ (x^t, p^t) \}_{t = 1}^{T+1}$. For any such function $u$, the indirect utility satisfies $V_u(p) < \infty$ for any $p \in \CCo(\{p^t\}_{t = 1}^T \cup \pb)$ from the finiteness arguments above. Thus, it remains to show that when $p \not\in \CCo(\{p^t\}_{t = 1}^T \cup \pb)$, there is some $u$ that $\eps$-rationalizes the augmented dataset and satisfies $V_u(p) = \infty$. That end, fix $(x^1, p^1) \in \{ (x^t, p^t) \}_{t=1}^{T+1}$ and let $\Sigma_1$ denote the set of finite sequences of $t \in \{1,\ldots,T\}$ with no cycles that begins at $\sigma(1) = 1$. Define
\[
U(x) = \min_{\sigma \in \Sigma_1} \left\{ p^{\sigma(M)} \cdot \left(x - x^{\sigma(M)}\right) +  \sum_{m=1}^{M-1} p^{\sigma(m)} \cdot (x^{\sigma(m+1)} - x^{\sigma(m)})  + M  \eps \right\},
\]
where $M$ corresponds to the length of a particular sequence. \cite{allen2020satisficing} have shown that for $\eps \geq \eps^*$, this function $\eps$-rationalizes the augmented dataset $\{ (x^t, p^t) \}_{t = 1}^{T+1}$.

Since $\pa \not\in \CCo(\{p^t\}_{t = 1}^T, \pb)$ from the separating hyerplane theorem, there is some $x \in \mathbb{R}^K$ with $x\neq0$ such that
\[
(p - \pa) \cdot x > 0 \qquad \text{ for all } p \in \CCo(\{p^t\}_{t = 1}^T \cup \pb).
\]
We argue by contradiction that $x$ contains no negative components. Indeed, suppose it does so that $x_k < 0$ for some $k$. Since $\pa \in \mathbb{R}^K_{++}$ and $\CCo(\{p^t\}_{t = 1}^T \cup \pb)$ is upper comprehensive, we can find some $p \in \CCo(\{p^t\}_{t = 1}^T \cup \pb)$ with $p_k$ high enough so that $(p - \pa) \cdot x < 0$. We reach a contradiction and conclude $x \in \mathbb{R}^K_{+}$ and $x \neq 0$.

Note that in the definition of $U(x)$, the minimum is taken over certain functions that involve $p^t \cdot x$ for some $t$, plus a constant. Thus, for shorthand write
\[
U(x) = \min_{t \in \{1, \ldots, T + 1 \}} \min_{a \in A^t} \{ p^t \cdot x + a \}
\]
for certain finite sets $A^t$ corresponding to the sums in the construction of $U(x)$. We conclude that
\begin{align*}
\lim_{\lambda \rightarrow \infty} U(\lambda x) - \pa \cdot \lambda x=
\lim_{\lambda \rightarrow \infty} \min_{t \in \{1, \ldots, T + 1 \}} \min_{a \in A^t}  \left\{ (p^t - \pa) \cdot \lambda x + a \right\} = \infty.
\end{align*}
This establishes that $V_U(\pa) = \infty$ and so since $V_U(\pb) < \infty$ we conclude $\overline{V}(\pa,\pb,\eps) \geq V_U(\pa) - V_U(\pb) = \infty$.

\end{proof}

\subsection{Proofs for Section~\ref{sec:shape}}

In order to prove Proposition~\ref{prop:convexcount}, we first prove a lemma. The lemma establishes a convexity property of the set of counterfactual quantities at a given price,
\[
X(\tilde{p},D,\eps) = \{ \tilde{x} \mid (\tilde{x}, \tilde{p}) \in C(D,\eps) \}.
\]
\begin{lemma} \label{lem:convexfeas}
Let $D^0 = \{ (d^{0,t}, p^t) \}_{t = 1}^T$ and $D^1 = \{ (d^{1,t}, p^t) \}_{t = 1}^T$ differ only for quantities. If $\tilde{d}^j \in X\left(\tilde{p},  D^j, \eps^j\right)$ for $j \in \{ 0 ,1 \}$, then for any $\alpha \in [0,1]$,
\[
\alpha \tilde{d}^0 + (1 - \alpha) \tilde{d}^1 \in X(\tilde{p}, \alpha D^0 + (1 - \alpha) D^1, \alpha \eps^0 + (1 - \alpha) \eps^1).
\]
In addition, the set $A$ described in Proposition~\ref{prop:convexcount} is convex.
\end{lemma}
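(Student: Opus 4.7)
The plan is to reduce everything to the affine-inequality characterization of feasibility provided by Proposition~\ref{prop:Xdetail} and Lemma~\ref{lem:epsqrat}(iii). Once membership in $X(\tilde p, D, \eps)$ is written as a conjunction of affine inequalities jointly in the candidate quantity $\tilde x$, the quantities $d=(d^1,\ldots,d^T)$, and the slack $\eps$, convex combinations will automatically preserve feasibility.

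For the first claim, I will start from the equivalence $\tilde x \in X(\tilde p, D, \eps) \iff D\cup(\tilde x,\tilde p) \text{ is } \eps\text{-rationalized}$, which by Lemma~\ref{lem:epsqrat}(iii) amounts to: (a) every cycle inequality
\[
\sum_{m=1}^{M} p^{t_m}\cdot(d^{t_m}-d^{t_{m+1}}) \le M\eps
\]
on $D$ (with $t_{M+1}=t_1$) holds, (b) every augmented inequality from Equation~\ref{eq:xhalfspace}, namely those incorporating $(\tilde x,\tilde p)$, holds, and (c) the nonnegativity constraint $\tilde x \ge 0$ holds. Because the prices $\{p^t\}_{t=1}^T$ and $\tilde p$ are common to $D^0$ and $D^1$, each of these inequalities is affine jointly in $(\tilde x, d, \eps)$. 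By hypothesis both triples $(\tilde d^j, D^j, \eps^j)$ satisfy them, so the convex combination with weights $\alpha$ and $1-\alpha$ satisfies them as well, which is exactly the desired conclusion.

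For the convexity of $A$, the same reasoning applies in one fewer variable: $A$ is by definition the set of $(d,\eps)$ for which $\{(d^t,p^t)\}_{t=1}^T$ is $\eps$-rationalized, i.e. the intersection of the half-spaces of type (a) above with the nonnegative orthant, hence convex. Equivalently, by Proposition~\ref{prop:meascont}, $A$ is the epigraph of the convex map $d\mapsto\eps^*(d)$ intersected with $\mathbb{R}_+^{K\times T}\times\mathbb{R}_+$. I do not anticipate any serious obstacle: the proof is essentially bookkeeping, the one substantive input being that prices are fixed across $D^0$ and $D^1$, so the coefficients in the defining inequalities do not depend on which dataset is being considered and the affinity in the remaining variables is genuine.
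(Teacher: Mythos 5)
Your proposal is correct and uses essentially the same argument as the paper: both reduce membership in $X(\tilde p, D,\eps)$ to a system of inequalities that are jointly affine in the quantities, the candidate $\tilde x$, and $\eps$ (with coefficients depending only on the fixed prices), and then observe that convex combinations preserve such inequalities. The only cosmetic difference is that you work with the projected cycle/half-space characterization from Proposition~\ref{prop:Xdetail} and Lemma~\ref{lem:epsqrat}(iii), whereas the paper averages the lifted system of Lemma~\ref{lem:epsqrat}(ii), taking convex combinations of the auxiliary utility numbers $u^{j,t}$ as well; both routes are valid and equivalent.
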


\begin{proof}
The sets $C(D,\eps)$ and $X(\tilde{p},D,\eps)$ can be characterized by using any of the equivalent statements of Lemma~\ref{lem:epsqrat}, applied to the counterfactual-augmented dataset. In particular, for an arbitrary (hypothetical) dataset $D^j = \left\{ \left(d^{j,t}, p^t \right) \right\}_{t = 1}^T$, $\tilde{d}^j \in X\left(\tilde{p},  D^j, \eps^j\right)$ means there are numbers $u^{j,1}, \ldots, u^{j,K}, \tilde{u}^j \in \mathbb{R}_+$ such that
\begin{align*}
u^{j,s} & \le u^{j,r} + p^r \cdot \left(d^{j,s}-d^{j,r} \right) + \varepsilon &\quad \text{for all}\; r,s \in \{1,\ldots, T\} \\
u^j & \le u^{j,r} + p^r \cdot \left(\tilde{d}^j-d^{j,r} \right) + \varepsilon &\quad \text{for all}\; r \in \{1,\ldots, T\} \\
u^{j,r} & \le \tilde{u}^j + \tilde{p} \cdot \left(d^{j,r}-\tilde{d}^j \right) + \varepsilon &\quad \text{for all}\; r \in \{1,\ldots, T\}.
\end{align*}
In addition, $\tilde{d}^j$ must be non-negative. We can take a convex combination of the values for $j = 0$ and $j = 1$ and the inequalities are preserved. For example, considering the first inequalities that involve a pair $r,s$, we have
\begin{align*}
\alpha u^{0,s} & + (1 - \alpha) u^{1,s} \leq \alpha u^{0,r} + (1 - \alpha) u^{1,r} + \\
& p^r \cdot (\alpha d^{0,s} + (1 - \alpha) d^{1,s} - (\alpha d^{0,r} + (1 - \alpha) d^{1,r})) + \alpha \eps^0 + (1 - \alpha) \eps^1.
\end{align*} 
Since by Lemma~\ref{lem:epsqrat} the inequalities displayed above are the only ones we need to check, we obtain that
\[
\alpha \tilde{d}^0 + (1 - \alpha) \tilde{d}^1 \in X(\tilde{p}, \alpha D^0 + (1 - \alpha) D^1, \alpha \eps^0 + ( 1- \alpha) \eps^1).
\]
Finally, convexity of $A$ follows from similar averaging of the inequalities characterizing $\eps$-rationalizability in Lemma~\ref{lem:epsqrat}(iii).
\end{proof}

\begin{proof}[Proof of Proposition~\ref{prop:convexcount}]
Recall we now include quantities as arguments of the bounds. In more detail, write
\[
\overline{x}_k \left(d,\tilde{p},\eps \right) = \max_{\tilde{x} \in X\left(\tilde{p}, D, \eps \right)} \tilde{x}_k,
\]
where $D = \left\{ \left(d^t, p^t \right) \right\}_{t = 1}^T$ and we work in the extended reals so that $\overline{x}_k(d,\tilde{p},\eps)$ may be $\infty$. Addition is defined as $\infty + a = \infty$ provided $a$ is not $-\infty$.

Let $\tilde{d}^0, \tilde{d}^1 \in \mathbb{R}^{K \times T}$ be arbitrary quantities datasets. Since $\overline{x}_k$ is a maximum, we obtain
\[
\overline{x}_k \left(\alpha \tilde{d}^0 + (1 - \alpha) \tilde{d}^1, \tilde{p}, \alpha \eps^0 + (1 - \alpha) \eps^1 \right) \geq \alpha \overline{x}_k \left(\tilde{d}^0, \tilde{p}, \eps^0 \right) + (1 - \alpha) \overline{x}_k \left(\tilde{d}^1, \tilde{p}, \eps^1 \right),
\]
because from Lemma~\ref{lem:convexfeas} the weighted average of the values is feasible. This establishes concavity of $\overline{x}_k$ in its non-price arguments. Since $\underline{x}_k$ is a minimum we obtain
\[
\underline{x}_k \left(\alpha \tilde{d}^0 + (1 - \alpha) \tilde{d}^1, \tilde{p}, \alpha \eps^0 + (1 - \alpha) \eps^1 \right) \leq \alpha \underline{x}_k \left(\tilde{d}^0, \tilde{p}, \eps^0 \right) + (1 - \alpha) \underline{x}_k \left(\tilde{d}^1, \tilde{p}, \eps^1 \right),
\]
and so the lower bound is convex in its non-price arguments.

To establish continuity of $\overline{x}_k(\cdot, \tilde{p}, \cdot)$ as stated in the proposition, recall the linear programming formulation in \ref{a:qcounterfactuals}. We see that quantities $d$ and approximation error $\eps$ enter additively (relative to the choice variables) in the inequalities describing feasibility region. That is, they are part of the ``b'' in the canonical linear programming formulation from Appendix~\ref{supp:lp}.  Continuity then follows from Lemma~\ref{lem:lpcont}.
\end{proof}

\begin{proof}[Proof of Proposition~\ref{prop:convexwelfare}]
The proof is analogous to the proof of Proposition~\ref{prop:convexcount} and so we only outline it. Recall that $\overline{V}$ and $\underline{V}$ are described by a linear program in Proposition~\ref{prop:welfarecomp}. By inspecting the feasibility region of this program we see convexity holds similar to Lemma~\ref{lem:convexfeas}. From this, we conclude that $\overline{V}$ satisfies the concavity property in the proposition because it is a maximum, and $\overline{V}$ satisfies the convexity property in the proposition because it is a minimum.

Continuity of $\overline{V}$ and $\underline{V}$ in $(d,\eps)$ over the region where these bounds are finite follows from the linear programming formulation in Proposition~\ref{prop:lpwelfare} and Lemma~\ref{lem:lpcont}.
\end{proof}

\begin{proof}[Proof of Proposition~\ref{prop:meascont}]
Convexity is established in Proposition 3 in \cite{allen2020satisficing}. Continuity follows from the linear programming characterization of $\eps^*$ in Proposition 2 in \cite{allen2020satisficing}. Indeed, $\eps^*$ can be written as a maximum of finitely many functions that are affine in quantities $d$.
\end{proof}

\begin{proof}[Proof of Corollary~\ref{cor:counterfactualconsistency}]
By assumption $\overline{x}_k(d, \tilde{p}, \eps^*(d))$ is finite. By construction, $\left(\hat{d}^n,\eps \left(\hat{d}^n \right) \right) \in A$ for each $n$. Then from Propositions~\ref{prop:convexcount},~\ref{prop:meascont}, and the continuous mapping theorem, $\overline{x}\left(\hat{d}^n, \tilde{p}, \eps(\hat{d}^n)\right) \xrightarrow{p} \overline{x}_k(d, \tilde{p}, \eps^*(d))$. The arguments for $\overline{x}, \underline{V}$, and $\overline{V}$ are analogous.

\end{proof}

\begin{proof}[Proof of Proposition~\ref{prop:ushape}]
First note that the set $A(\tilde{x})$ is convex for any $\tilde{x} \in \mathbb{R}^K_+$ because it is a projection of the set $A$, which is convex by Lemma~\ref{lem:convexfeas}. The proof is analogous to the proofs of Propositions~\ref{prop:convexcount} and~\ref{prop:convexwelfare}. The feasibility region of the program describing $\overline{u}$ and $\underline{u}$ is given in Proposition~\ref{prop:lputility}. This feasibility regions of $\overline{u}$ and $\underline{u}$ are convex in quantities and degree of approximation error over the sets $A(\tilde{x}^0)$ and $A(\tilde{x}^1)$ respectively, similar to Lemma~\ref{lem:convexfeas}. Since $\overline{u}$ is a supremum it is concave, and since $\underline{u}$ is an infimum it is convex.

Continuity of $\overline{u}$ and $\underline{u}$ in $(d,\eps)$ over the region stated in the proposition follows from the characterization of the bounds in Proposition~\ref{prop:lputility}(ii) and (iii).  Indeed, $\overline{u}$ is the maximum of finitely many functions that are each affine in $(d,\eps)$, and $\underline{u}$ is the minimum.
\end{proof}

\section{Alternative Approaches} \label{sec:alternative}
Assumption~\ref{assm:maint} is the key conceptual assumption for this paper, which posits that approximation error is the same in new settings as the data we have seen. We have operationalized this for counterfactual and welfare analysis with a number controlling approximation error as in \cite{allen2020satisficing}. We now describe other potential ways to conduct counterfactual or welfare analysis. We also elaborate on the measurement and prediction wedges.

This paper focuses on approximation error being controlled by a single scalar. An alternative approach is to consider a multidimensional notion along the lines of \cite{afriat1972efficiency}, \cite{varian1990goodness}, \cite{varian1991goodness}, \cite{halevy2018parametric}, and \cite{masten2018salvaging}. We pursue this by allowing each observation to have its own value $\eps^t_V$ of approximation error relative to exact optimization.
\begin{defn} A dataset $\{ (x^t,p^t) \}_{t=1}^T$ is $\varepsilon_{\text{V}}$-rationalized by quasilinear utility for $\eps_{\text{V}}=(\eps_{\text{V}}^1,\ldots,\eps_{\text{V}}^T) \in \mathbb{R}_+^T$ if there exists a utility function $u: \mathbb{R}_+^K \rightarrow \mathbb{R}$ such that for all $t \in \{1,\ldots,T\}$ and for all $x \in \mathbb{R}_+^K$, the following inequality holds:
\[ u(x^t)-p^t\cdot x^t + \ge  u(x) - p^t \cdot x - \varepsilon_{\text{V}}^t.\]
We also refer to the above by saying a dataset is $\eps_{\text{V}}$-quasilinear rationalized.
\end{defn}

We can apply this concept to counterfactual analysis, as in the main text, by considering datasets in which the last observation is the hypothetical. That is, for a dataset $\{ (x^t,p^t) \}_{t=1}^{T+1}$, interpret the first $T$ observations as data we have seen and the last $T+1$ observation as be the hypothetical. In this case, the \textit{measurement wedges} are controlled by the collection $(\eps^1_V, \ldots, \eps^T_V)$ of values for the observed data, while the \textit{prediction wedge} is the scalar $\eps^{T+1}_V$. In principle we can consider counterfactuals involving several observations such as $T+1$ and $T+2$. We focus on the case of a single counterfactual for brevity.

In the main text, we set a single number controlling the measurement wedge and the prediction wedge. The full strength of Assumption~\ref{assm:prime} also imposes that these are equal to the minimal approximation error needed to explain the data we have seen. In this appendix we drop the assumptions that these wedges are the same, which shows how to generalize our framework when Assumption~\ref{assm:prime} is relaxed.

\subsection{Counterfactuals with Multidimensional Approximation Error}

Because $\eps_V$-rationalization is multidimensional, in general there is no single ``smallest'' vector $\eps_V$ such that the dataset is $\eps_V$-rationalized by quasilinear utility. Nonetheless, we can define a set of such rationalizing vectors via
\[
E\left(\tilde{D} \right) = \left\{ \eps_V \in \mathbb{R}^T_{+} \mid \tilde{D} \text{ is } \eps_V\text{-quasilinear rationalized} \right\},
\]
where we let $\tilde{D} = ((x^1, p^1),\ldots,(x^T,p^T))$ be the observed dataset written as an ordered tuple. Note that we switch from an unordered dataset to an ordered tuple. The reason we care about the order of observations now is that the $t$-th dimension of $E\left(\tilde{D} \right)$ corresponds to approximation error associated with the $t$-th observation. We can conduct counterfactual analysis as before by considering the set of quantity-price tuples that do not make approximation error worse. 

Formalizing worse here leads to some ambiguity since we do not have a total order on vectors. We consider two possibilities. To formalize these, let $\pi_T : \mathbb{R}^{T + 1} \rightarrow \mathbb{R}^T$ denote the projection onto the first $T$ components. We can then define the sets of lower and upper  approximate counterfactuals by
\begin{align*}
\underline{AC}\left(\tilde{D} \right) &= \left\{ (\tilde{x},\tilde{p}) \in \mathbb{R}_{+}^K \times \mathbb{R}_{++}^K \mid  \pi_T \left(E \left(\tilde{D}  \times (\tilde{x},\tilde{p})\right)\right) = E \left(\tilde{D} \right) \right\} \\
\overline{AC}\left(\tilde{D} \right) &= \left\{ (\tilde{x},\tilde{p}) \in \mathbb{R}_{+}^K  \times \mathbb{R}_{++}^K \mid \pi_T \left(E \left(\tilde{D}  \times (\tilde{x},\tilde{p}) \right) \right) \cap  E \left(\tilde{D} \right) \neq \emptyset \right\}.
\end{align*}
With minor abuse of notation we define $E(\cdot)$ in the obvious way for datasets of different dimensions. Clearly, $\underline{AC}\left(\tilde{D} \right) \subseteq \overline{AC}\left(\tilde{D} \right)$. The smaller set formalizes that counterfactuals do not change the potential $\eps_V$ vectors that rationalize the data we see when we add an existing observation. This smaller set is conceptually closer to the original adapative counterfactual set $AC(\cdot)$. The larger set formalizes that there is \textit{some} $\eps_V$ vector that $\eps_V$-rationalizes both the original dataset and the counterfactual-augmented dataset $\tilde{D} \times (\tilde{x},\tilde{p})$.

While the sets $\underline{AC}$ and $\overline{AC}$ may appear to be intuitive alternatives to the adaptive approach presented in the main text, unfortunately these sets are trivial. To see this, note that the set $\underline{AC}\left(\tilde{D} \right)$ allows the prediction wedge $\eps_V^{T+1}$ for the counterfactual value to be unbounded. This leads to trivial restrictions for both $\underline{AC}\left(\tilde{D} \right)$ and the larger set $\overline{AC}\left(\tilde{D} \right)$. However, nontriviality can be restored if we modify the sets by placing an \textit{a priori} bound on approximation error at the new observation. To formalize this, let $\overline{\pi} : \mathbb{R}^{T + 1} \rightarrow \mathbb{R}$ denote the projection of the last component. Then we can modify the smaller set via
\begin{align*}
\underline{AC}'\left(\tilde{D},\eps^{T+1}_V \right) = \Big\{ (\tilde{x},\tilde{p}) \in & \mathbb{R}_{+}^K \times \mathbb{R}_{++}^K \mid  \pi_{T} \left(E \left(\tilde{D}  \times (\tilde{x},\tilde{p})\right)\right) = E \left(\tilde{D} \right), \\
& \overline{\pi}\left(E \left(\tilde{D}  \times (\tilde{x},\tilde{p})\right)\right) \leq \eps^{T+1}_V \Big\}.
\end{align*}
Thus, the prediction wedge is restricted by the number $\eps^{T+1}_V$. This set is not data-adaptive in the sense that $\eps^{T+1}_V$ needs to be chosen by the researcher. However, one can make this data adaptive by using information obtained from other measures of approximation error discussed below. 

\subsection{Other Measures of Approximation Error}
It is natural to wonder for the multidimensional vector of approximation errors, $\eps_V$,  whether other intuitive one-dimensional summaries can be used for counterfactual analysis. In fact, there can be many ways to do this depending on how one aggregates the approximation error.  We consider general aggregators of the elements of multidimensional approximation error $\eps_V$ that turn it into a one-dimensional measure of approximation error. Formally, an aggregator can be written $e^T : \mathbb{R}^T_+ \rightarrow [0,\infty)$. Higher values of the aggregator can be interpreted as more approximation error. \cite{varian1990goodness} and \cite{halevy2018parametric} consider a related notion in the standard consumer problem for general utility maximization.\footnote{Many of the convenient shape restrictions we obtain in this paper do not hold for the case of general utility maximization since the constraint set of consistent utility indices is non-convex.}

Given a dataset $\tilde{D} = ((x^1, p^1),\ldots,(x^T,p^T))$ and an aggregator $e^T$, we can define a measure of approximation error via
\[
e^{T*}\left(\tilde{D} \right) = \inf_{\eps_V \in E\left(\tilde{D} \right) } e^T(\eps_V).
\]
One such aggregator is the max aggregator of $e_M^T(\eps_V) = \max_{t \in T} \eps^t_V$. The measure of approximation error for the max aggregator agrees with the one presented in the main text, i.e. $e_M^{T*} = \eps^*$. In general, an aggregator can depend on the sample size. For example, consider the average approximation error aggregator $e_A^T(\eps_V)=\frac{1}{T}\sum_{t =1}^T \eps_V^t$. Our leading measure, $\eps^*$, does not depend on $T$.   

For an arbitrary aggregator, similar to how $AC$ was constructed with the measure $\eps^*$, we can define a set of counterfactuals such that approximation error does not get worse:
\begin{equation} \label{eq:aggregatorcount}
\left\{ (\tilde{x},\tilde{p}) \in \mathbb{R}_{+}^K \times \mathbb{R}_{++}^K \mid e^{(T+1)*} \left(\tilde{D} \times  (\tilde{x},\tilde{p}) \right) \leq e^{T*}\left(\tilde{D} \right) \right\}.
\end{equation}
This construction does not separately control the prediction wedge and approximation wedge as in \underline{AC}' described at the end of the previous subsection. Instead, it lumps together both prediction and approximation wedges via the aggregators $e^{(T+1)*}$ and $e^{T*}$.

To understand properties of this set, consider an aggregator that sums up the observation-specific bounds on approximate optimization, $e_S^T(\eps_V) = \sum_{t = 1}^T \eps^t_V$.
With this choice of aggregator, each conjectured observation in (\ref{eq:aggregatorcount}) must be perfectly consistent with the model, i.e. $\eps^{T+1}_V = 0$, for approximation error to not be made worse. In other words, for each element of (\ref{eq:aggregatorcount}), there must exist some utility function that approximately explains the existing dataset $\tilde{D}$, but \textit{exactly} explains the counterfactual. Thus, there is no prediction wedge. This property may be desirable when one thinks the observed dataset $\tilde{D}$ comes from a ``true'' dataset that is generated by the quasilinear model but has been measured with error. Using the previous terminology, in this case we may wish to conduct counterfactual analysis without a prediction wedge. If instead we think approximation error propogates to new settings, then we may wish to allow the prediction wedge. This is one motivation for $\eps^*$ and the adaptive set $AC$.

One potential way to address this limitation of the sum-type aggregator $e^T_S$ is to adjust it by dividing by the sample size to obtain the average approximation error aggregator so that $\tilde{e}^{T*}_A = \frac{1}{T} e^{T*}_S$. This division allows one to construct a set analogously to (\ref{eq:aggregatorcount}) that allows a prediction wedge when generating counterfactual information.

\section{Supplemental Appendix} \label{supp}

This appendix contains additional results needed for proofs of the main results. Section~\ref{supp:misc} contains miscellaneous lemmas, Section~\ref{supp:lp} presents lemmas specifically for linear programming results, and Section~\ref{supp:duality} presents duality results used in proofs for approximate indirect utility in Section~\ref{sec:indirectbounds}.

\subsection{Miscellaneous Lemmas} \label{supp:misc}

\begin{lemma}[\cite{allen2020satisficing}]\label{lem:epsqrat}
For any dataset $\{(x^t, p^t)\}_{t=1}^T$ and $\varepsilon \ge 0$, the following are equivalent:
	\begin{enumerate}[(i)]
		\item $\{(x^t, p^t )\}_{t=1}^T$ is $\varepsilon$-rationalized by quasilinear utility.
		\item There exist numbers $\{ u^t \}_{t=1}^T$ that satisfy the following inequalities for all $r,s \in \{1,\ldots,T\}$:
		\[  u^s \le u^r + p^r \cdot (x^s-x^r) + \varepsilon.\]
		\item For all finite sequences $\{t_m\}_{m=1}^M$ with $t_m \in \{1,\ldots,T\}$ and $M\ge 2$, the inequality
        \[
        \frac{1}{M} \sum_{m = 1}^M p^{t_m} \cdot (x^{t_m} - x^{t_{m+1}}) \leq \eps
        \]
        holds, where $(x^{t_{M+1}},p^{t_{M+1}}) = (x^{t_1}, p^{t_1})$.
		\end{enumerate}
\end{lemma}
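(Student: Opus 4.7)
The plan is to establish the ring of implications (i) $\Rightarrow$ (ii) $\Rightarrow$ (iii) $\Rightarrow$ (i), which is the standard strategy for Afriat-type characterizations, adapted here to the quasilinear case with approximate optimization.

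First, for (i) $\Rightarrow$ (ii), I would take the rationalizing utility $u$ from (i) and simply set $u^t = u(x^t)$ for each $t$. Applying the defining inequality of (i) with $x = x^s$ yields $u(x^t) - p^t \cdot x^t \ge u(x^s) - p^t \cdot x^s - \varepsilon$, which rearranges to the pairwise inequality of (ii). This step is immediate.

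For (ii) $\Rightarrow$ (iii), I would rewrite the inequality in (ii) as $u^{t_m} - u^{t_{m+1}} \ge p^{t_m} \cdot (x^{t_m} - x^{t_{m+1}}) - \varepsilon$, instantiated at consecutive indices of the sequence $\{t_m\}_{m=1}^M$ with the cyclical convention $t_{M+1} = t_1$. Summing over $m = 1, \ldots, M$, the left-hand side telescopes to zero, so $0 \ge \sum_{m=1}^{M} p^{t_m} \cdot (x^{t_m} - x^{t_{m+1}}) - M\varepsilon$, and dividing by $M$ gives (iii).

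The main obstacle is (iii) $\Rightarrow$ (i), which requires explicitly constructing a utility function from the cycle inequalities. I would mimic the Afriat construction used in the proof of Proposition~\ref{prop:nonempty}, defining, for any fixed base observation,
\[
u(x) = \min_{\sigma \in \Sigma} \left\{ p^{\sigma(M)} \cdot (x - x^{\sigma(M)}) + \sum_{m=1}^{M-1} p^{\sigma(m)} \cdot (x^{\sigma(m+1)} - x^{\sigma(m)}) + M \varepsilon \right\},
\]
where $\Sigma$ ranges over finite sequences $\{\sigma(m)\}_{m=1}^M$ of indices in $\{1,\ldots,T\}$. The candidate $u$ is a pointwise minimum of finitely many affine functions and hence concave and continuous. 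The cycle condition (iii) is exactly what is needed to show that $u$ does not collapse at the observed points: specifically, one uses (iii) to verify that $u(x^t)$ equals the value of the length-one sequence terminating at $t$, so that $u(x^t) - p^t \cdot x^t$ is well-controlled. Then, for each $t$ and each $x \in \mathbb{R}_+^K$, one selects a sequence nearly attaining the minimum for $u(x)$, prepends observation $t$ to it, and uses the inequality defining that extended sequence to obtain $u(x) - p^t \cdot x \le u(x^t) - p^t \cdot x^t + \varepsilon$, which is the $\varepsilon$-rationalization inequality. The delicate part is checking that the concatenation of sequences interacts correctly with the $M\varepsilon$ term so that only one extra $\varepsilon$ appears, and handling the subtlety that the minimum is well-defined (finite) at every $x \in \mathbb{R}_+^K$; here one may need to either restrict attention to sequences with no cycles or invoke (iii) to argue that cycles cannot improve the minimum, reducing to finitely many sequences.
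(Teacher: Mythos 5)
The paper does not actually prove this lemma; it is imported verbatim from \cite{allen2020satisficing}. Your ring of implications (i)$\Rightarrow$(ii)$\Rightarrow$(iii)$\Rightarrow$(i) is the standard one, and your candidate utility for (iii)$\Rightarrow$(i) is exactly the min-over-sequences function that the paper itself deploys in the proofs of Proposition~\ref{prop:nonempty} and Proposition~\ref{prop:lputility}, including the correct use of (iii) to show that deleting a cycle from a sequence can only lower its value (each cycle of length $L$ contributes $\sum_m p^{t_m}\cdot(x^{t_{m+1}}-x^{t_m})+L\varepsilon\ge 0$), so the minimum reduces to the finite set of cycle-free sequences and is finite everywhere. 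Steps (i)$\Rightarrow$(ii) and (ii)$\Rightarrow$(iii) are correct as written.

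There is one concrete error in your description of the key step. To prove $u(x)-p^t\cdot x\le u(x^t)-p^t\cdot x^t+\varepsilon$ you must exhibit a feasible sequence for $u(x)$ whose value is the right-hand side plus $p^t\cdot x$; starting from a near-minimizer of $u(x)$ and prepending $t$ goes in the wrong direction, since knowing a near-minimizer of a minimum only yields lower-bound information about $u(x)$, and prepending $t$ at the front changes the chain sum by $p^t\cdot(x^{\sigma(1)}-x^t)$, which does not relate $u(x)$ to $u(x^t)$ at all. The correct move (and the one the paper uses when verifying that $\tilde u$ $\varepsilon$-rationalizes the data in the proof of Proposition~\ref{prop:lputility}) is to take a sequence $\sigma^{*,t}$ attaining the minimum defining $u(x^t)$, \emph{append} $t$ as the new terminal index, and evaluate the extended sequence at $x$: its value is exactly $u(x^t)+p^t\cdot(x-x^t)+\varepsilon$, so the minimum $u(x)$ is bounded above by it, with precisely one extra $\varepsilon$ from the increase in sequence length. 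Relatedly, your claim that (iii) forces $u(x^t)$ to equal the value of the length-one sequence terminating at $t$ is neither true in general nor needed; the appending argument works for whichever sequence attains $u(x^t)$. With those two repairs the argument is complete.
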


We require a lemma that will be used in the proof of Proposition~\ref{prop:nonempty} to ensure a maximizer exists. In contrast with models with compact budget constraints, continuity of the utility function $u$ is \textit{not} enough to ensure a maximizer exists, which is why we require the following lemma. To state the lemma, recall that for a utility function $u$, the indirect utility is defined as
\[
V_u(p) = \sup_{x \in \mathbb{R}^K_+} u(x) - p \cdot x.
\]
\begin{lemma} \label{lem:existence}
Suppose $u : \mathbb{R}^K_{+} \rightarrow \mathbb{R}$ is concave, monotonically increasing, and continuous. Moreover, suppose $V_u(p)$ is finite over some open set $O \subseteq \mathbb{R}^K$. It follows that for any price $p \in \ri( \Co(O) )$,\footnote{For a set $S \subseteq \mathbb{R}^K$, $\ri(S)$ gives the relative interior of the set $S$ as defined in \cite{rockafellar2015convex}.}
\[
u(x) - p \cdot x
\]
admits a maximizer for $x\in \mathbb{R}_+$.
\end{lemma}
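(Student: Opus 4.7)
The plan is to exploit convexity and local finiteness of the indirect utility function $V_u$ to extract an a priori bound on any maximizing sequence, and then pass to a limit using continuity of $u$. Concavity and monotonicity of $u$ are not strictly needed for the argument; continuity is the only property of $u$ I would invoke directly.

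First I would note that $V_u$, being the pointwise supremum over $x \in \mathbb{R}^K_+$ of the affine functions $p \mapsto u(x) - p\cdot x$, is a convex function on $\mathbb{R}^K$ taking values in $\mathbb{R}\cup\{+\infty\}$. Its effective domain is therefore convex, so finiteness on $O$ upgrades to finiteness on $\Co(O)$. Since $O$ is open in $\mathbb{R}^K$, the set $\Co(O)$ has nonempty interior, so $\ri(\Co(O)) = \interior(\Co(O))$, and by a standard convex-analysis fact (\cite{rockafellar2015convex}, Theorem 10.1) $V_u$ is continuous on this open set. Fix $p \in \ri(\Co(O))$, choose $r > 0$ small enough that $\overline{B}(p,r) \subseteq \ri(\Co(O))$, and let $M$ be an upper bound for $V_u$ on the compact ball $\overline{B}(p,r)$.

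The heart of the proof is to convert this local bound on $V_u$ into coordinate-wise bounds on a maximizing sequence. Take $\{x_n\} \subseteq \mathbb{R}^K_+$ with $u(x_n) - p\cdot x_n \to V_u(p)$. For each $i \in \{1,\ldots,K\}$, the price $p - r e_i$ lies in $\overline{B}(p,r)$, so by the definition of $V_u$ evaluated at $x_n$,
\[
M \;\geq\; V_u(p - r e_i) \;\geq\; u(x_n) - (p - r e_i)\cdot x_n \;=\; \bigl(u(x_n) - p\cdot x_n\bigr) + r\, x_{n,i}.
\]
Using that $u(x_n) - p\cdot x_n \geq V_u(p) - 1$ for large $n$, I would then read off $0 \leq x_{n,i} \leq (M + 1 - V_u(p))/r$. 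Hence $\{x_n\}$ is bounded in $\mathbb{R}^K_+$.

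Finally I would extract a convergent subsequence $x_n \to x^* \in \mathbb{R}^K_+$ using closedness of the orthant, and invoke continuity of $u$ to conclude $u(x^*) - p\cdot x^* = V_u(p)$, so $x^*$ attains the supremum. The only mildly subtle step is the passage from finiteness of $V_u$ on the open set $\interior(\Co(O))$ to local boundedness, which I would cite from \cite{rockafellar2015convex}, Theorem 10.1; the coordinate-wise bounding trick is the key insight that converts a dual-side regularity statement about $V_u$ into a primal-side compactness statement about maximizing sequences.
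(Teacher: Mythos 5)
Your proof is correct, but it takes a genuinely different route from the paper's. The paper argues through convex duality: it extends $u$ to $\mathbb{R}^K$ by $-\infty$ off the orthant, notes that finiteness of the convex function $V_u$ on $\ri(\Co(O))$ makes the subdifferential $\partial V_u(p)$ nonempty (\cite{rockafellar2015convex}, Theorem 23.4), and then invokes the conjugacy relations of Theorem 23.5 to conclude that any subgradient is an exact maximizer of $u(x) - p\cdot x$; this is where concavity and upper semicontinuity of $u$ enter. You instead run a primal compactness argument: the same convexity and local finiteness of $V_u$ give a uniform bound $M$ on a small closed ball, the perturbed prices $p - r e_i$ convert that bound into coordinatewise bounds on any maximizing sequence, and continuity of $u$ closes the argument on a convergent subsequence. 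Your route is more elementary (no conjugate-duality machinery) and, as you observe, strictly more general in that it never uses concavity or monotonicity of $u$ --- upper semicontinuity would even suffice in the last step. What the paper's subdifferential approach buys in exchange is an explicit identification of the maximizer as an element of $\partial V_u(p)$, which the paper reuses elsewhere (notably in the proof of Proposition~\ref{prop:intformula}, where selectors from $\partial V_u$ serve as demand functions and feed the consumer-surplus integral). Both proofs are valid; the only point worth double-checking in yours is that $\overline{B}(p,r)$ is chosen inside $\interior(\Co(O))$ so that continuity of $V_u$ on that compact set delivers the bound $M$, which you do state.
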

\begin{proof}[Proof of Lemma~\ref{lem:existence}]
Since $V_u(p)$ is convex and finite over $O$, then $V_u(p)$ is finite on the $\ri( \Co(O) )$. Thus, the subdifferential
\[
\partial V_{u}(p) = \left\{ x \mid V_{u}(\tilde{p}) \geq V_{u} (p) + x \cdot (\tilde{p} - p) \qquad \forall \tilde{p} \in \mathbb{R}^K \right\} 
\]
is nonempty for any $p \in \ri(\Co(O))$ by \cite{rockafellar2015convex}, Theorem 23.4. Extend $u$ to all of $\mathbb{R}^K$ by setting $u(x) = -\infty$ for any $x \in \mathbb{R}^K \setminus \mathbb{R}_+^K$. Recall that the original $u$ defined on $\mathbb{R}^K_{+}$ is continuous, so that it is upper semicontinuous and $\{ x \mid u(x) \ge a \}$ is closed for any $a \in \mathbb{R}$ by Theorem 7.1 in \cite{rockafellar2015convex}. Note that the extension is also upper semicontinuous because it does not change the topological properties of the upper contour sets for all $a\in \mathbb{R}$. Since $u$ is upper semicontinuous and concave, we conclude from \cite{rockafellar2015convex} Theorem 23.5 parts $(b)$ and $(a^*)$ that for any $p \in \ri(\Co(O))$ there is some $x^* \in \mathbb{R}^K$ such that
\[
u(x) - p \cdot x
\]
is maximized over $x \in \mathbb{R}$ at $x^*$ since $\partial V_{u}(p)$ is nonempty. Since $u$ is $-\infty$ outside of $\mathbb{R}^K_+$, we conclude $x^* \in \mathbb{R}^K_+$. This establishes existence of an exact maximizer for the utility function $u$ over the region $p \in \ri(\Co(O))$, which completes the proof.
\end{proof}

\subsection{Linear Programming Lemmas} \label{supp:lp}

We require some existing results from the theory of linear programming. In canonical form, a linear program is written as
\begin{align*}
\max_{x \in \mathbb{R}^{J_1}} c\cdot x & \\
\text{s.t.} \qquad Ax & \leq b \\
x & \geq 0.
\end{align*}
Here, $c,x \in \mathbb{R}^{J_1}$, and $b \in \mathbb{R}^{J_2}$ are vectors and $A \in \mathbb{R}^{J_{2}\times J_{1}}$ is a matrix.

This is written as a maximum rather than a supremum because provided the supremum is finite, the maximum is attained as we formalize now.
\begin{lemma} \label{lem:lpexist}
If the value function of a linear program is finite, then the maximum is attained.
\end{lemma}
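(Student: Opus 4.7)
The plan is to invoke the fundamental structural theorem for polyhedra. Write $P = \{ x \in \mathbb{R}^{J_1} \mid A x \le b,\ x \ge 0 \}$ for the feasibility region. The hypothesis that the value function is finite means simultaneously that $P \neq \emptyset$ (the value is not $-\infty$) and that $\sup_{x \in P} c \cdot x < \infty$ (the value is not $+\infty$). The goal is to produce a particular $x^* \in P$ at which this supremum is achieved.

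The key step I would carry out is to apply the Minkowski--Weyl (resolution) theorem to decompose $P$ as the Minkowski sum $P = \Co \{v_1, \ldots, v_N\} + \mathrm{cone}\{d_1, \ldots, d_M\}$, where the $v_n$ are the (finitely many) vertices of $P$ and the $d_m$ generate its recession cone. Boundedness of $c \cdot x$ on $P$ forces $c \cdot d_m \le 0$ for every $m$: otherwise, fixing any vertex $v_n \in P$ and setting $x(\lambda) = v_n + \lambda d_m \in P$, we would have $c \cdot x(\lambda) \to \infty$ as $\lambda \to \infty$, contradicting finiteness of the value function.

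Consequently, for any $x \in P$ written as $x = \sum_n \alpha_n v_n + \sum_m \beta_m d_m$ with $\alpha_n \ge 0$, $\sum_n \alpha_n = 1$, and $\beta_m \ge 0$, linearity gives $c \cdot x = \sum_n \alpha_n c \cdot v_n + \sum_m \beta_m c \cdot d_m \le \max_n c \cdot v_n$. Since each $v_n$ itself lies in $P$, the supremum is exactly $\max_n c \cdot v_n$ and is attained at the maximizing vertex $x^* = v_{n^*}$.

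The main obstacle is simply deciding how much to assume: if one is willing to cite the Minkowski--Weyl decomposition (or equivalently the fundamental theorem of linear programming) as a black box, the proof is a few lines as above. Otherwise the decomposition itself must be established, which is a standard but nontrivial polyhedral argument; in the interest of brevity I would invoke it as a known result rather than reprove it here.
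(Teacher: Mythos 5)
Your argument is correct. Note, though, that the paper does not actually prove this lemma; it simply cites \cite{bertsekas2009convex}, Proposition 1.4.12, as a known fact, so there is no internal proof to compare against. What you do is supply the standard argument that such a citation is standing in for: decompose the feasible polyhedron via Minkowski--Weyl as $\Co\{v_1,\ldots,v_N\} + \mathrm{cone}\{d_1,\ldots,d_M\}$, observe that finiteness of the value forces $c \cdot d_m \le 0$ for every recession direction, and conclude that the supremum equals $\max_n c \cdot v_n$ and is attained at a generating point. This is a perfectly good self-contained route (modulo the resolution theorem, which you reasonably take as a black box, just as the paper takes the whole lemma as one). One small point worth making explicit: a general polyhedron need not have vertices (e.g., a half-space), so the claim that the $v_n$ are ``the vertices of $P$'' requires that $P$ be pointed; here that is guaranteed by the constraint $x \ge 0$, which prevents $P$ from containing a line. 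Alternatively, the general form of the resolution theorem gives finitely many generating points that need not be extreme points, and your inequality $c \cdot x \le \max_n c \cdot v_n$ goes through verbatim with those, so the argument does not actually depend on pointedness. Either fix closes the gap, and with it your proof is complete.
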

\begin{proof}
See e.g. \cite{bertsekas2009convex}, Proposition 1.4.12.
\end{proof}

Fixing all other variables, let $B \subseteq \mathbb{R}^{J_2}$ be the set of $b$ where the linear program is bounded. Write the value function as a function of $b$ so that
\begin{align*}
G(b) = \sup_{x} c \cdot x & \\
\text{s.t.} \qquad Ax & \leq b \\
x & \geq 0.
\end{align*}

\begin{lemma} \label{lem:lpcont}
Let $b_m \rightarrow b^*$ where $b^* \in B$ and for each $b_m$, the set
\[
\{ x \mid Ax \leq b_m , x \geq 0 \}
\]
is nonempty. It follows that $G(b_m) \rightarrow G(b^*)$.
\end{lemma}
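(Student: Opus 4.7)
The plan is to invoke linear programming duality and exploit the fact that the dual feasible region does not depend on $b$. Write the dual of the canonical primal as
\[
G(b) = \inf_{y \in Y} b \cdot y, \qquad Y := \{ y \in \mathbb{R}^{J_2} : y \geq 0,\; A^\top y \geq c \}.
\]
Since $G(b^*)$ is finite and the primal is feasible at $b^*$, strong duality gives that $Y$ is nonempty and $G(b^*)=\min_{y\in Y} b^*\cdot y$. Because $Y$ is independent of $b$, this same set $Y$ witnesses dual feasibility at every $b_m$. Combined with the assumed primal feasibility at $b_m$, weak duality forces $G(b_m)$ to be finite for every $m$, and the dual again attains its minimum, so $G(b_m)=\min_{y\in Y} b_m\cdot y$.

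Next, I would use the structure of $Y$. Since $Y\subseteq \mathbb{R}_+^{J_2}$ and the nonnegative orthant is pointed, $Y$ is a pointed polyhedron. By the Minkowski–Weyl theorem, there exist finitely many extreme points $y^1,\dots,y^K$ and finitely many recession directions $r^1,\dots,r^L$ such that every $y\in Y$ can be written as $\sum_k \alpha_k y^k + \sum_\ell \beta_\ell r^\ell$ with $\alpha_k\ge 0$, $\sum_k \alpha_k = 1$, $\beta_\ell\ge 0$. For any $b$ with $\inf_{y\in Y} b\cdot y > -\infty$, one must have $b\cdot r^\ell\ge 0$ for all $\ell$, and then
\[
G(b) \;=\; \min_{1\le k\le K} b\cdot y^k.
\]
This identity holds at $b^*$ and at every $b_m$ by the previous paragraph.

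Finally, each map $b\mapsto b\cdot y^k$ is linear, hence continuous, and the pointwise minimum of finitely many continuous functions is continuous. Therefore
\[
G(b_m) \;=\; \min_{1\le k\le K} b_m\cdot y^k \;\longrightarrow\; \min_{1\le k\le K} b^*\cdot y^k \;=\; G(b^*),
\]
as $b_m\to b^*$, completing the proof.

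The only real obstacle is making sure that $G(b_m)$ is actually finite (so that the dual minimum formula applies to the approaching sequence, not just to the limit). That step is precisely where the hypothesis of primal feasibility at each $b_m$ combines with dual feasibility of $Y$ inherited from $b^*$, via weak duality, to rule out $G(b_m)=+\infty$. Everything else is standard LP polyhedral geometry.
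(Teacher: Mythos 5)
Your proof is correct, and it takes a genuinely different route from the paper. The paper handles the possible unboundedness of $G$ by truncation: it defines $\overline{G}(b) = \min\{G(b), G(b^*) + \eta\}$, observes this is the value function of the LP with the extra constraint $c \cdot x \leq G(b^*) + \eta$ appended (hence finite wherever feasible), and then cites a continuity theorem for bounded LP value functions (\cite{bohm1975continuity}, Theorem 1) to get $\overline{G}(b_m) \rightarrow \overline{G}(b^*) = G(b^*)$, from which the claim follows since $\eta$ is arbitrary. You instead pass to the dual, exploit that the dual feasible set $Y$ does not depend on $b$, use weak duality to rule out $G(b_m) = +\infty$ along the sequence, and then apply the Minkowski--Weyl resolution of the pointed polyhedron $Y$ to write $G(b) = \min_{k} b \cdot y^k$ at $b^*$ and at every $b_m$, so that continuity is just continuity of a finite minimum of linear functions. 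Your identification of the finiteness of $G(b_m)$ as the crux is exactly right --- that is the same obstruction the paper's truncation trick is designed to sidestep. What your argument buys is self-containedness (only standard duality and polyhedral decomposition, no external citation) plus strictly more information: $G$ restricted to the relevant domain is a concave piecewise-linear function of $b$, hence Lipschitz with constant $\max_k \lVert y^k \rVert$. What the paper's argument buys is brevity. One small point worth making explicit in your write-up: primal feasibility at $b^*$ is not literally among the stated hypotheses, but it follows since $\{ b \mid \exists x \geq 0,\ Ax \leq b \}$ is a finitely generated (hence closed) convex cone containing every $b_m$.
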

\begin{proof}
Let $\eta > 0$ and define
\[
\overline{G}(b) = \min \{ G(b), G(b^*) + \eta \}.
\]
The value $\overline{G}(b)$ is the value function of a linear program defined by $G(b)$ that appends the inequality constraint $c\cdot x \leq G(b^*) + \eta$. Because $G(b^*) + \eta$ is finite, $\overline{G}(b)$ is finite for any feasible $b$. \cite{bohm1975continuity}, Theorem 1 states that $\overline{G}(b)$ is continuous over the set of $b$ such that the feasibility region is nonempty. Since $b_m \rightarrow b^*$ and each $b_m$ and $b^* \in B$ are feasible, we then obtain $\overline{G}(b_m) \rightarrow \overline{G}(b^*) = G(b^*)$.
\end{proof}

\subsection{Duality} \label{supp:duality}

The focus of the paper is on counterfactuals with approximate utility maximization. In other words, we consider utility functions $u$ such that the inequality
\[
u(x^t) - p^t \cdot x^t \geq u(x) - p^t \cdot x - \eps
\]
holds for every $t \in \{1, \ldots, T \}$ and $x \in \mathbb{R}^K_{+}$. In this supplement, we consider a dual approach involving functions $V$ such that
\[
V(p) \geq V(p^t) - x^t \cdot (p - p^t) - \eps
\]
holds for every $t \in \{1, \ldots, T \}$ and $p \in \mathbb{R}^K_{++}$. We also mention some results from convex analysis. Results from this section are used to prove several results in the main text.

Our first result formalizes that finding a utility function $u$ that satisfies the first set of inequalities is equivalent to finding a $V$ function that satisfies the second set of inequalities.

To state the result, recall the indirect utility function of $u$ is given by $V_u : \mathbb{R}^K_{+} \rightarrow \mathbb{R}^K \cup \{-\infty, \infty\}$
\[
V_u(p) = \sup_{x \in \mathbb{R}^K_+} u(x) - p \cdot x.
\]
We make use of a ``dual'' utility function $u_V : \mathbb{R}^K_{+} \rightarrow \mathbb{R}^K \cup \{-\infty, \infty\}$ constructed via
\[
u_V(x) = \inf_{p \in \mathbb{R}^K_{+}} V(x) + p \cdot x.
\]
These operations can be defined for any extended real-valued functions $u : \mathbb{R}^K \rightarrow \mathbb{R}^K \cup \{ -\infty , \infty \}$ and $V : \mathbb{R}^K \rightarrow \mathbb{R}^K \cup \{ -\infty , \infty \}$.

\begin{prop} \label{slem:duality}
Let $\eps \geq 0$ and let $\{ (x^t,p^t) \}_{t = 1}^T$ be an arbitrary dataset.
\begin{enumerate}[i.]
    \item Suppose $u : \mathbb{R}^K_{+} \rightarrow \mathbb{R}$ satisfies
    \[
    u(x^t) - p^t \cdot x^t \geq u(x) - p^t \cdot x - \eps
    \]
    for every $t \in \{1, \ldots, T \}$ and every $x \in \mathbb{R}^K_{+}$. It follows that $V_u$ satisfies 
    \[
    V_u(p) \geq V_u(p^t) - x^t \cdot (p - p^t) - \eps
    \]
    for every $t \in \{1, \ldots, T \}$ and every $ p \in \mathbb{R}^K_{+}$.
    \item Suppose $V : \mathbb{R}^K_{+} \rightarrow \mathbb{R}$ satisfies
    \[
    V(p) \geq V(p^t) - x^t \cdot (p - p^t) - \eps
    \]
    for every $t \in \{1, \ldots, T \}$ and every $p \in \mathbb{R}^K_{+}$. It follows that $u_V$ satisfies 
    \[
    u_V(x^t) - p^t \cdot x^t \geq u_V(x) - p^t \cdot x - \eps
    \]
    for every $t \in \{1, \ldots, T \}$ and every $x \in \mathbb{R}^K_{+}$.
\end{enumerate}

\end{prop}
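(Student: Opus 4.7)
The plan is to prove both parts directly from the definitions of $V_u$ and $u_V$, viewing the statement as a form of Fenchel-type duality: the hypothesis in (i) says that $x^t$ is an $\eps$-approximate maximizer of $u(\cdot) - p^t \cdot (\cdot)$, while the hypothesis in (ii) says that $-x^t$ is an $\eps$-approximate subgradient of $V$ at $p^t$. These two notions are conjugate to one another, and the two parts of the proposition just record that either hypothesis transfers to its counterpart under the conjugacy operation. Each direction reduces to an elementary manipulation of a supremum (resp.\ infimum) together with a one-line algebraic rearrangement; nothing more than the definitions of $V_u$ and $u_V$ is needed.

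For part (i), I would first note that taking the supremum over $x \in \mathbb{R}^K_+$ on the right-hand side of the hypothesis yields
\[
u(x^t) - p^t \cdot x^t \;\geq\; V_u(p^t) - \eps,
\]
which in particular forces $V_u(p^t)$ to be finite. For an arbitrary $p$, the definition of $V_u$ as a supremum gives $V_u(p) \geq u(x^t) - p \cdot x^t$. Adding and subtracting $p^t \cdot x^t$ and combining with the previous bound produces
\[
V_u(p) \;\geq\; \bigl(u(x^t) - p^t \cdot x^t\bigr) - x^t \cdot (p - p^t) \;\geq\; V_u(p^t) - x^t \cdot (p - p^t) - \eps,
\]
which is the desired inequality.

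For part (ii), the argument is symmetric. Rewriting the hypothesis as $V(p) + (p - p^t) \cdot x^t \geq V(p^t) - \eps$ for every $p \in \mathbb{R}^K_+$ and taking the infimum over $p$ gives
\[
u_V(x^t) - p^t \cdot x^t \;=\; \inf_{p \in \mathbb{R}^K_+} \bigl\{ V(p) + (p - p^t) \cdot x^t \bigr\} \;\geq\; V(p^t) - \eps.
\]
On the other hand, the definition of $u_V$ as an infimum gives, for any $x \in \mathbb{R}^K_+$, the bound $u_V(x) \leq V(p^t) + p^t \cdot x$, i.e.\ $u_V(x) - p^t \cdot x \leq V(p^t)$. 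Chaining these two bounds yields the required inequality.

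The main thing to watch, rather than an obstacle, is handling potentially non-finite values, since $V_u$ and $u_V$ are extended-real-valued. In part (i) the hypothesis itself forces $V_u(p^t)$ to be finite at data points, so the inequality is meaningful; at counterfactual prices $p$ where $V_u(p) = \infty$ the conclusion is vacuous. In part (ii), if $u_V(x) = -\infty$ for some $x$ the conclusion holds trivially, and if $V(p^t) = +\infty$ the hypothesis degenerates but the chain of inequalities still carries through formally in the extended reals. Since nothing nontrivial hinges on convexity, continuity, or shape restrictions on $u$ or $V$ beyond the stated pointwise inequalities, no further machinery is required.
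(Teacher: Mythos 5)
Your proof is correct and follows essentially the same route as the paper's: both parts use only the definitions of $V_u$ and $u_V$ as a supremum and infimum together with the approximate-optimality hypothesis at the data points, the paper merely packaging the slack as an explicit $\delta^t \in [0,\eps]$ where you use $\eps$ directly. Your handling of the extended-real-valued cases matches the paper's as well, so nothing is missing.
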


\begin{proof}
First we show (i). For arbitrary $t \in \{ 1, \ldots, T \}$, write
\begin{equation} \label{eq:indfinite}
V_u(p^t) = u(x^t) - p^t \cdot x^t + \delta^t
\end{equation}
where $\delta^t=V_u(p^t)-u(x^t)+p^t\cdot x_t \ge 0$ and $\delta^t \le \eps$ since the observed quantities are only approximately optimal. 
For arbitrary $p \in \mathbb{R}^K_{+}$ we have
\[
V_u(p) \geq u(x^t) - p \cdot x^t.
\]
Differencing yields
\[
V_u(p) - V_u(p^t) \geq - x^t \cdot (p - p^t) - \delta^t.
\]
The term $V_u(p)$ may equal $\infty$, in which case we define $\infty - a = \infty$ for any finite $a$. Here, $V_u(p^t)$ is finite from (\ref{eq:indfinite}) and the fact that $u$ is finite. We know for all $t \in \{1,\ldots, T\}$ that $0 \leq \delta^t \leq \eps$ by assumption, so (i) is established.

Now we show (ii). As before, write
\[
u_V(x^t) = V(p^t) + p^t \cdot x^t - \delta^t
\]
where $\delta^t=u_V(x^t)-V(p^t)-p^t\cdot x^t \ge 0$ and $\delta \le \eps$ since the observed quantities and prices are only supposed to satisfy the inequality in (ii) for $V$.
For arbitrary $x \in \mathbb{R}^K_+$ we have
\[
u_V(x) \leq V (p^t) + p^t \cdot x,
\]
and so
\[
u_V(x^t) - p^t \cdot x^t \geq u_V(x) - p^t \cdot x - \delta^t.
\]
As before, $u_V(x)$ can equal $-\infty$, but $u_V(x^t)$ is always finite. Recall, for all $t \in \{1,\ldots,T\}$ that $0 \leq \delta^t \leq \eps$ by assumption, and so (ii) is established.
\end{proof}

The mappings $u \rightarrow V_u$ and $V \rightarrow u_V$ are closely related to convex conjugates, and we can adapt existing results from convex analysis. Recall that for a function $f : \mathbb{R}^K \rightarrow \mathbb{R} \cup \{ - \infty, \infty \}$, the \textit{convex conjugate} is given by
\[
f^*(p) = \sup_{x \in \mathbb{R}^K} p \cdot x - f(x).
\]
The \textit{monotone conjugate} is given by
\[
f^+(p) = \sup_{x \in \mathbb{R}^K_{+}} p \cdot x - f(x).
\]
Let the function $\tilde{f}$ equal $f$ over $x \in \mathbb{R}^K_{+}$, and $\infty$ otherwise. It follows that $\tilde{f}^*(p) = f^{+}(p)$.

We now formalize the relationships between $u_V$ and $V_u$ and monotone conjugates. Following this, we present some immediate consequences. 
\begin{lemma} \label{slem:monconj}
\[
u_V(x) = - V^{+}(-x)
\]
and
\[
V_u(p) = (-u)^+(-p).
\]

\end{lemma}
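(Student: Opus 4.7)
The plan is to prove both identities by direct algebraic manipulation of the definitions, using only the elementary identity $\sup_x (-g(x)) = -\inf_x g(x)$ together with linearity of the dot product. No nontrivial result is needed here; the lemma is essentially a bookkeeping statement that translates between the ``economic'' definitions of $u_V$ and $V_u$ used in the paper and the standard convex-analytic monotone conjugate.

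First I would tackle the second identity $V_u(p) = (-u)^+(-p)$, since it is the most direct. Starting from the definition of the monotone conjugate,
\[
(-u)^+(-p) = \sup_{x \in \mathbb{R}^K_{+}} \left[(-p)\cdot x - (-u)(x)\right] = \sup_{x \in \mathbb{R}^K_{+}} \left[u(x) - p\cdot x\right],
\]
which is exactly $V_u(p)$ by definition. No sign or domain issues arise because the supremum is already over $\mathbb{R}^K_{+}$ on both sides.

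Next I would handle $u_V(x) = -V^{+}(-x)$, applying the same unpacking (and correcting the obvious typo in the paper's definition so that the integrand is $V(p) + p\cdot x$ rather than $V(x) + p\cdot x$). By definition,
\[
V^{+}(-x) = \sup_{p \in \mathbb{R}^K_{+}} \left[(-x)\cdot p - V(p)\right] = -\inf_{p \in \mathbb{R}^K_{+}} \left[x\cdot p + V(p)\right],
\]
where the second equality uses $\sup(-g) = -\inf(g)$. Negating both sides yields
\[
-V^{+}(-x) = \inf_{p \in \mathbb{R}^K_{+}} \left[V(p) + p\cdot x\right] = u_V(x),
\]
which is the claim.

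Since both steps are completely mechanical, there is no real obstacle; the only subtlety worth flagging is the sign flip in the argument of $V^+$ (the price is conjugated against $-x$ rather than $x$), which is what allows a \emph{monotone} conjugate, defined with a supremum, to reproduce $u_V$, which is defined with an \emph{infimum}. I would note this explicitly so the reader sees why the monotone-conjugate machinery from convex analysis (which is phrased for suprema over $\mathbb{R}^K_+$) applies to the infimum appearing in the definition of $u_V$. With this remark the proof is complete and occupies only a few lines.
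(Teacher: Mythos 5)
Your proof is correct and is essentially identical to the paper's own argument: both unpack the definition of the monotone conjugate and use $\sup(-g)=-\inf(g)$ to convert between the infimum defining $u_V$ and the supremum defining $V^{+}$. Your remark about the typo in the displayed definition of $u_V$ (the integrand should read $V(p)+p\cdot x$) is also correct.
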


\begin{proof}

\[
u_V(x) = \inf_{p \in \mathbb{R}^K_{+}} V(p) + p \cdot x
= -\sup_{p \in \mathbb{R}^K_{+}} - V(p) - p \cdot x
= - V^{+}(-x).
\]
and
\[
V_{u}(p) = \sup_{x \in \mathbb{R}^K_{+}} u(x) - p \cdot x
     = \sup_{x \in \mathbb{R}^K_{+}} - (- u(x)) + (-p) \cdot x
     = (-u)^{+}(-p).
\]
\end{proof}

\begin{lemma} \label{slem:shapeconjugate}
The function $u_V$ is concave, weakly increasing, and upper semicontinuous. The function $V_u$ is convex, weakly decreasing, and lower semicontinuous.
\end{lemma}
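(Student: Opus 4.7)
The plan is to reduce the statement to standard facts about monotone conjugates via Lemma~\ref{slem:monconj}, which expresses both $u_V$ and $V_u$ as $f^+$ composed with the sign-flipping map $x \mapsto -x$ (up to an overall sign). So the real work is a shape-restriction analysis of the generic monotone conjugate $f^+(p)=\sup_{x\in\mathbb{R}^K_+} p\cdot x - f(x)$, and then a bookkeeping step to see how the signs propagate.

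First I would observe that $f^+$ is the pointwise supremum, over $x\in\mathbb{R}^K_+$, of the affine-in-$p$ functions $p\mapsto p\cdot x - f(x)$. Each of these is linear in $p$ (hence convex and continuous), and moreover its gradient $x$ lies in $\mathbb{R}^K_+$, so each is weakly increasing in $p$. Taking the pointwise supremum of a family of functions preserves convexity, lower semicontinuity, and weak monotonicity, so I conclude that $f^+$ is convex, weakly increasing, and lower semicontinuous on $\mathbb{R}^K$, regardless of any properties of $f$.

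With this generic fact in hand, I would apply it to $V^+$ and to $(-u)^+$, then invoke Lemma~\ref{slem:monconj}. For $u_V(x) = -V^+(-x)$: the inner map $x\mapsto -x$ is affine, so composing preserves convexity and (lower) semicontinuity of $V^+$, while flipping the monotonicity direction; the outer minus sign then converts convex/lower semicontinuous/weakly decreasing into concave/upper semicontinuous/weakly increasing, which are exactly the three claimed properties of $u_V$. The analogous computation for $V_u(p)=(-u)^+(-p)$ is even simpler: the inner affine composition flips the weak monotonicity of $(-u)^+$ from increasing to decreasing, while preserving convexity and lower semicontinuity, and no outer sign flip is needed.

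There is no real obstacle here; the only thing to watch is the sign-tracking in the composition, and in particular that the affine map $x\mapsto -x$ reverses monotonicity while preserving convexity and semicontinuity type. Everything else is a direct consequence of the fact that a pointwise supremum of linear functions with nonnegative coefficient vectors is convex, lower semicontinuous, and weakly increasing.
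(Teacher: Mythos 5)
Your proof is correct and follows essentially the same route as the paper's: both rest on the fact that a monotone conjugate is a pointwise supremum of affine functions with nonnegative gradients (giving convexity, lower semicontinuity, and weak monotonicity) and then transfer to $u_V$ and $V_u$ via Lemma~\ref{slem:monconj} with sign bookkeeping. The only difference is organizational: you prove the generic fact about $f^+$ once and specialize, whereas the paper verifies monotonicity of $V_u$ directly, cites a reference for convexity and lower semicontinuity, and declares the $u_V$ case analogous.
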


\begin{proof}
To see that $V_u$ is weakly decreasing, consider $p^a, p^b$ with $p^a \geq p^b$. For $x \in \mathbb{R}^K_{+}$ we have $p^a \cdot x \geq p^b \cdot x $ and so
\[
V_u(p^a) = \sup_{x \in \mathbb{R}^K_+} u(x) - p^a \cdot x 
\leq \sup_{x \in \mathbb{R}^K_+} u(x) - p^b \cdot x 
 \leq V_u(p^b).
\]
Also, $V_u$ is convex and lower semicontinuous from \cite{bertsekas2009convex}, p. 83. The arguments for $u_V$ are analogous by applying Lemma~\ref{slem:monconj}.
\end{proof}

\begin{lemma} \label{slem:doubleconj}
\begin{enumerate}[i.]
\item Suppose $u : \mathbb{R}^K_+ \rightarrow \mathbb{R}^K \cup \{ - \infty, \infty \}$ is concave, weakly increasing, upper semicontinuous, and finite at $0$. Then $u_{V_u} = u$.
\item Suppose $V : \mathbb{R}^K_+ \rightarrow \mathbb{R}^K \cup \{ - \infty, \infty \}$ is concave, weakly decreasing, lower semicontinuous, and finite at $0$. Then $V_{u_V} = V$.
\end{enumerate}
\end{lemma}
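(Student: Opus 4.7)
The plan is to reduce both identities to the Fenchel--Moreau biconjugate theorem by way of Lemma~\ref{slem:monconj}, which expresses the operations $u \mapsto V_u$ and $V \mapsto u_V$ as (sign-flipped) monotone conjugates. In each part one direction is immediate from the pointwise duality inequality underlying the conjugate, and the other direction will follow from the classical identity $f^{**}=f$ for a proper closed convex $f$ on $\mathbb{R}^K$, transferred back to $\mathbb{R}_+^K$ using the monotonicity hypotheses.

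For (i), the easy direction is $V_u(p) \ge u(x) - p \cdot x$, whence $V_u(p) + p \cdot x \ge u(x)$, and taking the infimum over $p \in \mathbb{R}_+^K$ yields $u_{V_u} \ge u$. For the reverse direction, extend $-u$ by $+\infty$ off $\mathbb{R}_+^K$ to define $g:\mathbb{R}^K \to \mathbb{R}\cup\{\infty\}$. The hypotheses make $g$ convex (from concavity of $u$), lower semicontinuous (from upper semicontinuity of $u$; at the boundary of $\mathbb{R}_+^K$ the bound $g \le -u(0)$ coming from weak monotonicity rules out a downward jump), and proper (finite at $0$). A direct computation gives $g^*(q) = V_u(-q)$ for $q \in -\mathbb{R}_+^K$ and $g^*(q) = +\infty$ otherwise; the second case uses crucially that $u(x) \ge u(0)$, so $q\cdot x + u(x)$ is unbounded above along any coordinate axis corresponding to a positive component of $q$. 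Fenchel--Moreau then yields
\[
-u_{V_u}(x) = -\inf_{p \in \mathbb{R}_+^K}\{V_u(p) + p \cdot x\} = \sup_{q \in \mathbb{R}^K}\{q \cdot x - g^*(q)\} = g^{**}(x) = g(x) = -u(x)
\]
for $x \in \mathbb{R}_+^K$, as desired; here the middle equality is the change of variables $p = -q$ combined with the fact that $g^* = +\infty$ outside $-\mathbb{R}_+^K$.

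Part (ii) is handled symmetrically: extend $V$ by $+\infty$ off $\mathbb{R}_+^K$ to obtain $\tilde V$, which the shape hypotheses render proper, closed, and convex; weak monotonicity of $V$ together with finiteness of $V(0)$ forces $\tilde V^*(y) = +\infty$ whenever $y$ has a positive coordinate, so Lemma~\ref{slem:monconj} gives $u_V(x) = -\tilde V^*(-x)$ on $\mathbb{R}_+^K$ and in turn $V_{u_V}(p) = \tilde V^{**}(p) = V(p)$ by Fenchel--Moreau. The main obstacle is precisely the boundary bookkeeping: verifying that each extended function is proper, closed, and convex on all of $\mathbb{R}^K$, and that monotonicity forces the Fenchel conjugates to be $+\infty$ off the appropriate orthant so that the infimum/supremum over $\mathbb{R}_+^K$ in the definitions of $u_V$ and $V_u$ genuinely coincides with the full Fenchel (bi)conjugate. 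Once this translation is in place, the biconjugate identity for $u$ and $V$ is exactly Fenchel--Moreau applied to $g$ and $\tilde V$.
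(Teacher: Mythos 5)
Your proof is correct and follows the same route as the paper: both reduce the claim to a biconjugation identity via Lemma~\ref{slem:monconj}. The only substantive difference is that the paper simply cites Rockafellar's Theorem 12.4 (the identity $f^{++}=f$ for monotone conjugates of suitably monotone closed proper convex functions on the orthant), whereas you re-derive that fact from the ordinary Fenchel--Moreau theorem by extending $-u$ and $V$ by $+\infty$ off $\mathbb{R}^K_+$ and checking that monotonicity together with finiteness at $0$ forces the conjugates to be $+\infty$ outside $-\mathbb{R}^K_+$; this is precisely the content of the cited theorem, so the extra bookkeeping buys self-containedness but nothing else. Two small remarks: the word ``concave'' in part (ii) of the statement should be ``convex,'' as you implicitly (and correctly) assume, and your parenthetical invoking weak monotonicity to rule out a downward jump of $g$ at the boundary is unnecessary --- upper semicontinuity of $u$ on the closed orthant already yields lower semicontinuity of the $+\infty$-extension (both you and the paper also tacitly assume properness, i.e.\ that $u$ never takes the value $+\infty$ and $V$ never takes the value $-\infty$).
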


\begin{proof}
Write $f^{++}$ as the monotone conjugate of $f^+$. \cite{rockafellar2015convex}, Theorem 12.4 states $V^{++} = V$ and $(-u)^{++} = -u$. From Lemma~\ref{slem:monconj} we conclude
\[
-u_{V_u}(x) = (V_u)^+(-x) = (-u)^{++}(x) = -u(x) 
\]
and
\[
V_{u_V}(p) = (-u_V)^{+} (-p) = V^{++}(p) = V(p).
\]
\end{proof}

\end{appendices}

\bibliographystyle{plainnat}
\bibliography{ref}

\end{document}